\newcommand{\reals}{\mathbb{R}}
\newcommand{\comps}{\mathbb{C}}
\newcommand{\nats}{\mathbb{N}}
\newcommand{\Hi}{\mathcal{H}}
\newcommand{\id}{\mathbbm{1}}
\newcommand{\C}{\mathbb{C}}
\newcommand{\N}{\mathbb{N}}
\newcommand{\Hil}{\mathcal{H}}
\newcommand{\unity}{\mathbbm{1}}
\newtheorem{thm}{Theorem}[section]
\newtheorem{remark}[thm]{Remark}
\newtheorem{lemma}[thm]{Lemma}
\newtheorem{prop}[thm]{Proposition}
\newtheorem{cor}[thm]{Corollary}
\newtheorem{conj}[thm]{Conjecture}
\theoremstyle{definition}
\newtheorem{example}[thm]{Example}
\DeclareMathOperator{\spec}{spec}
\begin{document}

\renewcommand{\thefootnote}{\fnsymbol{footnote}}
\title[$U_q(\mathfrak{sl}_2)$ invariant spin chains]{Ferromagnetic Ordering of Energy Levels\\
for  $U_q(\mathfrak{sl}_2)$ Symmetric Spin Chains}

\author[B. Nachtergaele]{Bruno Nachtergaele}
\address{Department of Mathematics\\
University of California, Davis\\
Davis, CA 95616, USA}
\email{bxn@math.ucdavis.edu}

\author[S. Ng]{Stephen Ng}
\address{Department of Mathematics\\
University of California, Davis\\
Davis, CA 95616, USA}
\email{stephenng@math.ucdavis.edu}

\author[S. Starr]{Shannon Starr}
\address{Department of Mathematics\\
University of Rochester\\
Rochester, NY 14627, USA}
\email{starr@math.rochester.edu}

\date{Version: \today }
\maketitle
\bigskip
\begin{abstract}
We consider the class of quantum spin chains with arbitrary 
$U_q(\mathfrak{sl}_2)$-invariant nearest neighbor interactions,
sometimes called $\textrm{SU}_q(2)$ for the quantum deformation
of $\textrm{SU}(2)$, for $q>0$.
We derive sufficient conditions for the Hamiltonian to satisfy the property
we call {\em Ferromagnetic Ordering of Energy Levels}.
This is the property that the ground state energy restricted to a fixed total spin
subspace is a decreasing function of the total spin.
Using the Perron-Frobenius theorem, we show sufficient conditions
are positivity of all interactions in the dual canonical basis of Lusztig.
We characterize the cone of positive interactions, showing that
it is a simplicial cone consisting of all non-positive linear combinations
of ``cascade operators,'' a special new basis of $U_q(\mathfrak{sl}_2)$
intertwiners we define.
\end{abstract}

\maketitle

\footnotetext[1]{Copyright \copyright\ 2011 by the authors. This
paper may be reproduced, in its entirety, for non-commercial
purposes.}

\section{Introduction}\label{sec:intro}

\newcommand{\U}{U_q(\mathfrak{sl}_2)}
\newcommand{\B}[1]{B_{#1}^{TL}}
\newcommand{\qbinom}[2]{\begin{bmatrix} #1 \\ #2 \end{bmatrix}}
\newcommand{\up}{\: \raisebox{-.25cm}{\begin{tikzpicture} \draw[thick,->] (0,0) -- (0,.5);\end{tikzpicture}} \:}
\newcommand{\down}{\: \raisebox{-.25cm}{\begin{tikzpicture} \draw[thick,<-] (0,0) -- (0,.5);\end{tikzpicture}} \:}
\newcommand{\sector}[2]{\mathcal{N}(#1,#2)}

Symmetry considerations often play a central role in the study of quantum
spin chains. This is true for models that allow for exact solutions, using the Bethe 
Ansatz \cite{korepin:2006} or by other methods \cite{affleck:1988,kirillov:1989}, as well
for the analysis of 
general properties such as the uniqueness or degeneracy of the ground state and
the nature of the excited states above it \cite{lieb:1962,affleck:1986}. In fact, it is precisely
the mathematical study of quantum spin chains and their symmetries that led to the 
notion of quantum group and other important developments in representation theory. 

Apart from symmetry, some quantum spin Hamiltonians also possess an interesting 
positivity structure which allows them, and the dynamics they generate, to be related
to stochastic dynamics of classical probabilistic models. As a prominent example,
we mention the XXZ chain.The generator of the simple exclusion process for particles 
on a graph is related to the spin-1/2 XXX model on that graph by a similarity transformation. 
As a consequence, questions about mixing rates of the simple exclusions process can be 
related to properties of the spectrum of the XXX Hamiltonian. E.g., it has been known for 
some time that the Aldous Conjecture \cite{AldousConjecture}, i.e., the property that the 
relaxation time is independent  of the number of particles, is implied by Ferromagnetic Ordering 
of Energy Levels (FOEL) in the XXX model \cite{nachtergaele:2006}.
The FOEL property states that for certain graphs, if one considers $E_0(s)$ to be the minimum
energy eigenvalue of the XXX Hamiltonian among all eigenvectors with total spin equal to $s$,
then $E_0(s)$ is an decreasing function of $s$. 

The Aldous Conjecture for 
general graphs was recently proved in  \cite{CaputoLiggettRichthammer}.
In one dimension, one may relate the asymmetric exclusion process to the XXZ model.
A generalization of Aldous's conjecture in this context follows from \cite{KomaNachtergaele},
and FOEL was proved in \cite{nachtergaele:2004} using the quantum group $\U$  to define spin.
In this paper
we investigate the FOEL property for a more general class of models.

A surprising example where symmetry and positivity conspire to relate quantum
spin chains with {\em a priori} unrelated mathematical problems is given by the 
Razumov-Stroganov Conjecture, which was also proved recently \cite{CantiniSportiello}.
In this case, the expansion coefficients of the ground state of the XXZ chain with a specific 
value of the anisotropy parameter, and properly normalized, count the number of
Alternating Sign Matrices (ASM) restricted by a boundary condition. One can see this as 
a refinement of the previously established connection between the six-vertex model and 
the combinatorics of ASM \cite{kuperberg:1996,kuperberg:2002}.

All these connections obviously require that the quantum
spin Hamiltonian, in the case of applications in probability, or the eigenvector,
in the case of a combinatorial interpretation, need to be given by non-negative numbers.
In this work we study the class of $\U$-invariant 
quantum spin Hamiltonians with nearest neighbor interactions that have non-negative
matrix elements in a suitable basis and we show that this is a sufficient condition
for Ferromagnetic Ordering of Energy Levels. FOEL is the property that
the smallest eigenvalues of the restrictions of the Hamiltonian to the invariant
subspaces given by the eigenspaces of the $\U$ Casimir operator, are 
a non-increasing function of the Casimir eigenvalue. This is the hallmark of
a ferromagnetic quantum spin chain. FOEL immediately implies that the ground state
space coincides with the irreducible representation of $\U$ of the maximal dimension
occurring in the spin chain and that the first excited state belongs to an irreducible representation
of the the next largest dimension, which are known as spin waves in the physics
literature. 

The FOEL property has been proved in previous works for
the spin-1/2 XXZ Heisenberg chain \cite{nachtergaele:2004}, and the XXX Heisenberg
ferromagnetic chains of arbitrary spins \cite{nachtergaele:2005}. We note that for the XXX
antiferromagnetic chains (and for Heisenberg models on general bipartite
lattices) monotone behavior of the minimum energy states in
subspaces of fixed total spin was proved by Lieb and Mattis \cite{lieb:1962}. 
Since the Heisenberg ferro- and antiferromagnetic Hamiltonians are not
unitarily equivalent, FOEL and the ordering proved by Lieb and Mattis are independent
properties.
The present work is a
generalization of FOEL to general $\U$-invariant nearest neighbor interactions.
Generalizations in other directions (Hubbard type and $SU(n)$-invariant models, and
ladder systems) have been considered by Hakobyan \cite{hakobyan:2004,hakobyan:2008,
hakobyanLadder:2010,hakobyan:2010}. 

In this work we use the graphical calculus of \cite{kauffman:1994}
and \cite{frenkel:1997} to derive new formulas to show that the Hamiltonians of a class 
of quantum spin chains, when expressed in the dual canonical basis, have all non-positive 
off-diagonal matrix elements. In \cite{nachtergaele:2004,nachtergaele:2005,nachtergaele:2006}, 
the proof of FOEL  proceeded by induction in the length of the chain. 
Here we provide an alternate proof which does not require induction. We retain the use of Perron-
Frobenius style arguments of the previous results but combine them with the $\U$ symmetry 
in a new way to prove FOEL for the class of $\U$-invariant Hamiltonians we consider.

In order to make this paper accessible to readers with a variety of backgrounds, we
have chosen to present the material in a self-contained manner. The main result is 
stated and illustrated in Section \ref{sec:statement}. Section \ref{sec:prob} is a brief 
discussion of the implications of FOEL for the probabilistic models associated with the
quantum spin chains we study.
In Section \ref{sec:graphical_calculus}, 
we give a concise review of the graphical calculus for $TL_n$, the Temperley-Lieb algebra,
which is the algebra of intertwiners $\textbf{End}_{\U}(V(1)^{\otimes n})$. We also introduce
various bases for the space of intertwiners, including what we call the ``cascade basis.'' Next, in Section \ref{sec:dual_canonical_basis}, 
we introduce the Dual Canonical Basis and 
calculate the matrix entries of the various intertwiners in this basis.
The Perron-Frobenius type argument is explained in Section \ref{sec:sufficient_conditions}.
In that section we also give the proof of the main result and show that the 
cone spanned by non-positive combinations of the ``cascade basis'' is the maximal
set of interactions which have positive matrix entries in the dual canonical basis.

\section{Statement of Results for $\text {SU}(2)$ Models}
\label{sec:statement}

In this section, we will state our main results focusing first on the case of the classical Lie group
$\text {SU}(2)$.

Consider a quantum spin chain of length $L$,
and suppose that for each $x \in \{1,\dots,L\}$,
the spin at site $x$ has total spin $s_x$.
Therefore, there is a single-site Hilbert space $\Hil_x \cong \C^{2s_x+1}$
with spin operators $S_x^{(1)},S_x^{(2)},S_x^{(3)}$ satisfying the $\text {SU}(2)$ commutation relations and
$$
\boldsymbol{S}_x^2\, :=\, [S_x^{(1)}]^2 + [S_x^{(2)}]^2 + [S_x^{(3)}]^2\, =\, s_x (s_x + 1) \unity_x\, .
$$
The full Hilbert space is 
$$
\Hil_{[1,L]}\, =\, \Hil_1 \otimes \cdots \otimes \Hil_L\, .
$$
We identify the operator $S_x^{(a)}$ with
$$
\unity_1 \otimes \cdots \otimes \unity_{x-1} \otimes S_x^{(a)} \otimes \unity_{x+1} \otimes \cdots \otimes \unity_L\, .
$$
The general $\text {SU}(2)$-symmetric,
 nearest-neighbor interaction between spins at sites $x$ and $x+1$ is
$$
h_{x,x+1}\, =\, \alpha_{x,x+1}^{(0)} \unity + \sum_{k=1}^{n_{x,x+1}} \alpha_{x,x+1}^{(k)} (\boldsymbol{S}_x \cdot \boldsymbol{S}_{x+1})^k\, ,
$$
for some constants $\alpha_{x,x+1}^{(0)},\dots,\alpha_{x,x+1}^{(n_{x,x+1})}$,
where $n_{x,x+1} = 2(s_x \wedge s_{x+1})$,
and 
$$
\boldsymbol{S}_x \cdot \boldsymbol{S}_{x+1}\, =\, S_x^{(1)} S_{x+1}^{(1)} + S_x^{(2)} S_{x+1}^{(2)} + S_x^{(3)} S_{x+1}^{(3)}\, .
$$
The full Hamiltonian on $\Hil_{[1,L]}$ is 
$$
H_{[1,L]}\, =\, \sum_{x=1}^{L-1} h_{x,x+1}\, .
$$
An important problem is to diagonalize such quantum spin chains,
or to determine qualitative features of the spectrum based on properties
of the interactions $h_{x,x+1}$.

The Hamiltonian has the full $\textrm{SU}(2)$ symmetry.
In fact, for any $H$ that commutes with $\textrm{SU}(2)$, 
we let
$$
E_0(H,s)\, =\, \inf \spec H_s 
$$
where $H_s$ is the $H$ restricted to the space of total spin $s$. 
We say that $H$ has the ferromagnetic ordering of energy levels property, or FOEL for short, if
$$
s\leq s' \ \Rightarrow \ E_0(H,s) \geq E_0(H,s')\, .
$$
In this paper we consider conditions on the nearest neighbor interactions $h_{x,x+1}$ guaranteeing that 
$H_{[1,L]}$ satisfies the FOEL property.

\subsection{The case of two spins}

It is instructive to first consider the conditions for the FOEL property for the simplest case, where $L=2$ so that there are just two sites.
We consider the magnitudes of the spins to be $s_1,s_2 \in \{\frac{1}{2},1\frac{3}{2},2,\dots\}$.
The spectrum of
$$
|\boldsymbol{S}_1|^2 + 2 \boldsymbol{S}_1 \cdot \boldsymbol{S}_2 + |\boldsymbol{S}_2|^2\, ,
$$
is $\{j(j+1)\, :\, j \in \mathcal{J}(s_1,s_2)\}$,
where
$$
\mathcal{J}(s_1,s_2)\, =\, \{|s_1-s_2|,|s_1-s_2|+1,\dots,s_1+s_2\}\, .
$$
Note that $|\mathcal{J}(s_1,s_2)| = n_{1,2}+1$, where $n_{1,2} = 2(s_1\wedge s_2)$ was defined above.
We shift the basic Heisenberg interaction so as to have ground state equal to $0$:
$$
\mathfrak{h}_{s_1,s_2}\, =\, s_1 s_2 \mathbbm{1} - \boldsymbol{S}_1 \cdot \boldsymbol{S}_2\, .
$$
The spectrum of this operator is $\{E_{s_1,s_2}(j)\, :\, j \in \mathcal{J}(s_1,s_2)\}$, where
$$
E_{s_1,s_2}(j)\, =\, \frac{(s_1+s_2)(s_1+s_2+1) - j(j+1)}{2}\, .
$$
Note that there are only $n_{1,2}+1$ different eigenvalues.
Therefore any function of this finite set of eigenvalues may be represented as a polynomial
of degree at most $n_{1,2}$.
Therefore, the most general interaction we need to consider for a chain of length 2
is
$$
H_{[1,2]}\, =\, Q(\mathfrak{h}_{s_1,s_2})\, ,
$$
where $Q(z) = a_0 + a_1 z + \dots + a_n z^n$ is a general polynomial, with real coefficients, of degree at most $n_{1,2}$.
The general condition for $H_{[1,2]}$ to satisfy the FOEL property is that
$$
Q\big(E_{s_1,s_2}(j)\big)\, \leq\, Q\big(E_{s_1,s_2}(j')\big)\, ,
$$
whenever $j\geq j'$ for $j,j' \in \mathcal{J}(s_1,s_2)$.
The set of such polynomials forms a cone.
More precisely, let us define for each $j \in \mathcal{J}(s_1,s_2)$, the polynomial $Q_j(z)$ such that
\begin{equation}
Q_j\big(E_{s_1,s_2}(j')\big)\, =\, \begin{cases} 0 & \text { if $j' \in \mathcal{J}(s_1,s_2)$ satisfies $j'>j$,}\\
1 & \text { if $j' \in \mathcal{J}(s_1,s_2)$ satisfies $j'\leq j$.}
\end{cases}\label{monoQ}
\end{equation}
Then the most general polynomial with the FOEL property is a nonnegative combination of the basis
polynomials $\{Q_j\}_{j \in \mathcal{J}(s_1,s_2)}$.
The actual polynomial may be found with the aid of Lagrange interpolation.
Define
$$
\mathcal{L}_{s_1,s_2}(z)\, =\, \prod_{j \in \mathcal{J}(s_1,s_2)} \big(z-E_{s_1,s_2}(j)\big)\, .
$$
Then,  we have
$$
Q_j(z)\, =\, \sum_{j' \leq j} \frac{\mathcal{L}_{s_1,s_2}(z)}{\mathcal{L}_{s_1,s_2}'\big(E_{s_1,s_2}(j')\big) \big(z-E_{s_1,s_2}(j')\big)}\, .
$$
For example, for $s_1=s_2=1$, we have $\mathcal{J}(s_1,s_2) = \{0,1,2\}$, with
$$
E_{1,1}(0)\, =\, 3\, ,\quad E_{1,1}(1)\, =\, 2\, ,\quad E_{1,1}(2)\, =\, 0\, .
$$
So
$$
Q_1(z)\, =\, \frac{1}{3} z^2 - \frac{2}{3} z\, ,\quad Q_2(z)\, =\, -\frac{1}{6} z^2 + \frac{5}{6} z\, ,\quad Q_3(z)=1\, .
$$
Writing this in terms of the Heisenberg interaction, we have
\begin{align*}
Q_1(\mathfrak{h}_{1,1})\, &=\, \frac{1}{3} (\boldsymbol{S}_1 \cdot \boldsymbol{S}_2 )^2\, =\, P^{(0)}\, ,\\
Q_2(\mathfrak{h}_{1,1})\, &=\, -\frac{1}{6} (\boldsymbol{S}_1 \cdot \boldsymbol{S}_2)^2 - \frac{1}{2} \boldsymbol{S}_1 \cdot \boldsymbol{S}_2 + \frac{2}{3} \mathbbm{1}\,
=\, P^{(0)} + P^{(1)}\, ,
\end{align*}
and of course $Q_3(\mathfrak{h}_{1,1}) = \mathbbm{1} = P^{(0)} + P^{(1)} + P^{(2)}$, where we denote $P^{(j)}$ to be the projection onto the spin $j$ subspace.
Note that $Q_2(\mathfrak{h}_{1,1})$ is the negative of the famous AKLT model \cite{affleck:1988}, because the ground state of that antiferromagnetic model is precisely the range
of $P^{(0)} + P^{(1)}$, for two sites.

One might naively hope that since one can completely resolve the question for $L=2$ of which interactions satisfy FOEL,
that this would lead to a complete resolution for larger $L$, as well.
But we remind the reader that quantum spin systems are difficult precisely because the nearest-neighbor interaction terms typically do not commute.
We do not expect to prove 
that all Hamiltonians with two-spin interactions of the form (\ref{monoQ}) satisfy FOEL. 
What we can prove is that all 
nearest neighbor Hamiltonians in one dimension in which each nearest neighbor term is a
sum with non-positive coefficients of a particular basis of polynomials,
which we call the {\em cascade basis} (see Lemma \ref{lem:cascadebasis})
and which are of course non-decreasing on the set
$\{\lambda_j\mid j=0,\ldots, 2s\}$, satisfy FOEL for chains of arbitrary length.

For the sake of comparison, we remark that in the case of two identical spins, the spectrum of the cascade basis operators, defined in (\ref{cascadebasisoperators}) below, is 
\begin{equation*}
	\lambda_j = \begin{cases} \frac{(2j +k)!(2s-k)!}{(2j-k)!(2s+k)!} & \textrm{ when } j \geq k/2 \\ 0 & \textrm{ otherwise } \end{cases}
\end{equation*}
where again, $j$ refers to the total spin $j$ irrep and $0 \leq k \leq 2s$. 
Thus, we see that there is degeneracy at $0$, but non-zero eigenvalues are simple.

\begin{figure}\label{fig:spin1}
\begin{tikzpicture}
%
%
\draw[very thick,->] (0,0) -- (0,4);
\draw[very thick,->] (0,0) -- (4,0);
\draw[above right] (0,3.5) node(1){$J_2$};
\draw[above right] (3.5,0) node(1){$J_1$};
\draw[very thick,-] (0,0) circle (3);
\filldraw[fill=gray!80,draw=black!100] 
(0,0) -- (-3,0) arc (180:201:3) -- cycle;
\draw[very thick,-] (-3,0) arc (180:201:3);
\filldraw[fill=gray!50,draw=black!100] 
(0,0) -- (0,2.4) arc (90:201:2.4) -- cycle;
\draw[very thick,-] (0,2.4) arc (90:201:2.4);
\filldraw[fill=gray!20,draw=black!100] 
(0,0) -- (0,1.8) arc (90:225:1.8) -- cycle;
\draw[very thick,-] (0,1.8) arc (90:225:1.8);
\end{tikzpicture}
\caption{The most general $SU(2)$-invariant nearest neighbor interaction for a spin-1 chain
is of the form  $J_1{\bf S}_x\cdot {\bf S}_{x+1}+J_2 ({\bf S}_x\cdot {\bf S}_{x+1})^2$. The ground
state of a translation invariant model with such interactions depends only on the signs of the
coupling constants $J_1$ and $J_2$ and their relative magnitude. In this pie diagram, the region
where the ground state is a saturated ferromagnet (maximal total spin) is indicated in light gray.
The region in middle gray is where the system with just two spins exhibits FOEL. In this paper,
as a direct application of Theorem \ref{thm:MainVersionThm1}, we prove that spin-1 chains of
arbitrary length with Hamiltonians of the form $H_L=\sum_{x=1}^L
J_1(x){\bf S}_x\cdot {\bf S}_{x+1}+J_2(x) ({\bf S}_x\cdot {\bf S}_{x+1})^2$ satisfy FOEL if for
$1\leq x\leq L-1$, $(J_1(x),J_2(x))$ belongs to the dark gray segment.}
\end{figure}
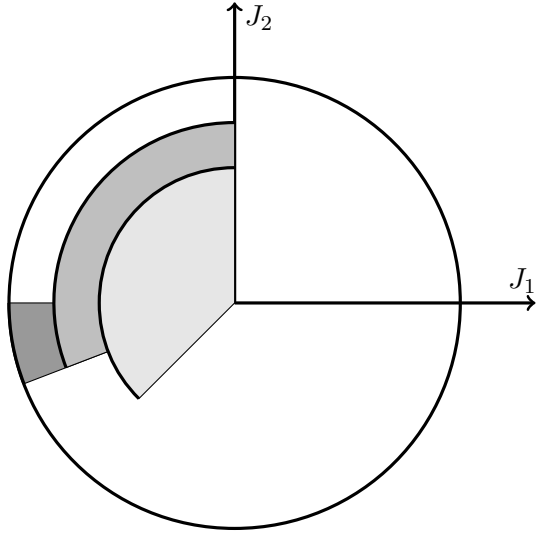

\subsection{Cascade basis}

Given $n \in \N$,
consider the tensor product $(\C^2)^{\otimes n}$.
The set of symmetric tensors forms an irreducible representation of $\text {SU}(2)$
of dimension $n+1$.
Let $T_n : (\C^2)^{\otimes n} \to \C^{n+1}$ be the symmetrization map given by 
\begin{equation*}
	T_n (v^{s_1} \otimes \cdots \otimes v^{s_n}) = v^{|s|}
\end{equation*}
where $v^{s_i}$ denotes the basis element which diagonalizes the third component of spin and has eigenvalue (or weight) $s_i$, and $|s| = s_1 + s_2 + \cdots + s_n$.
One can show also that $T_n^* : \C^{n+1} \to (\C^2)^{\otimes n }$ is given by 
\begin{equation*}
	T_n^* (v^m) = \binom{n}{\frac{n-m}{2}}^{-1} \sum_{\substack{s_1,\ldots,s_n \\ |s|=m}} v^{s_1} \otimes \cdots \otimes v^{s_n}
\end{equation*}

Given two integers $m,n$, we may consider 
$$
T_m^* \otimes T_n^* : \C^{m+1} \otimes \C^{n+1} \to (\C^2)^{\otimes (m+n)}\, .
$$
Here we identify the tensor product $(\C^2)^{\otimes m} \otimes (\C^2)^{\otimes n}$ with $(\C^2)^{\otimes (m+n)}$,
keeping the left-to-right order of the tensor factors:
$$
(v_1 \otimes \cdots \otimes v_n) \otimes (w_1 \otimes \cdots \otimes w_m) \mapsto v_1 \otimes \cdots \otimes v_n \otimes w_1 \otimes \cdots \otimes w_m\, .
$$
Given $k \in \{-m,-m+1,\dots,n\}$, we define
$$
U_{m,n}^{m+k,n-k}\, =\, (T_{m+k} \otimes T_{n-k}) (T_m^* \otimes T_n^*)\, ,
$$
which is an intertwiner
from $\C^{m+1} \otimes \C^{n+1}$ to $\C^{m+k+1} \otimes \C^{n-k+1}$.
Finally, we define
\begin{equation}
K_{m,n}(k) = (U_{m,n}^{m+k,n-k})^* U_{m,n}^{m+k,n-k} : \C^{m+1} \otimes \C^{n+1} \to \C^{m+1} \otimes \C^{n+1}\, .
\label{cascadebasisoperators}
\end{equation}
The following lemma will be proved in the next section.
\begin{lemma}
\label{lem:cascadebasis}
If $m\geq n$ then $\{K_{m,n}(k)\}_{k=0}^n$ forms a basis for the set of $\textrm{SU}(2)$-intertwiners
from $\C^{m+1} \otimes \C^{n+1}$ to itself.
Also, if we define 
$$
\tau : \C^{m+1} \otimes \C^{n+1} \to \C^{n+1} \otimes \C^{m+1}\, ,
$$
such that $\tau (v \otimes w) = w \otimes v$, then
$$
\tau^* K_{n,m}(-k) \tau\, =\, K_{m,n}(k)\, ,
$$
for all $m,n \in \N$ and $k = -m,\dots,n$.
\end{lemma}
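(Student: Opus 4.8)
The plan is to pass to the Clebsch--Gordan decomposition. Set $j_0 = (m-n)/2$. Since $m\ge n$, the representation $\C^{m+1}\otimes\C^{n+1}$ decomposes into the spin-$j$ irreducibles for $j = j_0, j_0+1,\dots,j_0+n$, each occurring exactly once; hence $\textbf{End}_{\textrm{SU}(2)}(\C^{m+1}\otimes\C^{n+1})$ has dimension $n+1$, with basis the orthogonal projections $P^{(j)}$ onto the spin-$j$ subspaces. Each $K_{m,n}(k) = (U_{m,n}^{m+k,n-k})^* U_{m,n}^{m+k,n-k}$ is a positive self-adjoint intertwiner, hence diagonal in this decomposition: $K_{m,n}(k) = \sum_{j=j_0}^{j_0+n}\lambda_j^{(k)} P^{(j)}$ with $\lambda_j^{(k)}\ge 0$. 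So the first assertion amounts to invertibility of the $(n+1)\times(n+1)$ matrix $[\lambda_{j_0+l}^{(k)}]_{k,l=0}^{n}$, and I would show that this matrix is triangular with nonvanishing diagonal.

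The triangularity is easy: $K_{m,n}(k)$ factors through $\C^{m+k+1}\otimes\C^{n-k+1}$, and since $m+k\ge n-k$ the latter contains only the spins $j_0+k\le j\le j_0+n$; a morphism of representations kills every spin-$j$ component whose spin does not occur in the target, so $U_{m,n}^{m+k,n-k}$, and hence $K_{m,n}(k)$, annihilates the spin-$j$ subspace for $j<j_0+k$, i.e.\ $\lambda_j^{(k)} = 0$ for $j<j_0+k$. Also $K_{m,n}(0) = (T_mT_m^*)\otimes(T_nT_n^*) = \id$ by the formula for $T_r^*$, so $\lambda_j^{(0)}=1$ for all $j$. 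Reindexing $j = j_0+l$ this exhibits the matrix as upper triangular, and what remains is the nonvanishing of each diagonal entry $\lambda_{j_0+k}^{(k)}$, i.e.\ that $U_{m,n}^{m+k,n-k}$ does not annihilate the lowest-spin subspace (spin $j_0+k$) of $\C^{m+1}\otimes\C^{n+1}$.

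This nonvanishing is the crux. I would deduce it from the surjectivity of $U_{m,n}^{m+k,n-k}$ onto $\C^{m+k+1}\otimes\C^{n-k+1}$: a surjective intertwiner is nonzero on the spin-$j$ part of the source for every spin $j$ present in the target, in particular for $j = j_0+k$. For the surjectivity I would use spin-coherent states. For $\zeta\in\C$ let $\psi_\zeta\in\C^2$ have coordinates $(1,\zeta)$; the explicit formula for $T_r^*$ shows that $T_r^*$ sends the coherent state of $\C^{r+1}$ at $\zeta$ to the product vector $\psi_\zeta^{\otimes r}$, so $U_{m,n}^{m+k,n-k}$ carries the product of the coherent states at $\zeta$ and $\eta$ to $T_{m+k}(\psi_\zeta^{\otimes m}\otimes\psi_\eta^{\otimes k})\otimes T_{n-k}(\psi_\eta^{\otimes(n-k)})$. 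Varying $\zeta,\eta$ and taking spans, one reads off that the image exhausts $\C^{m+k+1}\otimes\C^{n-k+1}$; verifying that this span is everything is the one genuine (if short) computation the argument requires. An alternative route to the same conclusion is to compute the eigenvalues $\lambda_j^{(k)}$ outright --- the analogue for general $m,n$ of the formula recorded above for $m=n$ --- and observe that $\lambda_j^{(k)}\ne 0$ exactly when $j\ge j_0+k$. In either case this nonvanishing is the main obstacle; everything else is bookkeeping in the Clebsch--Gordan decomposition.

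For the symmetry relation, note that the flip $\tau$, and more generally every block flip $\tau_{a,b}:\C^{a+1}\otimes\C^{b+1}\to\C^{b+1}\otimes\C^{a+1}$, is a unitary $\textrm{SU}(2)$-intertwiner with $\tau_{a,b}^{-1}=\tau_{a,b}^*=\tau_{b,a}$. Because symmetrization ignores the order of the tensor factors, $T_a\otimes T_b$ and $T_a^*\otimes T_b^*$ are compatible with these flips: writing $\sigma_{a,b}$ for the flip of the two blocks in $(\C^2)^{\otimes a}\otimes(\C^2)^{\otimes b}$, one has $\tau_{a,b}(T_a\otimes T_b) = (T_b\otimes T_a)\sigma_{a,b}$ and $T_a^*\otimes T_b^* = \sigma_{b,a}(T_b^*\otimes T_a^*)\tau_{a,b}$. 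Substituting these into $U_{m,n}^{m+k,n-k} = (T_{m+k}\otimes T_{n-k})(T_m^*\otimes T_n^*)$ conjugates it by flips into $(T_{n-k}\otimes T_{m+k})\,\pi\,(T_n^*\otimes T_m^*)$, where $\pi$ is a permutation of the $m+n$ tensor slots assembled from the various $\sigma$'s. A short check shows $\pi$ acts as the identity between these symmetrized ends: after the outer symmetrizations only the number of slots of each original symmetric block sent to the left versus the right block matters, and $\pi$ distributes them exactly as the identity does. Hence $U_{m,n}^{m+k,n-k} = \tau_{n-k,m+k}\,U_{n,m}^{n-k,m+k}\,\tau_{m,n}$, and taking the product with the adjoint and cancelling the unitary flips yields $K_{m,n}(k) = \tau_{m,n}^*\,K_{n,m}(-k)\,\tau_{m,n} = \tau^* K_{n,m}(-k)\tau$, as claimed. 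The only slightly tedious point is pinning down $\pi$ and checking that it drops out.
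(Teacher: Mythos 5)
Your proposal is correct in outline but takes a genuinely different route from the paper's. The paper proves the basis claim graphically: it expands the middle Jones--Wenzl projector $p_{m+k}$ inside $K_{m,n}(k)$ using the Frenkel--Khovanov identity (Theorem \ref{thm:JWFK}), obtaining $K_{m,n}(k)$ as an explicit nonnegative combination of the Temperley--Lieb basis elements with $\ell=0,\dots,k$ arcs whose $\ell=k$ coefficient is manifestly nonzero; the change-of-basis matrix is then triangular with nonvanishing diagonal and the claim follows. You instead diagonalize in the Clebsch--Gordan decomposition and reduce to triangularity of $[\lambda_{j_0+l}^{(k)}]$ plus nonvanishing of the diagonal entries $\lambda_{j_0+k}^{(k)}$. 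The two arguments are structurally parallel, but the load-bearing input differs: the paper gets the nonzero leading coefficient for free from the graphical expansion, whereas you must still prove that $U_{m,n}^{m+k,n-k}$ is surjective (equivalently, that $\lambda_j^{(k)}\neq 0$ for all $j\geq j_0+k$), and your proposal flags this rather than proves it. Be aware that the coherent-state span you propose to compute is \emph{exactly} equivalent to the surjectivity you want, since coherent states span the source, so nothing is gained until that rank computation is actually done; the cleanest way to close it in your framework is your second alternative, computing the eigenvalues outright (the formula appears at the end of Section \ref{sec:sufficient_conditions} of the paper, and its nonvanishing for $j\geq j_0+k$ is immediate). Your treatment of the flip relation is sound --- the observation that the slot permutation $\pi$ is absorbed by the outer symmetrizations is correct and easily checked on weight vectors --- and is in fact more explicit than the paper, which states the $\tau$ identity without verifying it; note only that this part of your argument uses that $\tau$ is an intertwiner and so is specific to $q=1$, which is all the Lemma asserts.
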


In particular, since the $K_{m,n}(k)$ form a basis for the $\textrm{SU}(2)$-intertwiners, we observe that $H$ can be written in terms of the $K_{m,n}(k)$,
\begin{equation}
	H = \sum_{i=1}^{L-1} \sum_{k=0}^{\min(n_i,n_{i+1})} J_k^{(i)} K_{n_i,n_{i+1}}(k)
	\label{eqn:cascadeHam}
\end{equation}
Our interest in the cascade basis is that in the cascade basis, there is a particularly nice set of inequalities in the coupling coefficients where it is possible to prove FOEL.
Our main result is as follows.

\begin{thm}
\label{thm:MainVersionThm1}
Let $H$ be given as in equation (\ref{eqn:cascadeHam}). If $J_k^{(i)} \leq 0 $ for all $1\leq k$, then  
the full Hamiltonian $H$ satisfies the FOEL property.
\end{thm}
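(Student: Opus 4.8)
The idea is to combine the matrix‑element formulas for the cascade operators in the dual canonical basis (developed in Sections~\ref{sec:graphical_calculus}--\ref{sec:dual_canonical_basis}) with a Perron--Frobenius argument on each weight space, and then to use the $\U$‑symmetry to transport ground states between weight spaces and extract monotonicity of the spectrum in the total spin.

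First I would normalize away the $k=0$ terms. Since $T_nT_n^* = \unity$, the map $U_{n_i,n_{i+1}}^{n_i,n_{i+1}}$ is the identity, so $K_{n_i,n_{i+1}}(0)=\unity$; hence $\sum_i J_0^{(i)}K_{n_i,n_{i+1}}(0)=\big(\sum_i J_0^{(i)}\big)\unity$ shifts every total‑spin sector by the same scalar and has no effect on FOEL. Thus I may assume $H=\sum_i\sum_{k\ge 1}J_k^{(i)}K_{n_i,n_{i+1}}(k)$ with all $J_k^{(i)}\le 0$. Second, and this is the technical heart, I would invoke the computation of Section~\ref{sec:dual_canonical_basis}: realizing each spin‑$s_x$ site as the symmetric part of $(\C^2)^{\otimes 2s_x}$ inside $(\C^2)^{\otimes N}$ with $N=\sum_x 2s_x$, each cascade operator $K_{m,n}(k)$ with $k\ge 1$ acts, on every weight space of $(\C^2)^{\otimes N}$, with non‑negative off‑diagonal matrix entries in the dual canonical basis, and this persists after embedding $K_{n_i,n_{i+1}}(k)$ with identities on the remaining sites. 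Consequently $H$, being a combination of such operators with coefficients $J_k^{(i)}\le 0$, has all off‑diagonal matrix entries $\le 0$ in the dual canonical basis; equivalently, for $c$ large, $c\,\unity-H$ restricted to any weight space is a non‑negative matrix in that basis.

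Third, I would apply Perron--Frobenius on each weight space $\Hil^{(M)}$ (the $S^{(3)}_{\mathrm{tot}}$‑eigenspace of eigenvalue $M\ge 0$). Provided the associated graph is connected — which holds because the chain is connected and the $K_{n_i,n_{i+1}}(k)$, $k\ge1$, couple neighboring sites; the case where $H$ splits into sub‑chains, where FOEL follows by a convolution argument on the sectors, is handled separately — the ground state $\Omega^{(M)}$ of $H|_{\Hil^{(M)}}$ is unique and has strictly positive coordinates in the dual canonical basis, and $E_0(H,\text{weight }M)$ is a non‑degenerate eigenvalue. Fourth, I would use $\U$‑invariance together with the positivity of the action of the Chevalley generators on the dual canonical basis and the non‑negativity of the Gram matrix of that basis (both established in Section~\ref{sec:dual_canonical_basis}): the lowering generator $S^-$ sends the strictly positive vector $\Omega^{(M+1)}$ to a nonzero, coordinate‑wise non‑negative vector in $\Hil^{(M)}$, so $\langle \Omega^{(M)}, S^-\Omega^{(M+1)}\rangle>0$. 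Since $S^-$ commutes with $H$, $S^-\Omega^{(M+1)}$ is an eigenvector of $H|_{\Hil^{(M)}}$ of eigenvalue $E_0(H,\text{weight }M{+}1)\ge E_0(H,\text{weight }M)$; being non‑orthogonal to the non‑degenerate ground state $\Omega^{(M)}$ it must be a positive multiple of it, forcing equality of the two energies and $\Omega^{(M)}\propto (S^-)^{\,j_{\max}-M}\Omega^{(j_{\max})}$. Running the same argument with $S^+$ pins the ground state of every weight space inside the maximal‑spin isotypic component. Finally, writing $\Hil^{(M)}=H_M^{\mathrm{hw}}\oplus S^-\Hil^{(M+1)}$ as an $H$‑module direct sum (with $H_M^{\mathrm{hw}}$ the highest‑weight‑$M$ vectors, on which $H|_{H_M^{\mathrm{hw}}}$ is conjugate to $H_{M}$) and comparing these decompositions for successive $M$ using the positivity structure established above, one obtains that $E_0(H,s)=\inf\spec H_s$ is non‑increasing in $s$, i.e. FOEL.

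The main obstacle is step two: proving the cascade operators have non‑negative off‑diagonal entries in the dual canonical basis. The positivity is genuinely invisible in the natural bases (spin, Temperley--Lieb diagram, or tensor‑product bases, in which the signs are indefinite) and only emerges after passing to Lusztig's dual canonical basis, so this step is where the graphical calculus and the explicit formulas for $U_{m,n}^{m+k,n-k}$ and $K_{m,n}(k)$ are indispensable. A secondary subtlety lies in the last part of step four — converting weight‑space positivity into the statement about total‑spin sectors — since the nesting of spectra across weight spaces alone does not imply monotonicity in the total spin; the positivity of the Perron--Frobenius ground states must be used in an essential way, and this is precisely the point at which the $\U$‑symmetry replaces the induction on chain length employed in \cite{nachtergaele:2004,nachtergaele:2005,nachtergaele:2006}.
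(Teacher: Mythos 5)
Your overall architecture (positivity of the cascade operators in the dual canonical basis, Perron--Frobenius, and the positive action of the lowering operator) matches the paper's, but the step that actually proves FOEL is missing. Working only in the $S^{(3)}_{\mathrm{tot}}$-weight spaces $\Hi^{(M)}$, your Perron--Frobenius argument shows at best that the ground state of $H|_{\Hi^{(M)}}$ lies in the maximal-spin multiplet for every $M$; since $\spec H|_{\Hi^{(M)}}=\bigcup_{s\geq M}\spec H_s$, this only controls $\min_{s\geq M}E_0(H,s)$ and says nothing about the ordering of $E_0(H,s)$ for intermediate $s$ (a sequence like $3,1,2,0$ has all tail-minima at the top spin but is not monotone). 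You acknowledge this subtlety at the end but then merely assert that ``the positivity structure'' resolves it. The paper's resolution is concrete and indispensable: apply Perron--Frobenius to $t-H$ restricted to each \emph{highest-weight} space $\Hi_{HW}^{(n_1+\cdots+n_L-2k)}$, whose spectrum is exactly $\spec H_s$ for $s=\tfrac12(n_1+\cdots+n_L)-2k/2$; prove irreducibility there by the arc-moving argument (Proposition \ref{prop:irreducible} --- connectivity of the chain alone is not enough, one must show the cascade terms move the rightmost non-maximal arc); show that $H$ never decreases the number of arcs, so that on $F\Hi_{HW}^{(\cdots-2k)}$ it is block lower-triangular with $\Hi_{HW}^{(\cdots-2k-2)}$ as an invariant block; and then invoke Proposition \ref{prop:posEigen} to identify the strictly positive vector $F\varphi^{(\cdots-2k)}$ as the maximal eigenvector of $t-H$ on that subspace, which forces $t-E_0(H,\cdots-2k)\geq t-E_0(H,\cdots-2k-2)$. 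None of this comparison appears in your sketch.

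A second, more local error: the Gram matrix of the dual canonical basis is \emph{not} entrywise non-negative, and the paper establishes no such thing. Already in $V(1)^{\otimes 3}$ at weight $+1$ the two one-cap vectors $(\,\downarrow\uparrow-q^{-1}\uparrow\downarrow\,)\otimes\uparrow$ and $\uparrow\otimes(\,\downarrow\uparrow-q^{-1}\uparrow\downarrow\,)$ have inner product $-q^{-1}<0$. Hence your deduction $\langle\Omega^{(M)},S^-\Omega^{(M+1)}\rangle>0$ from coordinatewise positivity is unjustified: coordinatewise positivity in a non-orthogonal basis does not control the Hilbert-space pairing that governs orthogonality of eigenvectors of the Hermitian $H$. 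The correct mechanism, and the one the paper uses, avoids inner products entirely: a coordinatewise non-negative eigenvector of an irreducible matrix with non-negative off-diagonal entries necessarily has the extremal eigenvalue (Proposition \ref{prop:posEigen}), so $F\varphi$ is compared to the Perron vector directly through eigenvector positivity, not through non-orthogonality.
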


We use Frenkel and Khovanov's graphical representation to prove this theorem and note that it also applies to quantum spin chains with a symmetry of the quantum group $\text {SU}_q(2)$ for $q>0$. In what follows, we will provide a set of sufficient conditions for FOEL. As it turns out, the cascade basis allows us to demonstrate that these sufficient conditions are satisfied by an easy set of inequalities on the coupling coefficients $J_k^{(i)}$. 

Many bases of $\textrm{SU}(2)$-intertwiners of $\C^{m+1} \otimes \C^{n+1}$ have been considered up to now:
\begin{itemize}
	\item Powers of the Heisenberg interaction, $(S_i \cdot S_{i+1})^k$
	\item Projections on to total spin $j$ components
	\item The Indicator basis, $X_j$
	\item The Temperley Lieb Basis, which we describe after introducing the graphical calculus
	\item The Cascade basis, $K_{m,n}(k)$
\end{itemize}
It is possible to calculate explicit change of basis formulas, and this allows one to obtain 
inequalities for FOEL in other bases. Expressing the various bases as linear combinations
of the projections onto the subspaces of fixed total spin is equivalent to finding the spectral
decomposition for each of the bases, which is also perhaps more directly useful.
For example, for a pair of spin 1's we find the following expressions.
\begin{align*}
&X_0 = P^{(0)} + P^{(1)} + P^{(2)} =\unity ,\quad X_1 =  P^{(1)} + P^{(2)},\quad X_2 =   P^{(2)} \\
& K_{2,2}(0) =  P^{(0)} + P^{(1)} + P^{(2)} ,\quad  K_{2,2}(1) =  \frac13 P^{(1)} + P^{(2)}, 
\quad K_{2,2}(2) = P^{(2)}.
\end{align*}
For the Temperley-Lieb basis we use a graphical notation introduced in \cite{kauffman:1994}.
The spectral decompositions are
$$
\begin{minipage}{3cm}
\begin{tikzpicture}[xscale=0.75,yscale=0.75,bend angle=90,
spin/.style={rectangle,draw,thick,minimum width=10mm},scale=1]
\node at (-1,1) [spin] (a1) {};
\node at (1,1) [spin] (b1) {};
\node at (-1,-1) [spin] (a2) {};
\node at (1,-1) [spin] (b2) {};
\draw (a1.-150) .. controls +(0,-0.5) and +(0,0.5) .. node[left] {\small $$} (a2.150);
\draw (a1.-30) .. controls +(0,-0.5) and +(0,0.5) .. node[left] {\small $$} (a2.30);
\draw (b1.-30) .. controls +(0,-0.5) and +(0,0.5) .. node[right] {\small $$} (b2.30);
\draw (b1.-150) .. controls +(0,-0.5) and +(0,0.5) .. node[left] {\small $$} (b2.150);
\end{tikzpicture}
\end{minipage}
= P^{(0)} + P^{(1)} + P^{(2)}, \quad 
\begin{minipage}{3cm}
\begin{tikzpicture}[xscale=0.75,yscale=0.75,bend angle=90,
spin/.style={rectangle,draw,thick,minimum width=10mm},scale=1]
\node at (-1,1) [spin] (a1) {};
\node at (1,1) [spin] (b1) {};
\node at (-1,-1) [spin] (a2) {};
\node at (1,-1) [spin] (b2) {};
\draw (a1.-150) .. controls +(0,-0.5) and +(0,0.5) .. node[left] {\small $$} (a2.150);
\draw (b1.-30) .. controls +(0,-0.5) and +(0,0.5) .. node[right] {\small $$} (b2.30);
\draw (a1.-30) .. controls +(0,-0.75) and +(0,-0.75) .. node[above] {\small $$} (b1.-150);
\draw (a2.30) .. controls +(0,0.75) and +(0,0.75) .. node[below] {\small $$} (b2.150);
\end{tikzpicture}
\end{minipage}
= -P^{(1)} - \frac32 P^{(2)},\quad 
$$
$$
\begin{minipage}{3cm}
\begin{tikzpicture}[xscale=0.75,yscale=0.75,bend angle=90,
spin/.style={rectangle,draw,thick,minimum width=10mm},scale=1]
\node at (-1,1) [spin] (a1) {};
\node at (1,1) [spin] (b1) {};
\node at (-1,-1) [spin] (a2) {};
\node at (1,-1) [spin] (b2) {};
\draw (a1.-30) .. controls +(0,-0.50) and +(0,-0.50) .. node[above] {\small $$} (b1.-150);
\draw (a2.30) .. controls +(0,0.50) and +(0,0.50) .. node[below] {\small $$} (b2.150);
\draw (a1.-150) .. controls +(0,-0.90) and +(0,-0.90) .. node[above] {\small $$} (b1.-30);
\draw (a2.150) .. controls +(0,0.90) and +(0,0.90) .. node[below] {\small $$} (b2.30);
\end{tikzpicture}
\end{minipage}
= 3 P^{(2)}.
$$
The projections $P^{(0)},P^{(1)}$, and $P^{(2)}$ can in turn be expressed as a polynomial
in the Heisenberg interaction:
\begin{eqnarray*}
P^{(0)}&=& -\frac{1}{3}\unity + \frac{1}{3} ({\bf S}_x\cdot {\bf S}_{x+1})^2\\
P^{(1)}&=& \unity - \frac{1}{2}{\bf S}_x\cdot {\bf S}_{x+1} -\frac{1}{2} ({\bf S}_x\cdot {\bf S}_{x+1})^2\\
P^{(2)}&=& \frac{1}{3}\unity + \frac{1}{2}{\bf S}_x\cdot {\bf S}_{x+1} 
+\frac{1}{6} ({\bf S}_x\cdot {\bf S}_{x+1})^2
\end{eqnarray*}
Theorem \ref{thm:MainVersionThm1} states that we have FOEL for spin-1 chains of arbitrary 
length if all interactions are of the form 
\begin{equation*}
	J_1(i) S_i \cdot S_{i+1} + J_2 (i)S_i \cdot S_{i+1})^2 =  c_0(i) K_{1,1}(0) +
	c_1(i) K_{1,1}(1) + c_2(i) K_{1,1}(2) 
\end{equation*}
with $c_0(i),c_1(i),c_2(i)\geq 0$, for all $i$. Using the spectral decompositions above, we
see that these conditions are equivalent to $J_2^{(i)} \leq 0$ and $\frac13 J_1^{(i)} \leq J_2^{(i)}$
for all $1 \leq i \leq L-1$.

\section{Probabilistic interpretation}
\label{sec:prob}

In this section, we want to briefly summarize the implications of FOEL for models from 
probability theory. It is well known that the Markov generator for the symmetric exclusion 
process on a finite graph is equivalent via a similarity transformation to the negative of the 
quantum Hamiltonian for the XXX Heisenberg ferromagnet on the same graph.
For reviews on this topic, see \cite{Thomas,Caputo,nachtergaele:2005} and for other related
models see \cite{AlcarazDrozHenkelRittenberg}.

To make the connection with the more general quantum spin Hamiltonians we study
in this paper, consider an urn mixing model of the following type. 
Suppose that one has $L$ urns, indexed in left-to-right order by $x=1,\dots,L$, each holding
$n$ balls, which come in two colors, say red and white.
At time $t\geq 0$, the urn at $x$ contains $k_x(t)$ red balls and $n-k_x(t)$ white balls.
So the state space for this Markov chain is $\{0,1,\dots,n\}^L$,
specifying all the $k_x(t)$'s for $x=1,\dots,L$.

The dynamics is specified by giving Poisson rates $\lambda_{x,x+1} \geq 0$ for each
pair of urns $\{x,x+1\} \subseteq \{1,\dots,L\}$,
and by specifying a probability distribution $\rho_{x,x+1}$ on the set $\{0,1,\dots,n\}$.

We suppose that there are independent Poisson processes $\{\tau_{x,x+1}(1),\tau_{x,x+1}(2),\dots\}$
for each pair $\{x,x+1\}$, with rate $\lambda_{x,x+1}$.
We also suppose that, independent of this, there are independent random variables $N_{x,x+1}(j)$
for each pair $\{x,x+1\}$ and each $j \in \N$, such that $N_{x,x+1}(j)$ is distributed according to $\rho_{x,x+1}$.

At time $\tau_{x,x+1}(j)$, one removes $N_{x,x+1}(j)$ balls from urn $x$,
uniformly at random among all $n$-choose-$N_{x,x+1}(j)$ possibilities,
and also removes $N_{x,x+1}(n)$ balls from urn $x+1$ uniformly among the $n$-choose-$N_{x,x+1}(j)$
possibilities.
Then these two groups of $N_{x,x+1}(j)$ balls are exchanged and replaced in the opposite urns.

The random dynamics of the model is symmetric for each urn, in the sense that all $n$ balls
in a single urn may be permuted in any fixed (non-random) way, without changing the law of 
the dynamics. Moreover, one may construct a graphical representation for this model.
The existence of the graphical representation means that one can prescribe {\it a priori} which balls
are exchanged, using the graphical representation, and afterwards may specify their initial colors.
This accounts for the sometimes mis-understood hidden $\textrm{SU}(2)$ symmetry of the 
symmetric exclusion process. This urn mixing process, being a generalization of the symmetric 
exclusion process also possesses the $\textrm{SU}(2)$ symmetry.
(If we allowed balls of $k$ colors instead of just two, then the symmetry group would be 
$\textrm{SU}(k)$.)

One may make a conjecture analogous to Aldous's conjecture for the symmetric exclusion process.
Note that the dynamics specified preserves the total number of red and white balls, separately.
It merely changes their positions.

\begin{conj}
For $1\leq k\leq nL-1$, let $\gamma_k$ be the spectral gap of this model,
when restricted to the state space consisting of all possible configurations with exactly $k$ red balls
and $nL-k$ white balls. (For $k=0$ or $nL$ there is only one such
configuration and therefore no spectral gap.)
Then $\gamma_k$ is independent of $k$: in other words it is equal to $\gamma_1$
which is the spectral gap for the associated random walk.
\end{conj}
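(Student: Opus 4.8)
The plan is to deduce the conjecture from Theorem~\ref{thm:MainVersionThm1} via the standard ground-state transformation relating reversible Markov generators to $\textrm{SU}(2)$-symmetric quantum spin Hamiltonians, combined with the general principle (already used to derive the Aldous conjecture from FOEL for the XXX chain \cite{nachtergaele:2006}) that the FOEL property forces the spectral gap to be the same in every sector. First I would record the spin-chain dictionary. The urn mixing process is reversible with respect to the product measure $\pi(k_1,\dots,k_L) \propto \prod_{x=1}^L \binom{n}{k_x}$, restricted to each sector $\sum_x k_x = k$: each elementary move (remove $j$ balls uniformly from each of two adjacent urns and exchange them) is, at the level of ball-labelings, a uniformly-random position swap, which is self-adjoint on $\ell^2$ of labelings, so $\pi$ is a reversible measure. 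Conjugating the generator $\mathcal G = \sum_x \lambda_{x,x+1} \sum_j \rho_{x,x+1}(j)\big(R_{x,x+1}(j)-\id\big)$ by the diagonal operator $\sqrt{\pi}$ produces $-H$, where $H$ is self-adjoint and nonnegative; identifying the single-urn state space $\{0,\dots,n\}$ with the spin-$n/2$ representation $\C^{n+1}$ through the symmetrization map $T_n$, the operator $H$ is $\textrm{SU}(2)$-invariant with nearest-neighbor interactions, and its ground-state energy in the top-spin sector is $0$, attained by the image of the constant function.

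Next I would apply the main theorem. Writing each edge term in the cascade basis, $h_{x,x+1} = \sum_{k=0}^{n} J_k^{(x)} K_{n,n}(k)$ as in Lemma~\ref{lem:cascadebasis}, Theorem~\ref{thm:MainVersionThm1} gives the FOEL property for $H$ as soon as $J_k^{(x)} \le 0$ for all $k \ge 1$. I would carry out the cascade-basis expansion of $\id - R_{x,x+1}(j)$ explicitly and determine for which distributions $\rho_{x,x+1}$ all these coefficients are non-positive. This certainly holds when $\rho_{x,x+1}$ is supported on $\{0,1\}$ (single-ball swaps): for $n=1$ this is the classical symmetric exclusion process, and for general $n$ it is the spin-$n/2$ XXX ferromagnet of \cite{nachtergaele:2005}.

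Finally, from FOEL to gap-independence. Under the dictionary above, the sector of configurations with exactly $k$ red balls is carried to the weight-$(nL/2-k)$ eigenspace of the total third spin component. For $1 \le k \le nL-1$, this eigenspace contains the unique copy of the spin-$nL/2$ irreducible representation (which is one-dimensional inside this eigenspace), contributing the ground energy $0$, together with isotypic components of total spin $s \le nL/2-1$ only; since the Markov chain restricted to this sector is irreducible, Perron--Frobenius makes the eigenvalue $0$ simple, so by FOEL the smallest energy among those remaining components is $E_0(H, nL/2-1) > 0$ and it is attained (by the spin-$(nL/2-1)$ component, which is present precisely when $1\le k\le nL-1$). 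Hence the gap of $H$ restricted to this eigenspace -- equivalently $\gamma_k$ -- equals $E_0(H, nL/2-1)$, independent of $k$; taking $k=1$ identifies this common value with the gap of the associated single-particle random walk (the walk on the path $\{1,\dots,L\}$ with edge rates $\lambda_{x,x+1}\sum_j j\,\rho_{x,x+1}(j)/n$).

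The main obstacle is the hypothesis check of the second step for general $\rho_{x,x+1}$, and it is a genuine one. When $\rho_{x,x+1}$ charges large values the edge terms leave the cone of Theorem~\ref{thm:MainVersionThm1}: for example, exchanging the entire contents of two adjacent urns each holding $n \ge 2$ balls is, on $\C^{n+1}\otimes\C^{n+1}$, the flip operator, and one computes that $\id$ minus the flip has a strictly positive $K_{n,n}(1)$-coefficient, so Perron--Frobenius together with our cascade-basis criterion no longer applies. In that regime the conjecture is the exact analogue of Aldous's conjecture for the interchange process on a path, so proving it in full generality would presumably require either a strengthening of the FOEL hypothesis or a direct comparison argument in the spirit of the octopus inequality of Caputo--Liggett--Richthammer \cite{CaputoLiggettRichthammer}.
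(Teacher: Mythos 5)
You have been asked to prove something the paper itself does not prove: the statement is a \emph{Conjecture}, and the paper leaves it open, offering only the Corollary that immediately follows it as a partial verification (namely, the conjecture holds whenever each $\rho_{x,x+1}$ is a convex mixture of the hypergeometric distributions $\rho_0,\dots,\rho_n$). Your proposal correctly reconstructs exactly that situation: the ground-state transformation by $\sqrt{\pi}$ with $\pi\propto\prod_x\binom{n}{k_x}$, the identification of the $k$-red-ball sector with a weight space, the observation that FOEL plus irreducibility forces the gap in every weight space with $1\le k\le nL-1$ to equal $E_0(H,nL/2-1)$ (since the spin-$(nL/2-1)$ isotypic component meets every such weight space), and the honest admission at the end that the cascade-cone hypothesis of Theorem \ref{thm:MainVersionThm1} fails for general $\rho_{x,x+1}$, so the full conjecture is out of reach of this method. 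That last diagnosis is exactly why the paper states it as a conjecture rather than a theorem.

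Where you stop short of the paper is in the second step. You verify the hypothesis only for $\rho_{x,x+1}$ supported on $\{0,1\}$ (single-ball swaps, i.e.\ the exclusion process / XXX cases already known), and leave the general cascade-basis expansion of $\id-R_{x,x+1}(j)$ as a computation to be done. The paper's answer to that computation is the content of the Corollary: the exchange operator associated with the hypergeometric distribution $\rho_k$ is (up to the similarity transformation and normalization) precisely $K_{n,n}(k)$, so the probabilistic counterpart of the simplicial cone $\{-\sum_k c_k K_{n,n}(k):c_k\ge 0\}$ is exactly the simplex of convex mixtures of $\rho_0,\dots,\rho_n$ --- a strictly larger class than yours, and in view of the maximality result in the last subsection of Section \ref{sec:sufficient_conditions}, the largest class reachable by this Perron--Frobenius argument. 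If you carry out the expansion you propose, you should recover this identification; as written, your partial result is correct but weaker than the paper's, and neither you nor the paper proves the conjecture as stated.
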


As a special case of our FOEL theorem, we can partially verify this conjecture as follows.

\begin{cor}
For each $k=0,\dots,n$,
let $\rho_k$ be the distribution on $\{0,\dots,n\}$
which is the hypergeometric distribution:
$$
\rho_k(\{j\})\, =\, \frac{\binom{k}{j}\binom{n}{k-j}}{\binom{n+k}{k}}\, .
$$
Then the conjecture above is satisfied as long as $\rho_{x,x+1}$ is a convex mixture of $\rho_0,\dots,\rho_n$
for each pair $\{x,x+1\}$.
\end{cor}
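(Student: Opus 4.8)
The plan is to realize the Markov generator of the urn model as (minus) an $\mathrm{SU}(2)$-invariant spin Hamiltonian in the cascade form~(\ref{eqn:cascadeHam}) with all $J_k^{(x)}\le 0$ for $k\ge 1$, to invoke Theorem~\ref{thm:MainVersionThm1}, and then to convert the FOEL property into the statement about spectral gaps by the standard spin-wave argument.

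\emph{Reduction to a spin chain.} Because the dynamics at each urn is unchanged by permuting the $n$ balls in that urn, the chain on $\{0,\dots,n\}^L$ is the restriction of the ``lifted'' chain on labelled balls $(\C^2)^{\otimes nL}$ to the subspace of tensors that are symmetric within each urn, which is $(\C^{n+1})^{\otimes L}$ with each $\C^{n+1}$ the spin-$n/2$ representation realized by the map $T_n$ of Section~\ref{sec:statement}. The lifted generator is a sum over bonds of terms of the form (exchange of a chosen pair of $\ell$-element subsets) $-\,\unity$; each such exchange is an involution occurring at a constant rate, so the lifted generator is symmetric and commutes with the global $\mathrm{SU}(2)$ action. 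Restricting to $(\C^{n+1})^{\otimes L}$ therefore yields a self-adjoint $\mathrm{SU}(2)$-invariant Hamiltonian
$$
H\;=\;\sum_{x=1}^{L-1}\lambda_{x,x+1}\Big(\unity-\sum_{\ell=0}^{n}\rho_{x,x+1}(\{\ell\})\,\mathcal{E}^{(\ell)}_{x,x+1}\Big),
$$
where $\mathcal{E}^{(\ell)}_{x,x+1}$ is the $\mathrm{SU}(2)$-intertwiner on $\C^{n+1}\otimes\C^{n+1}$ implementing the uniform exchange of $\ell$ balls between urns $x$ and $x+1$. The standard diagonal similarity transformation relating exclusion-type generators to spin Hamiltonians (cf.~\cite{nachtergaele:2005}) identifies the spectrum of $H$ on the weight sector $M=nL-2k$ with the spectrum of $-\mathcal{L}$ on the sector with exactly $k$ red balls, so that $\gamma_k$ is the spectral gap of $H$ on that weight sector.

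\emph{Cascade form of $H$.} The computational core is the identity
$$
\sum_{\ell=0}^{n}\rho_k(\{\ell\})\,\mathcal{E}^{(\ell)}_{x,x+1}\;=\;K_{n,n}(k),\qquad k=0,1,\dots,n,
$$
to the effect that the cascade operators are precisely the ``hypergeometric exchange'' operators. To prove it one unwinds $K_{n,n}(k)=(U_{n,n}^{n+k,n-k})^{*}U_{n,n}^{n+k,n-k}$ using the explicit formulas for $T_m$ and $T_m^{*}$: the operator $U_{n,n}^{n+k,n-k}$ moves $k$ balls from one urn into the adjacent one and $(U_{n,n}^{n+k,n-k})^{*}$ moves $k$ back, and evaluating the resulting product of multinomial coefficients (a Vandermonde summation) produces the exchange operator whose number of swapped balls is distributed as $\rho_k(\{\ell\})=\binom{k}{\ell}\binom{n}{k-\ell}/\binom{n+k}{k}$; the case $k=0$ is immediate since $\rho_0=\delta_0$ and $U_{n,n}^{n,n}=\unity$, and the identity may equally be extracted from the graphical calculus of~\cite{kauffman:1994,frenkel:1997}. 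This identity, and in particular matching the combinatorial conventions so that the number of swapped balls is governed by $\rho_k$ rather than by the reflected distribution $\ell\mapsto k-\ell$, will be the main obstacle. Granting it, if $\rho_{x,x+1}=\sum_{k=0}^{n}p_k^{(x)}\rho_k$ with $p_k^{(x)}\ge 0$ and $\sum_{k}p_k^{(x)}=1$, then, using $K_{n,n}(0)=\unity$,
$$
H\;=\;\sum_{x=1}^{L-1}\lambda_{x,x+1}\Big(\big(1-p_0^{(x)}\big)K_{n,n}(0)-\sum_{k=1}^{n}p_k^{(x)}\,K_{n,n}(k)\Big),
$$
which is of the form~(\ref{eqn:cascadeHam}) with $J_k^{(x)}=-\lambda_{x,x+1}p_k^{(x)}\le 0$ for all $1\le k\le n$. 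Theorem~\ref{thm:MainVersionThm1} then shows that $H$ has the FOEL property.

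\emph{From FOEL to the gap.} Write $E_0(s)=E_0(H,s)$. The maximal total spin $s_{\max}=nL/2$ occurs with multiplicity one, and $H$ vanishes on this irreducible subspace (the constant function is stationary for $\mathcal{L}$), so $E_0(s_{\max})=0$. On any weight sector $M$ with $|M|\le 2(s_{\max}-1)$, equivalently $1\le k\le nL-1$, the spectrum of $H$ equals $\bigcup_{s\ge |M|}\spec(A_s)$, where $A_s$ is the block of $H$ on the spin-$s$ isotypic component. Since $A_{s_{\max}}$ is the $1\times 1$ block $[0]$, the two smallest elements of this set are $0$ and $\min_{|M|\le s<s_{\max}}E_0(s)=E_0(s_{\max}-1)$, the minimum being attained at the largest $s$ by FOEL. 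If the chain is irreducible on the $k$-red-ball sector---which holds on the path graph as soon as every bond acts nontrivially, the remaining cases being trivial because then all $\gamma_k$ vanish---the eigenvalue $0$ is simple, so $\gamma_k=E_0(s_{\max}-1)$ for every $k$ in the stated range; in particular $\gamma_k=\gamma_1$ for all $1\le k\le nL-1$, which is the assertion to be proved.
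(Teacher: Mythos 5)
Your overall route --- realize the generator as a cascade-form Hamiltonian with $J_k^{(x)}=-\lambda_{x,x+1}p_k^{(x)}\le 0$ for $k\ge1$, invoke Theorem~\ref{thm:MainVersionThm1}, and then convert FOEL into gap-independence by decomposing each fixed-$k$ sector into total-spin blocks and using simplicity of the eigenvalue $0$ --- is certainly the intended one (the paper states the corollary without writing out a proof), and your last step is carried out correctly. The problem is the pivotal identity $\sum_{\ell}\rho_k(\{\ell\})\,\mathcal{E}^{(\ell)}_{x,x+1}=K_{n,n}(k)$: you do not prove it, you yourself flag the normalization as ``the main obstacle,'' and in fact, with the $\rho_k$ displayed in the corollary, it is \emph{false} for $1\le k\le n-1$. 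Test $n=2$, $k=1$: on $\C^3\otimes\C^3$ the one-ball exchange operator is $\mathcal{E}^{(1)}=\tfrac12\unity+\tfrac12\,\boldsymbol{S}_1\cdot\boldsymbol{S}_2$, with eigenvalues $-\tfrac12,\,0,\,1$ on total spin $0,1,2$. Since $\rho_1=(\tfrac23,\tfrac13)$, one gets
\begin{equation*}
\rho_1(\{0\})\,\unity+\rho_1(\{1\})\,\mathcal{E}^{(1)}
=\tfrac12P^{(0)}+\tfrac23P^{(1)}+P^{(2)}
\;\neq\;
K_{2,2}(1)=\tfrac13P^{(1)}+P^{(2)},
\end{equation*}
whereas the reflected weights $(\tfrac13,\tfrac23)$ do reproduce $K_{2,2}(1)$. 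The correct identity is $K_{n,n}(k)=\sum_\ell\binom{n}{\ell}\binom{k}{\ell}\binom{n+k}{k}^{-1}\mathcal{E}^{(\ell)}$, i.e.\ the exchange count is governed by $\ell\mapsto\rho_k(\{k-\ell\})$: when $(U_{n,n}^{n+k,n-k})^{*}$ returns $k$ of the $n+k$ pooled balls, the number of originally-first-urn balls among them (which is the number actually exchanged) is hypergeometric with that law.

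So the ``convention obstacle'' is not a bookkeeping issue you can defer: resolving it either changes the statement you prove (you establish the corollary for convex mixtures of the \emph{reflected} hypergeometric distributions), or it obliges you to show in addition that each $\rho_k$ lies in the simplex spanned by the reflected ones (this happens to hold for $n=2$, $k=1$, where $\rho_1=\tfrac12\delta_0+\tfrac12\tilde\rho_1$, but you neither state nor prove it in general). As written, the central computation of the proof is absent and the version of it you assert is incorrect; everything downstream (the sign condition $J_k^{(x)}\le0$, the appeal to Theorem~\ref{thm:MainVersionThm1}, and the FOEL-to-gap argument) is fine once that identity is fixed.
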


It is not surprising that such models would be related to quantum spin systems.
A common way to understand the quantum nature of a quantum spin chain is to use the 
Trotter product formula to map the quantum system to a classical spin system evolving in 
time, sometimes called the ``imaginary time'' construction (see. e.g.,
 \cite{aizenman:1994,nachtergaele:1994}
and the recent review \cite{GoldschmidtUeltschiWindridge}.)

\section{Graphical Calculus for $\U$}\label{sec:graphical_calculus}

In our calculations in large tensor product spaces, the standard tensor notation becomes unwieldy, and it is useful to switch to graphical methods. We review the methods and results detailed in \cite{frenkel:1997}. 
We define $\U$ to be the associative algebra over $\comps (q)$ generated by $K,K^{-1}, E,F$ subject to the following relations:
\begin{align*}
	K^{\pm1} K^{\mp1} & = \id\\
	KE &= q^2 EK \\
	KF & = q^{-2} FK\\
	[E,F] &= \frac{K-K^{-1}}{q-q^{-1}}
\end{align*}
The symbol $q$, unless otherwise indicated, will be an indeterminate. However, for our results using the Perron-Frobenius theorem, we will need to specialize to $q \in \reals$ such that $0<q$. 

$\U$ will act on tensor product representations via the coproduct:
\begin{align*}
	\Delta(K^{\pm1}) & = K^{\pm1} \otimes K^{\pm1}\\
	\Delta(E) &= E \otimes 1 + K \otimes E \\
	\Delta(F) &= F \otimes K^{-1} + 1 \otimes F
\end{align*}

The irreducible representations of $\U$, $V(n)$ for $0\geq n$, are given as the vector space generated by $\{v^{-n}, v^{-n+2}, \ldots, v^{n-2}, v^n\}$ with the following $\U$ action:
\begin{align*}
	K^{\pm1} v^m & = q^{\pm m} v^m \\
	E v^m & = \left[\frac{n-m}{2}\right] v^{m+2}\\
	F v^m & = \left[\frac{n+m}{2}\right] v^{m-2}
\end{align*}
where $[n] = q^{-n+1} + q^{-n+3} + \cdots + q^{n-3} + q^{n-1} = \frac{q^n-q^{-n}}{q-q^{-1}}$. Graphically, one represents this representation as 
\begin{equation*}
	v^m = 
	\raisebox{-.25cm}{
\begin{tikzpicture}
	\node (n1) at (0,0) [draw,thick,rectangle,minimum height=.5cm,minimum width=1.75cm] {$n$};
	\draw[<-] (n1.160) -- node[anchor=west] {\tiny{$\cdots$}} ++(0,.5); 
	\draw[<-] (n1.105) -- ++(0,.5); 
	\draw[->] (n1.85) --  ++(0,.5); 
	\draw[->] (n1.20) -- node[anchor=east] {\tiny{$\cdots$}} ++(0,.5); 
\end{tikzpicture}
} \textrm{ where } m = \# \,
\raisebox{-.25cm}{
\begin{tikzpicture}
	\draw[thick,->] (0,0) -- (0,.5);
\end{tikzpicture} 
}
- \# \,
\raisebox{-.25cm}{
\begin{tikzpicture}
	\draw[thick,<-] (0,0) -- (0,.5);
\end{tikzpicture} 
}
\end{equation*}
By convention, for $V(1)$, we do not draw a box. $V(n)$ can be realized as the $q$-symmetric subset in $V(1)^{\otimes n}$, and thus the box in our picture represents not only the irreducible representation $V(n)$, but also the operation of $q$-symmetrization. This is made explicit through the notion of the Jones-Wenzl projector, which is heuristically a projection from $V(1)^{\otimes n}$ to $V(n)$ composed with an injection from $V(n)$ to $V(1)^{\otimes n}$ such that their composition is precisely the operation of $q$-symmetrization.
Let $s_i\in \{-1,+1\}$, $s=(s_1,s_2, \ldots, s_n)$, $|s|= \sum_i s_i$, $\|s\|_+ = \sum_{i<j} \{s_i > s_j\}$, and $\|s\|_- = \sum_{i<j} \{s_i<s_j\}$ where $\{a>b\} = 1$ if $a>b$ and 0 if $a\leq b$. Then define the maps $T^*_n : V(n) \to V(1)^{\otimes n}$ and $T_n: V(1)^{\otimes n} \to V(n)$ to be given by
\begin{align*}
	&T^*_n (v^m) = {\qbinom{n}{\frac{n-m}{2}}}^{-1} \sum_{s,|s|=m} q^{\|s\|_-} v^{s_1} \otimes v^{s_2} \otimes \cdots \otimes v^{s_n} \\
	&T_n (v^{s_1} \otimes \cdots \otimes v^{s_n}) = q^{\|s\|_+} v^{|s|}
\end{align*}
The Jones-Wenzl projector, denoted by $p_n$, is then 
\begin{equation*}
	p_n = T^*_n \circ T_n = 
	\raisebox{-.75cm}{
	\begin{tikzpicture}
		\node (n1) at (0,0) [draw,thick,rectangle,minimum height=.5cm,minimum width=1cm] {$n$};
		\draw[] (n1.140) -- node[anchor=west] {\tiny{$\cdots$}} ++(0,.5);
		\draw[] (n1.40) -- ++(0,.5);
		\draw[] (n1.-140) -- node[anchor=west] {\tiny{$\cdots$}} ++(0,-.5);
		\draw[] (n1.-40) -- ++(0,-.5);
	\end{tikzpicture}
	} = 
	\raisebox{-.65cm}{
	\begin{tikzpicture}
		\node(n1) at (0,0) [draw,thick,rectangle,minimum width=1cm] {};
		\draw (n1) -- node[auto] {$n$} ++(0,.75);
		\draw (n1) -- ++(0,-.75);
	\end{tikzpicture}
	}
\end{equation*}
where in the second diagram, we have chosen to omit the label for the box when the context is clear. 
\subsection{The Algebra of Intertwiners, $\textbf{End}_{\U}(V(1)^{\otimes n})$}
It is a well known fact first described by Jimbo \cite{jimbo:1986} that the Temperley-Lieb Algebra, denoted by $TL_n$, is the algebra of intertwiners for the $\U$ action. It is generated in the following way: first we define the two intertwiners $\delta:V(0) \to V(1)\otimes V(1)$ and $\varepsilon:V(1) \otimes V(1) \to V(0)$ such that
\begin{equation*}
	\delta(1) = \down \up - q^{-1} \up \down = 
	\begin{tikzpicture}
		\draw[] (0,0) arc (180:0:.25);
	\end{tikzpicture}
\end{equation*}
\begin{align*}
	\varepsilon(v^1\otimes v^1) =& \varepsilon(v^{-1} \otimes v^{-1}) = 0 = 
	\raisebox{-.25cm}{
	\begin{tikzpicture}
		\draw[] (0,0) arc (-180:0:.25);
		\draw[->] (0,0) -- (0,.5);
		\draw[->] (.5,0) -- (0.5,.5);
	\end{tikzpicture}
	} = 
	\raisebox{-.25cm}{
	\begin{tikzpicture}
		\draw[] (0,0) arc (-180:0:.25);
		\draw[<-] (0,0) -- (0,.5);
		\draw[<-] (.5,0) -- (0.5,.5);
	\end{tikzpicture}
	} \\
	&\varepsilon(v^1\otimes v^{-1}) = -q = 
	\raisebox{-.25cm}{
	\begin{tikzpicture}
		\draw[] (0,0) arc (-180:0:.25);
		\draw[->] (0,0) -- (0,.5);
		\draw[<-] (.5,0) -- (0.5,.5);
	\end{tikzpicture}
	} \\
	&\varepsilon(v^{-1} \otimes v^1) = 1 = 
	\raisebox{-.25cm}{
	\begin{tikzpicture}
		\draw[] (0,0) arc (-180:0:.25);
		\draw[<-] (0,0) -- (0,.5);
		\draw[->] (.5,0) -- (0.5,.5);
	\end{tikzpicture}
	} 
\end{align*}
\begin{equation*}
	 \varepsilon \circ \delta = -[2] = 
	 \raisebox{-.25cm}{
	 \begin{tikzpicture}
		 \draw[] (0,0) arc (360:0:.25);
	 \end{tikzpicture}
	 }
\end{equation*}
\begin{equation}
	(\varepsilon \otimes \id_{V(1)}) \circ (\id_{V(1)} \otimes \delta) = 
	\raisebox{-.5cm}{
	\begin{tikzpicture}
		\draw[] (0,.5) -- (0,0) arc (-180:0:.25) arc (180:0:.25) -- (1,-.5);
	\end{tikzpicture}
	} =
	\raisebox{-.5cm}{
	\begin{tikzpicture}
		\draw[] (0,.5) -- (0,-.5);
	\end{tikzpicture}
	} = \id_{V(1)} = 
	\raisebox{-.5cm}{
	\begin{tikzpicture}
		\draw[] (0,-.5) -- (0,0) arc (180:0:.25) arc (-180:0:.25) -- (1,.5);
	\end{tikzpicture}
	} = (\id_{V(1)} \otimes \varepsilon) \circ ( \delta \otimes \id_{V(1)})
		\label{eqn:isotopy}
	\end{equation}
One may interpret the arcs $\delta$ and $\varepsilon$ as $q$-antisymmetrizers. 

In contrast to \cite{frenkel:1997}, our diagrams are read from top to bottom, and that composition of diagrams is obtained by concatenation below. 
Equation (\ref{eqn:isotopy}) provides an isotopy relation which allows us to take a meandering line and pull it taut without changing the meaning of the diagram. Hence, if we define $U_i: V(1)^{\otimes n} \to V(1)^{\otimes n}$ by 
\begin{equation*}
	U_i = \id_1 \otimes \cdots \otimes \id_{i-1} \otimes (\varepsilon \circ \delta) \otimes \id_{i+2} \otimes \cdots \id_{n} = 
	\raisebox{-.75cm}{
	\begin{tikzpicture}
		\draw[] (0,1) -- node[anchor=west] {\tiny{$\cdots$}} (0,0) node[anchor=north] {\tiny{$1$}};
		\draw[] (.6,1) -- (.6,0);
		\draw[] (1,1) arc (-180:0:.25);
		\draw[] (1,0) node[anchor=north] {\tiny{$i$}} arc (180:0:.25) node[anchor=north] {\tiny{$i+1$}};
		\draw[] (1.9,1) -- node[anchor=west] {\tiny{$\cdots$}} (1.9,0);
		\draw[] (2.5,1) -- (2.5,0) node[anchor=north] {\tiny{$n$}};
	\end{tikzpicture}
	}
\end{equation*}
Now with the isotopy relations and the evaluation of the bubble, we see that $U_1, U_2, \ldots, U_{n-1}$ are the generators of the Temperley-Lieb algebra subject to the relations:
\begin{align*}
	U_i U_{i\pm1} U_i & = U_i \\
	U_i^2 & = -[2] U_i \\ 
	[U_i,U_j] &= 0 \textrm{ for } |i-j|>1
\end{align*}
There is a convenient basis of $TL_n$ called the dual canonical basis of $TL_n$, denoted $B^{TL}_n$, which consists of all diagrams of $2n$ points, $n$ points in a row above and $n$ points in a row below, such that points are pairwise connected by a line in such a fashion that lines do not intersect. The Jones-Wenzl projector intertwines the $\U$ action, since in the Clebsch-Gordan series for $V(1)^{\otimes n} p_n$ is the projection to the unique copy of the irreducible representation $V(n)$.  Thus, it can be represented as a sum of elements from $B^{TL}_n$. 
We now review useful properties of the Jones-Wenzl projector.
\subsection{The Jones-Wenzl Projector and Formulas}
The Jones-Wenzl projectors, $p_n$, satisfy some very nice conditions that make them particularly useful in our calculations. 
As a property of the interpretation as $q$-symmetrization and $q$-antisymmetrization, we have
\begin{equation}
	\raisebox{-.75cm}{
	\begin{tikzpicture}
		\node (n1) at (0,0) [draw,thick,rectangle,minimum width=1cm] {};
		\draw[] (n1.90) -- node[auto] {$n$} ++(0,.5);
		\draw[] (n1.-140) -- node[auto,swap] {$n-2$} ++(0,-.5);
		\draw[] (n1.-90) to [bend right=90] (n1.-20);
	\end{tikzpicture}
	} = 
	\raisebox{-.75cm}{
	\begin{tikzpicture}
		\node (n1) at (0,0) [draw,thick,rectangle,minimum width=1cm] {};
		\draw[] (n1.-90) -- node[auto] {$n$} ++(0,-.5);
		\draw[] (n1.40) -- node[auto,swap] {$n-2$} ++(0,.5);
		\draw[] (n1.160) to [bend left=90] (n1.90);
	\end{tikzpicture}
	} = 0
	\label{eqn:antisymm}
\end{equation}
For $m\leq n$, we have
\begin{equation}
	\raisebox{-1cm}{
	\begin{tikzpicture}
		\node (n1) at (0,0) [draw,thick,rectangle,minimum width=1cm] {};
		\draw[] (n1.-90) -- node[auto] {$n$} ++(0,-.5);
		\draw[] (n1.30) -- node[auto,swap] {$n-m$} ++(0,1);
		\node (n2) at (-.20,.75) [draw,thick,rectangle,minimum width=.6cm] {};
		\draw[] (n1.135) -- ++(0,.5);
		\draw[] (n1.135) ++(0,.75) -- ++(0,.25);
	\end{tikzpicture}
	} =  
	\raisebox{-1cm}{
	\begin{tikzpicture}
		\node (n1) at (0,0) [draw,thick,rectangle,minimum width=1cm] {};
		\draw[] (n1.90) -- node[auto] {$n$} ++(0,.5);
		\draw[] (n1.-55) -- ++(0,-.5);
		\node (n2) at (.20,-.75) [draw,thick,rectangle,minimum width=.6cm] {};
		\draw[] (n1.-150) -- node[auto,swap] {$n-m$} ++(0,-1);
		\draw[] (n1.-55) ++(0,-.75) -- ++(0,-.25);
	\end{tikzpicture}
	} = 
	\raisebox{-.75cm}{
	\begin{tikzpicture}
		\node(n1) at (0,0) [draw,thick,rectangle,minimum width=1cm] {};
		\draw[] (n1.-90) -- node[auto] {$n$} ++(0,-.5);
		\draw[] (n1.90) -- ++(0,.5);
	\end{tikzpicture}
	}
	\label{eqn:symm}
\end{equation}

The projectors, $p_n$, have positive coefficients when expanded into the Temperley-Lieb basis.

\begin{thm}
	We have the decomposition
	\begin{equation}
		[n]! p_n = \sum_{d\in B^{TL}_n} P(d) d
		\label{eqn:JWPosIntDecomp}
	\end{equation}
	where the coefficients $P(d) \in q^{n(n-1)/2} \nats [q^{-1}]$.
\end{thm}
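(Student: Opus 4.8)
The plan is to prove the decomposition (\ref{eqn:JWPosIntDecomp}) by induction on $n$ using Wenzl's recursion for the Jones--Wenzl projectors. Set $a_n := [n]!\,p_n$, regarded as an element of $TL_n$ and expanded in the dual canonical basis $\B{n}$; the assertion is precisely that every coefficient of $a_n$ in $\B{n}$ lies in $q^{n(n-1)/2}\nats[q^{-1}]$. The base case $n=1$ is immediate: $p_1=\id_{V(1)}$ is the unique diagram in $\B{1}$, with coefficient $1\in q^0\nats[q^{-1}]$. For the inductive step I would first establish Wenzl's recursion in the sign convention of this paper (loop value $-[2]$, $U_i^2=-[2]U_i$), namely $p_n=(p_{n-1}\otimes\id_{V(1)})+\tfrac{[n-1]}{[n]}(p_{n-1}\otimes\id_{V(1)})\,U_{n-1}\,(p_{n-1}\otimes\id_{V(1)})$ with $p_{n-1}$ acting on the first $n-1$ strands; this is checked directly from $p_nU_{n-1}=0$ and $p_n^2=p_n$, or by translating the usual positive-loop recursion via $e_i\mapsto -U_i$. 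Multiplying through by $[n]!$ and using $p_{n-1}^2=p_{n-1}$ to replace the sandwiched projectors by copies of $a_{n-1}$ converts this into $a_n=[n]\,(a_{n-1}\otimes\id_{V(1)})+\tfrac{1}{[n-2]!}\,(a_{n-1}\otimes\id_{V(1)})\,U_{n-1}\,(a_{n-1}\otimes\id_{V(1)})$.

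The first term is harmless: by the inductive hypothesis $a_{n-1}\otimes\id_{V(1)}=\sum_{d'\in\B{n-1}}P(d')\,(d'\otimes\id_{V(1)})$ with $P(d')\in q^{(n-1)(n-2)/2}\nats[q^{-1}]$, the diagrams $d'\otimes\id_{V(1)}$ are distinct elements of $\B{n}$, and multiplying by $[n]$ (nonnegative coefficients, top degree $n-1$) carries each coefficient into $q^{(n-1)(n-2)/2+(n-1)}\nats[q^{-1}]=q^{n(n-1)/2}\nats[q^{-1}]$, matching the sharp top degree realized by the identity diagram, whose coefficient is $[n]!$. So the work is in the second, quadratic term. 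Here the crucial structural observation is that \emph{no closed loops arise} when one multiplies out $(a_{n-1}\otimes\id_{V(1)})\,U_{n-1}\,(a_{n-1}\otimes\id_{V(1)})$ diagram by diagram: strand $n$ is a through-strand in both $a_{n-1}\otimes\id_{V(1)}$ factors, every arc of the central $U_{n-1}$ is joined through strand $n$ to the overall top or the overall bottom, and each summand $d_1\otimes\id_{V(1)}$ and $d_2\otimes\id_{V(1)}$ is itself loop-free; hence each product $(d_1\otimes\id_{V(1)})\,U_{n-1}\,(d_2\otimes\id_{V(1)})$ equals a single element of $\B{n}$ with coefficient $+1$, never a factor of $-[2]$. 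Consequently $(a_{n-1}\otimes\id_{V(1)})\,U_{n-1}\,(a_{n-1}\otimes\id_{V(1)})$ already has nonnegative coefficients, of top degree at most $(n-1)(n-2)$.

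It remains to handle the division by $[n-2]!$. For this I would use the partial-contraction identities for the Jones--Wenzl projector that follow from (\ref{eqn:antisymm}), (\ref{eqn:symm}), and the zig-zag relation (\ref{eqn:isotopy}): writing $U_{n-1}=(\id_{n-2}\otimes\delta)\circ(\id_{n-2}\otimes\varepsilon)$ and contracting the last leg of one copy of $p_{n-1}$ against the free strand, one rewrites $(p_{n-1}\otimes\id_{V(1)})\,U_{n-1}\,(p_{n-1}\otimes\id_{V(1)})$ in terms of $p_{n-2}$ together with a single cup--cap pair and an explicit scalar that is a quantum integer; this is exactly what supplies the $[n-2]!$-divisibility together with a quotient whose coefficients are again a nonnegative combination of quantum factorials, and the crude degree count then gives top degree at most $(n-2)(n+1)/2<n(n-1)/2$ for this term. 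Assembling the two contributions, adding (never subtracting) coefficients of coinciding diagrams of $\B{n}$, completes the induction. The main obstacle is precisely this last coordination: one must perform the projector collapse \emph{before} expanding into diagrams, so that both the loop-free positivity and the cancellation of the denominator $[n-2]!$ into a positive quotient are visible simultaneously; a premature expansion re-introduces apparent negative contributions that only cancel in aggregate and obscures the divisibility. As an alternative that bypasses the induction, one may instead invoke the explicit formula of Frenkel and Khovanov expressing the coefficient of a given non-crossing diagram $d$ in $p_n$ as a monomial in $q$ times a ratio of quantum factorials read off from the cup--cap structure of $d$; multiplying by $[n]!$ clears the denominators, and the standard degree bound yields the claim directly. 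I would present the recursion argument as the primary proof and remark on this alternative.
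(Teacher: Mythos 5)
The paper does not actually prove this statement; it cites Frenkel and Khovanov \cite{frenkel:1997}, so your proposal is attempting something the paper leaves to the literature. Unfortunately the inductive argument has a genuine gap at its central step. Your reduction of Wenzl's relation to
$a_n=[n](a_{n-1}\otimes\id)+\tfrac{1}{[n-2]!}(a_{n-1}\otimes\id)\,U_{n-1}\,(a_{n-1}\otimes\id)$ with $a_n=[n]!\,p_n$ is correct, and the treatment of the linear term (including the degree count) is fine. But the claim that \emph{no closed loops arise} in the quadratic term is false. Your argument only excludes loops that pass through the cap--cup pair of $U_{n-1}$; it does not exclude loops formed by a cup on the bottom of $d_1\otimes\id$ and a cap on the top of $d_2\otimes\id$ joined to each other by the $n-2$ through-strands of $U_{n-1}$. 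Concretely, for $n=4$ take $d_1=d_2=U_1\in B_3^{TL}$: then $(U_1\otimes\id)\,U_3\,(U_1\otimes\id)=U_1^2U_3=-[2]\,U_1U_3$, and since $U_1$ occurs in $a_3=[3][2]\,\id+[2]^2U_1+[2]^2U_2+[2](U_1U_2+U_2U_1)$ with nonzero coefficient $[2]^2$, a genuinely negative contribution $-[2]^5\,U_1U_3$ appears in the naive expansion of the quadratic term. So that term does \emph{not} "already have nonnegative coefficients," and positivity of the coefficient of each diagram can only emerge after cancellation against other products landing on the same diagram.

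You do acknowledge this tension at the end ("a premature expansion re-introduces apparent negative contributions that only cancel in aggregate"), but the proposed remedy is not carried out and, as stated, is not a correct identity: $(p_{n-1}\otimes\id)\,U_{n-1}\,(p_{n-1}\otimes\id)$ does not collapse to "$p_{n-2}$ together with a single cup--cap pair and an explicit scalar" — one needs something like the single-clasp expansion of Lemma \ref{lem:singleclasp} applied to both projectors, after which one must still prove that the resulting rational coefficients, divided by $[n-2]!$, combine into elements of $q^{n(n-1)/2}\nats[q^{-1}]$ diagram by diagram. That divisibility-with-positivity is precisely the content of the theorem, so the proof as written is circular at the point where it matters. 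The degree bookkeeping is consistent, but degree bounds cannot substitute for the missing positivity argument. To repair this you would need either the explicit Frenkel--Khovanov closed formula for the coefficient of each cup--cap diagram in $p_n$ (your "alternative," which is essentially what the paper cites), or a refinement of the recursion that tracks, for each fixed target diagram $d\in B_n^{TL}$, the full signed sum of products $(d_1\otimes\id)U_{n-1}(d_2\otimes\id)$ landing on $d$ and shows it is a nonnegative multiple of $[n-2]!$.
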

\begin{proof}
See \cite{frenkel:1997}.
\end{proof}


\begin{lemma}[Wenzl Relation]
\label{lem:JonesWenzl}
For each $n \in \N$,
\begin{align*}
\begin{minipage}{1.25cm}
\begin{tikzpicture}[xscale=1,yscale=1,bend angle=90,
spin/.style={rectangle,draw,thick,minimum width=10mm},scale=1]
\node at (0,0) [spin] (n) {};
\draw (n.90) -- node[left] {$n$} +(0,0.75);
\draw (n.-90) -- node[left] {$n$} +(0,-0.75);
\end{tikzpicture}
\end{minipage}\quad  
&= \begin{minipage}{2.5cm}
	\begin{tikzpicture}[xscale=1,yscale=1,bend angle=90,
		spin/.style={rectangle,draw,thick,minimum width=10mm},scale=1]
		\node at (0,0) [spin] (n) {};
		\draw (n.90) -- node[left] {$n-1$} +(0,0.75);
		\draw (n.-90) -- node[left] {$n-1$} +(0,-0.75);
		\draw (1,1) -- node[right] {$1$} (1,-1);
	\end{tikzpicture}
\end{minipage} \quad
+ \frac{[n-1]}{[n]} \quad
\begin{minipage}{3.5cm}
\begin{tikzpicture}[xscale=1,yscale=1,
	spin/.style={rectangle,draw,thick,minimum width=10mm},scale=1]
\node at (0,0.75) [spin] (n1) {};
\node at (0,-0.75) [spin] (n2) {};
\draw (n1.90) -- node[above=2mm, left=2mm] {$n-1$} +(0,0.5);
\draw (n2.-90) -- node[below=2mm, left=2mm] {$n-1$} +(0,-0.5);
\draw (n1.-60) .. controls +(0.5,-0.5) and +(0,-0.5) ..  node[above=2mm, right=3mm] {$1$} +(0.8,0.6);
\draw (n2.60) .. controls +(0.5,0.5) and +(0,0.5) ..  node[below=2mm, right=3mm] {$1$} +(0.8,-0.6);
\draw (n1.-150) -- node[right] {$n-2$} (n2.150); 
\end{tikzpicture}
\end{minipage} \\
&= \quad
\begin{minipage}{2.5cm}
\begin{tikzpicture}[xscale=1,yscale=1,bend angle=90,
spin/.style={rectangle,draw,thick,minimum width=10mm},scale=1]
		\node at (0,0) [spin] (n) {};
		\draw (n.90) -- node[left] {$n-1$} +(0,0.75);
		\draw (n.-90) -- node[left] {$n-1$} +(0,-0.75);
		\draw (1,1) -- node[right] {$1$} (1,-1);
\end{tikzpicture}
\end{minipage} \quad
+  \frac{[n-1]}{[n]} \quad 
\begin{minipage}{3.5cm}
\begin{tikzpicture}[xscale=1,yscale=1,bend angle=90,
spin/.style={rectangle,draw,thick,minimum width=8mm},scale=1]
\node at (0,0.75) [rotate=90,spin] (n1) {};
\node at (0,-0.75) [rotate=-90,spin] (n2) {};
\draw (n1.90) .. controls +(-0.5,0) and +(0,-0.4) .. node[above=2mm, left=2mm] {$n-1$} +(-0.75,0.7);
\draw (n2.-90) .. controls +(-0.5,0) and +(0,0.4) .. node[below=2mm, left=2mm] {$n-1$} +(-0.75,-0.7);
\draw (n1.-150) .. controls +(0.5,0) and +(0.5,0) .. node[right] {$n-2$} (n2.150); 
\draw (n1.-60) .. controls +(0.25,0) and +(0,-0.5) ..  node[above=2mm, right=3mm] {$1$} +(0.8,0.6);
\draw (n2.60) .. controls +(0.25,0) and +(0,0.5) ..  node[below=2mm, right=3mm] {$1$} +(0.8,-0.6);
\end{tikzpicture}
\end{minipage} \quad .
\end{align*}
where the last equality follows by the isotopy relation given in equation (\ref{eqn:isotopy}). 
\end{lemma}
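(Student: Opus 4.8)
This is Wenzl's recursion for the Jones--Wenzl projectors, so the plan is an induction on $n$ whose inductive step checks that the right-hand side has the properties that characterize $p_n$ inside $\textbf{End}_{\U}(V(1)^{\otimes n})$: it is a nonzero idempotent annihilated on the left by every generator $U_i$. (These do characterize $p_n$: if $U_iR=0$ for all $i$ then the image of $R$ lies in $\bigcap_i\ker U_i$, which is the $q$-symmetric subspace $V(n)$ because $\ker U_i$ consists of the tensors that are $q$-symmetric in the pair of positions $i,i+1$; since $R$ is then a nonzero idempotent intertwiner with image in the irreducible $V(n)$, which occurs once in $V(1)^{\otimes n}$, Schur's lemma forces $R=p_n$.) Write the right-hand side as $R_n=(p_{n-1}\otimes\id_1)+\tfrac{[n-1]}{[n]}\,D$, where $D=(p_{n-1}\otimes\id_1)\,U_{n-1}\,(p_{n-1}\otimes\id_1)$ is the correction diagram. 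Two facts are immediate: first, $R_n\neq 0$, since it restricts to the identity on the copy of $V(n)$ inside $V(n-1)\otimes V(1)$, on which the second term vanishes; second, $U_iR_n=0$ for $1\le i\le n-2$, since each such $U_i$ is a generator of $TL_{n-1}$ with $U_ip_{n-1}=0$ by induction, so $U_i(p_{n-1}\otimes\id_1)=0$ kills both summands of $R_n$. Everything therefore reduces to the single relation $U_{n-1}R_n=0$ together with idempotency.

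Both of these follow from one computation, the bubble-closing (partial-trace) identity for $p_{n-1}$. Closing off the last strand of $p_{n-1}$ with a $\U$-equivariant loop yields $\tfrac{[n]}{[n-1]}\,p_{n-2}$: indeed $p_{n-1}$ is the projection onto $V(n-1)$, which lies inside $V(n-2)\otimes V(1)$ with $V(n-2)$ occurring exactly once in $V(1)^{\otimes(n-2)}$, so the partial quantum trace of $p_{n-1}$ is by Schur a scalar multiple of $p_{n-2}$, and the full quantum trace identifies the scalar as $\dim_q V(n-1)/\dim_q V(n-2)=[n]/[n-1]$. Substituting this into the bubble in the middle of $U_{n-1}(p_{n-1}\otimes\id_1)U_{n-1}$ gives $U_{n-1}(p_{n-1}\otimes\id_1)U_{n-1}=-\tfrac{[n]}{[n-1]}(p_{n-2}\otimes\id_2)U_{n-1}$, the sign being forced by the loop value $U_i^2=-[2]U_i$. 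Using that $p_{n-2}\otimes\id_2$ commutes with $U_{n-1}$ and that $(p_{n-2}\otimes\id_1)p_{n-1}=p_{n-1}$, this collapses to $U_{n-1}D=-\tfrac{[n]}{[n-1]}\,U_{n-1}(p_{n-1}\otimes\id_1)$ and $D^2=-\tfrac{[n]}{[n-1]}\,D$. Since moreover $(p_{n-1}\otimes\id_1)D=D(p_{n-1}\otimes\id_1)=D$, expanding the square gives $R_n^2=R_n$, while $U_{n-1}R_n=U_{n-1}(p_{n-1}\otimes\id_1)+\tfrac{[n-1]}{[n]}U_{n-1}D=U_{n-1}(p_{n-1}\otimes\id_1)-U_{n-1}(p_{n-1}\otimes\id_1)=0$. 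The coefficient $\tfrac{[n-1]}{[n]}$ is exactly what makes these cancellations happen, which is why it has to appear; combined with $R_n\neq 0$ this yields $R_n=p_n$.

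The main obstacle is the bubble-closing identity with its precise coefficient: the loop value $-[2]$ and the twist built into $\delta$ and $\varepsilon$ must be tracked carefully so that the final cancellations are exact, since the whole recursion hinges on the constant $[n-1]/[n]$; everything else is routine bookkeeping inside $TL_n$. Finally, the second equality in the lemma---that the two displayed pictures of the correction term coincide---is purely topological: one pulls the meandering strand taut using the isotopy relation~(\ref{eqn:isotopy}), exactly as the lemma states.
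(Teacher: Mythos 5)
Your proof is correct and complete. Note that the paper does not give its own argument for this lemma --- it simply cites \cite{kauffman:1994} --- and what you have written is essentially the standard proof found there and in Wenzl's original paper: characterize $p_n$ as the unique nonzero idempotent annihilated by all the $U_i$, observe that $U_1,\dots,U_{n-2}$ kill the right-hand side for free, and reduce both $U_{n-1}R_n=0$ and $R_n^2=R_n$ to the partial-closure identity for $p_{n-1}$, whose coefficient is fixed by the ratio of closed-loop evaluations. One bookkeeping point: in the convention of this paper, where a closed loop evaluates to $-[2]$ and hence the full closure of $p_k$ equals $(-1)^k[k+1]$, the partial closure of $p_{n-1}$ is $-\tfrac{[n]}{[n-1]}\,p_{n-2}$ rather than $+\tfrac{[n]}{[n-1]}\,p_{n-2}$; the minus sign lives in the partial-trace identity itself (check $n=2$: the closure of $p_1$ is $\varepsilon\circ\delta=-[2]$), not in an extra application of $U_i^2=-[2]U_i$ at the substitution step. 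Since the formula you actually use downstream, $U_{n-1}(p_{n-1}\otimes\id_1)U_{n-1}=-\tfrac{[n]}{[n-1]}(p_{n-2}\otimes\id_2)U_{n-1}$, is the correct one, all of your cancellations go through and the identities $U_{n-1}D=-\tfrac{[n]}{[n-1]}U_{n-1}(p_{n-1}\otimes\id_1)$ and $D^2=-\tfrac{[n]}{[n-1]}D$ are right. A small simplification: once you know $U_iR_n=0$ for all $i$ (so that the image of $R_n$ lies in $V(n)$) and that $R_n$ restricts to the identity on $V(n)\subseteq V(n-1)\otimes V(1)$, idempotency of $R_n$ follows automatically, so the separate verification of $R_n^2=R_n$ is redundant, though harmless.
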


\begin{proof}
See, for example, \cite{kauffman:1994}.
\end{proof}

The following lemma appears in \cite{frenkel:1997} as well as in \cite{kim:2007} where it is called
a ``single clasp expansion.''

\begin{lemma}
\label{lem:singleclasp}
For each $n \in \N$,
\begin{equation*}
\begin{minipage}{2.25cm}
\begin{tikzpicture}[xscale=1,yscale=1,bend angle=90,
spin/.style={rectangle,draw,thick,minimum width=9mm},scale=1]
\node at (0,0) [rotate=-30,spin] (n) {};
\draw (n.135) .. controls +(0.1,0.25) and +(0,-0.5) .. node[above=2mm,left=0mm] {$n-1$} +(0.3,1);
\draw (n.45) .. controls +(0.5,0.4) and +(0.5,0.75) .. node[below=2mm,right=1mm] {$1$} +(0.25,-1);
\draw (n.-90) .. controls +(-0.1,-0.2) and +(0,0.3) .. node[below=1mm,left=0mm] {$n$} +(-0.1,-0.85);
\end{tikzpicture}
\end{minipage} \quad
= \quad
\sum_{k=1}^{n} \frac{[k]}{[n]} \quad 
\begin{minipage}{4cm}
\begin{tikzpicture}[xscale=1,yscale=1,bend angle=90,
spin/.style={rectangle,draw,thick,minimum width=9mm},scale=1]
\node at (0,0) [spin] (n) {};
\draw (n.90) -- node[above=2mm,left=0mm] {$n-1$} +(0,1);
\draw (n.-30) -- node[right=1mm] {$n-k$} +(1,-1.5);
\draw (n.-150) -- node[left=1mm] {$k-1$} +(-1,-1.5);
\draw (-0.5,-1.5) .. controls +(0,0.5) and +(0,0.5) .. node[above] {$1$} (0.5,-1.5);
\end{tikzpicture}
\end{minipage} \quad .
\end{equation*}
\end{lemma}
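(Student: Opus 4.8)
\emph{Proof proposal.} The plan is to prove the identity by induction on $n$, using the Wenzl recursion of Lemma~\ref{lem:JonesWenzl} to peel one strand off the Jones--Wenzl projector $p_n$ that sits at the top of the left-hand diagram. The base case $n=1$ is immediate: the sum has the single term $k=1$ with coefficient $[1]/[1]=1$, and both sides reduce to $\delta$ up to the isotopy relation~(\ref{eqn:isotopy}). Assume the single-clasp expansion holds for $n-1$.

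For the inductive step I would substitute the Wenzl relation for the box $p_n$ into the left-hand diagram. This produces two terms: the straight-through term $p_{n-1}\otimes\id_{V(1)}$, and the turnback term carrying the coefficient $[n-1]/[n]$ and containing two nested copies of $p_{n-1}$ joined by $n-2$ through-strands together with a cup--cap pair. In the straight-through term the spectator strand can be slid around using~(\ref{eqn:isotopy}); after this isotopy the bent strand becomes exactly the cup $\delta$ appearing on the right-hand side, and the term is precisely the $k=n$ summand, whose coefficient $[n]/[n]=1$ matches.

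The turnback term is where the induction hypothesis enters: its upper $p_{n-1}$ box together with the cup--cap sits again in single-clasp position for $n-1$, so the hypothesis rewrites it as $\sum_{j=1}^{n-1}\frac{[j]}{[n-1]}$ times the corresponding $(n-1)$-diagrams. Multiplying by the Wenzl coefficient telescopes cleanly, $\sum_{j=1}^{n-1}\frac{[n-1]}{[n]}\cdot\frac{[j]}{[n-1]}=\sum_{j=1}^{n-1}\frac{[j]}{[n]}$, matching the remaining summands $k=1,\dots,n-1$. What is left is purely diagrammatic: in each term one must reabsorb the leftover lower $p_{n-1}$ and the extra through-strands using the projector-absorption identity~(\ref{eqn:symm}), invoke the vanishing identity~(\ref{eqn:antisymm}) (which is exactly what kills the would-be non-single-clasp contributions, so that only the displayed diagrams survive), and use the bubble evaluation $\varepsilon\circ\delta=-[2]$ to remove any closed loop, so that each term collapses to the $k=j$ shape on the right with no stray scalar.

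I expect the main obstacle to be precisely this bookkeeping in the turnback term: tracking how the $k-1$ and $n-k$ cables produced by the induction hypothesis recombine with the leftover $p_{n-1}$ and the Wenzl cup--cap, and verifying that the boundary cases $j=1$ and $j=n-1$ (where one cable is empty) require no separate argument. As an independent cross-check, one can note that both sides lie in the one-dimensional space of $\U$-intertwiners between the relevant source and $V(n)\otimes V(1)$ (by Schur's lemma together with $V(n)\otimes V(1)\cong V(n+1)\oplus V(n-1)$), so the whole identity reduces to a single scalar identity; evaluating both sides on a highest-weight vector via the explicit formulas for $T_m$ and $T_m^{*}$ turns this into an elementary identity among the $q$-integers $[j]$.
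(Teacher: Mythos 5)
The paper does not actually prove Lemma~\ref{lem:singleclasp}; it simply cites \cite{frenkel:1997} and \cite{kim:2007}, so there is no internal argument to compare against. Your induction on $n$ via the Wenzl relation is precisely the standard proof found in those references, and the skeleton is sound: substituting Lemma~\ref{lem:JonesWenzl} into the bent projector, the identity term $p_{n-1}\otimes\id$ isotopes via (\ref{eqn:isotopy}) to $p_{n-1}$ followed by a cup in the rightmost position, which is the $k=n$ summand with coefficient $[n]/[n]$; in the turnback term the bent strand straightens onto one of the two $p_{n-1}$ boxes, putting \emph{that} box (in the paper's orientation it is the lower one, not the upper one as you write --- a harmless matter of how one draws and isotopes the picture) into single-clasp position for $n-1$, and the remaining $p_{n-2}$ produced by the induction hypothesis is absorbed into the surviving $p_{n-1}$ by (\ref{eqn:symm}), yielding exactly the $k=1,\dots,n-1$ diagrams with the telescoped coefficients $\frac{[n-1]}{[n]}\cdot\frac{[j]}{[n-1]}=\frac{[j]}{[n]}$. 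Two small corrections to your bookkeeping: if the computation is organized this way, no closed loop ever forms, so the bubble evaluation $\varepsilon\circ\delta=-[2]$ is not needed, and likewise (\ref{eqn:antisymm}) is only needed if one expands in a less economical order. Your Schur-lemma cross-check is slightly understated as written: the right-hand side does not term-by-term factor through $p_n\otimes\id$ at the bottom, so one must first argue that the sum does before the ``one-dimensional space'' reduction applies; as a sanity check rather than a proof this is fine.
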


The following corollary was proved in \cite{masbaum:1994} and in \cite{frenkel:1997}.

\begin{cor}
\label{cor:MasbaumVogel}
For each $j,k \in \{0,1,\dots \}$,
\begin{equation*}
\begin{minipage}{3.5cm}
\begin{tikzpicture}[xscale=0.75,yscale=0.75,bend angle=90,
spin/.style={rectangle,draw,thick,minimum width=10mm},scale=1]
\node at (90:2.2cm) [rotate=0,spin] (a) {};
\node at (210:2.2cm) [rotate=-60,spin] (b) {};
\node at (330:2cm) [rotate=60,spin] (c) {};
\node at (0,0.5) [spin] (d) {};
\draw (a.-90) .. controls +(0,-0.5) and +(0,0.5) .. node[left] {$j+k$} (d.150);
\draw (d.-150) .. controls +(0,-0.5) and +(0.7,0.4) .. node[above left] {$k+1$} (b.90);
\draw (d.30) .. controls +(0.5,1.25) and +(-0.6,0.4) .. node[right] {$1$} (c.30);
\draw (d.-30) .. controls +(0,-0.5) and +(-0.7,0.3) .. node[left] {$j$} (c.150);
\end{tikzpicture}
\end{minipage} \quad 
= \
\frac{[k+1]}{[j+k+1]} \quad 
\begin{minipage}{3.5cm}
\begin{tikzpicture}[xscale=0.75,yscale=0.75,bend angle=90,
spin/.style={rectangle,draw,thick,minimum width=10mm},scale=1]
\node at (90:1.8cm) [rotate=0,spin] (a) {};
\node at (210:1.8cm) [rotate=-45,spin] (b) {};
\node at (330:1.8cm) [rotate=45,spin] (c) {};
\draw (a.-150) .. controls +(-0.1,-1) and  +(0.6,0.5) .. node[above left] {$k$} (b.150);
\draw (a.-30) .. controls +(0.1,-1) and +(-0.6,0.5) .. node[above right] {$j$} (c.30);
\draw (b.30) .. controls +(0.7,0.5) and +(-0.7,0.5) .. node[below] {$1$} (c.150);
\end{tikzpicture}
\end{minipage} \quad .
\end{equation*}
\end{cor}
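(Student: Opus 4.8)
The plan is to deduce the identity from the single clasp expansion of Lemma~\ref{lem:singleclasp}, applied to the central projector of the left-hand diagram. That diagram is built from the Jones--Wenzl projector $p_{j+k+1}$ (the box $d$) together with the three outer projectors $p_{j+k}$ (box $a$), $p_{k+1}$ (box $b$) and $p_{j+1}$ (box $c$); its only nontrivial feature is the strand of color $1$ which leaves the top of $d$, loops over $p_{j+k}$, and is fed into $c$. On the bottom of $d$ the $j+k+1$ strands are partitioned into the $k+1$ strands running to $b$ followed by the $j$ strands running to $c$. I would apply Lemma~\ref{lem:singleclasp} to $p_{j+k+1}$, with the $j+k$ strands going up to $a$ playing the role of the bundle of size $n-1$ and the looping strand playing the role of the clasped strand. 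This rewrites the left-hand side as
\[
\sum_{l=1}^{j+k+1} \frac{[l]}{[j+k+1]}\, D_l\,,
\]
where $D_l$ is the diagram obtained by inserting a single turnback at level $l$ among the $k+1$ strands to $b$ followed by the $j$ strands to $c$.

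The crucial point is then that almost every $D_l$ vanishes. If the turnback in $D_l$ is confined to the $k+1$ strands entering $p_{k+1}$ (which happens for $1\le l\le k$), it is absorbed by that projector and $D_l=0$ by the annihilation property~(\ref{eqn:antisymm}); likewise, if it is confined to the strands entering $p_{j+1}$ (which happens for $k+2\le l\le j+k+1$), then $D_l=0$ for the same reason. The unique level at which the turnback bridges the $b$-bundle and the $c$-bundle, and hence survives both Jones--Wenzl projectors, is $l=k+1$. Therefore the left-hand side equals $\frac{[k+1]}{[j+k+1]}\, D_{k+1}$.

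It remains to identify $D_{k+1}$ with the right-hand diagram. Pulling the surviving turnback taut using the isotopy relation~(\ref{eqn:isotopy}), and then collapsing $p_{j+k+1}$ into the three neighbouring projectors via~(\ref{eqn:symm}) together with idempotency of the Jones--Wenzl projectors, straightens $D_{k+1}$ into the triangular network of $p_{j+k}$, $p_{k+1}$ and $p_{j+1}$ joined by bundles of colors $k$, $j$ and $1$ --- which is precisely the right-hand side. Collecting the scalar $[k+1]/[j+k+1]$ yields the claim.

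The step I expect to be the real obstacle is the vanishing argument of the second paragraph: one has to track carefully how the single clasp expansion reroutes the strands of $d$, so that the claimed levels genuinely produce turnbacks internal to $p_{k+1}$ or to $p_{j+1}$ --- in particular one must check that the levels $l\ge k+2$ interact with $p_{j+1}$ rather than only with the ``bare'' $j$ strands --- and to verify that no spurious loop values $-[2]$ or signs survive the straightening. All of this is routine given Lemmas~\ref{lem:JonesWenzl} and~\ref{lem:singleclasp}, and a complete treatment is in \cite{masbaum:1994} and \cite{frenkel:1997}.
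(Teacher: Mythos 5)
Your derivation is correct and is exactly the intended one: the paper states this as a corollary of Lemma \ref{lem:singleclasp} and defers to \cite{masbaum:1994} and \cite{frenkel:1997}, where the argument is precisely the single clasp expansion of the internal projector $p_{j+k+1}$, followed by the annihilation (via equation (\ref{eqn:antisymm})) of every term whose turnback is absorbed entirely by $p_{k+1}$ or by $p_{j+1}$. The unique surviving term $l=k+1$ bridges the two bundles and, after absorbing the residual $p_{j+k}$ into the outer projector, yields the triangle with coefficient $[k+1]/[j+k+1]$, as you say.
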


A natural extension of this formula is the following.

\begin{prop}
\label{prop:FK}
For any $j,k, \ell \in \{0,1,2,\dots \}$, we have
\begin{equation*}
\begin{minipage}{3.5cm}
\begin{tikzpicture}[xscale=0.75,yscale=0.75,bend angle=90,
spin/.style={rectangle,draw,thick,minimum width=10mm},scale=1]
\node at (90:2.2cm) [rotate=0,spin] (a) {};
\node at (210:2.2cm) [rotate=-60,spin] (b) {};
\node at (330:2cm) [rotate=60,spin] (c) {};
\node at (0,0.5) [spin] (d) {};
\draw (a.-90) .. controls +(0,-0.5) and +(0,0.5) .. node[left] {$j+k$} (d.150);
\draw (d.-150) .. controls +(0,-0.5) and +(0.7,0.4) .. node[above left] {$k+\ell$} (b.90);
\draw (d.30) .. controls +(0.65,1.5) and +(-0.6,0.5) .. node[right] {$\ell$} (c.30);
\draw (d.-30) .. controls +(0,-0.5) and +(-0.7,0.3) .. node[left] {$j$} (c.150);
\end{tikzpicture}
\end{minipage} \quad 
= \
\frac{[j+k]!\, [k+\ell]!}{[k]!\, [j+k+\ell]!} \quad 
\begin{minipage}{3.5cm}
\begin{tikzpicture}[xscale=0.75,yscale=0.75,bend angle=90,
spin/.style={rectangle,draw,thick,minimum width=10mm},scale=1]
\node at (90:1.8cm) [rotate=0,spin] (a) {};
\node at (210:1.8cm) [rotate=-45,spin] (b) {};
\node at (330:1.8cm) [rotate=45,spin] (c) {};
\draw (a.-150) .. controls +(-0.1,-1) and  +(0.6,0.5) .. node[above left] {$k$} (b.150);
\draw (a.-30) .. controls +(0.1,-1) and +(-0.6,0.5) .. node[above right] {$j$} (c.30);
\draw (b.30) .. controls +(0.8,0.5) and +(-0.8,0.5) .. node[below] {$\ell$} (c.150);
\end{tikzpicture}
\end{minipage} \quad .
\end{equation*}
\end{prop}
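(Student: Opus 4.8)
The plan is to prove Proposition \ref{prop:FK} by induction on $\ell$, with Corollary \ref{cor:MasbaumVogel} serving as the engine of the inductive step. First observe that the statement is the expected ``fusion'' identity: both sides are $\U$-intertwiners with the same external legs ($V(j+k)$, $V(k+\ell)$, $V(j+\ell)$), and the space of such intertwiners realizing the triangle coupling of these three representations is at most one-dimensional (the multiplicity of any irreducible in a triple tensor product of $\U$-modules is $0$ or $1$). Hence the two sides are automatically proportional, and the whole content of the proposition is the value of the scalar. The case $\ell=1$ is exactly Corollary \ref{cor:MasbaumVogel}, since $\frac{[j+k]!\,[k+1]!}{[k]!\,[j+k+1]!}=\frac{[k+1]}{[j+k+1]}$, and the case $\ell=0$ is trivial: the coefficient is $1$, the $\ell$-strand bundle running from the central box $p_{j+k+\ell}$ to the box $p_{j+\ell}$ is empty, and $p_{j+k}^2=p_{j+k}$ collapses the remaining two boxes into the right-hand diagram.

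For the inductive step, assume the identity holds for $\ell$ and consider the left-hand diagram for $\ell+1$. Every sub-bundle emanating from the Jones--Wenzl box $d=p_{j+k+\ell+1}$ is automatically $q$-symmetric, so we may insert a projector $p_{\ell+1}$ on the $(\ell+1)$-strand bundle running to the box $c=p_{j+\ell+1}$, then apply the Wenzl relation (Lemma \ref{lem:JonesWenzl}) to this small projector. This produces a ``through'' term, in which one strand passes straight, and a turnback term weighted by $[\ell]/[\ell+1]$; in the turnback term the cap joins two legs of the box $c$, so by the annihilation identity \eqref{eqn:antisymm} that term vanishes. Pulling the surviving strand taut with the isotopy \eqref{eqn:isotopy}, the ``through'' term is the left-hand diagram of Corollary \ref{cor:MasbaumVogel} with parameters specialized to $(j,k+\ell)$ in place of $(j,k)$; applying the corollary yields the factor $\frac{[k+\ell+1]}{[j+k+\ell+1]}$ times the left-hand diagram for $\ell$, which by the inductive hypothesis equals the right-hand diagram for $\ell+1$ (the extra strand being absorbed into the fattened boxes) times $\frac{[k+\ell+1]}{[j+k+\ell+1]}\cdot\frac{[j+k]!\,[k+\ell]!}{[k]!\,[j+k+\ell]!}=\frac{[j+k]!\,[k+\ell+1]!}{[k]!\,[j+k+\ell+1]!}$. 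Equivalently, iterating Corollary \ref{cor:MasbaumVogel} with parameters $(j,k),(j,k+1),\dots,(j,k+\ell-1)$ makes the coefficients telescope, $\prod_{m=0}^{\ell-1}\frac{[k+m+1]}{[j+k+m+1]}=\frac{[j+k]!\,[k+\ell]!}{[k]!\,[j+k+\ell]!}$, giving the claimed constant.

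The main obstacle is the diagram surgery in the inductive step: one must verify carefully that, after inserting $p_{\ell+1}$ and invoking the Wenzl relation, the resulting geometry is genuinely that of Corollary \ref{cor:MasbaumVogel} with a strand bundle of width $k+\ell$ rather than $1$, and that the turnback contribution is the \emph{only} term killed by \eqref{eqn:antisymm}. If one wishes to sidestep this bookkeeping, the one-dimensionality observation above provides an alternative: the two sides are proportional for structural reasons, and the proportionality constant can be computed by evaluating both sides on a highest-weight vector using the explicit formulas for $T_n$ and $T_n^*$, which reduces the statement to a $q$-binomial identity (at the cost of a heavier computation than the inductive argument).
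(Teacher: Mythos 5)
Your overall plan --- induction on $\ell$ with Corollary \ref{cor:MasbaumVogel} as the base case and the telescoping product $\prod_{m=0}^{\ell-1}\frac{[k+m+1]}{[j+k+m+1]}=\frac{[j+k]!\,[k+\ell]!}{[k]!\,[j+k+\ell]!}$ supplying the constant --- is the right one (the paper itself gives no argument here; it simply cites Lemma 3.16 of Frenkel--Khovanov). But the mechanism you describe for the inductive step does not work. Inserting $p_{\ell+1}$ on the $(\ell+1)$-strand bundle between $d=p_{j+k+\ell+1}$ and $c$ and expanding it by the Wenzl relation is a null operation: the turnback term indeed vanishes by \eqref{eqn:antisymm}, but the ``through'' term is $p_{\ell}\otimes\mathrm{id}$, which is re-absorbed into the two adjacent large projectors by \eqref{eqn:symm}, returning exactly the diagram you started with. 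In particular this manipulation cannot turn the left-hand diagram for $\ell+1$ into the left-hand diagram of Corollary \ref{cor:MasbaumVogel} with parameters $(j,k+\ell)$: that configuration requires a \emph{single} box $p_{j+k+\ell}$ receiving $j+k+\ell$ strands from the central projector with only \emph{one} strand curving around to $c$, whereas your diagram has $j+k$ strands going to $a$ and $\ell+1$ strands going to $c$, and nothing in your surgery creates the required intermediate box.

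The step can be repaired, and the repair shows why your parameter guess $(j,k+\ell)$ was nevertheless correct. Use the absorption identity \eqref{eqn:symm} to insert, for free, a projector $p_{j+k+\ell}$ across the $j+k+\ell$ \emph{consecutive top legs} of the central box consisting of the $(j+k)$-bundle to $a$ together with $\ell$ of the $\ell+1$ strands running to $c$. The resulting picture is verbatim the left-hand side of Corollary \ref{cor:MasbaumVogel} with $(j',k')=(j,k+\ell)$, where the role of the box ``$a$'' in the corollary is played by the newly inserted internal $p_{j+k+\ell}$ (whose far legs split as $j+k$ to $a$ and $\ell$ to $c$). Applying the corollary yields $\frac{[k+\ell+1]}{[j+k+\ell+1]}$ times the left-hand diagram for $\ell$ \emph{together with one extra strand joining $b$ directly to $c$}; the inductive hypothesis then converts this to $\frac{[j+k]!\,[k+\ell]!}{[k]!\,[j+k+\ell]!}$ times the triangle in which that extra strand fattens the $b$--$c$ bundle from $\ell$ to $\ell+1$, and the coefficients multiply to the claimed constant. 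Your fallback via one-dimensionality of the relevant space of intertwiners is legitimate as a structural observation, but as you note it only reduces the problem to computing the scalar, which you do not carry out; as written, neither route in your proposal is complete.
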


This is Lemma 3.16 in \cite{frenkel:1997}.

\subsection{The Graphical Construction of the Cascade Basis}
Following the construction of the $K_{m,n}(k)$ for $SU(2)$, we generalize to the case of $SU_q(2)$ via the graphical calculus. Recall that the fundamental map in the construction is the projection from $V(1)^{\otimes n}$ to $V(n)$ that intertwines the $\U$ action, $T_n$. 
\begin{equation*}
	U_{m,n}^{m+k,n-k} = (T_{m+k} \otimes T_{n-k} ) (T_m^* \otimes T_n^*) = 
	\raisebox{-.75cm}{
	\begin{tikzpicture}
		[spin/.style={rectangle,draw,thick,minimum width=1cm},scale=1]
	\node(a) at (0,0) [rotate=0,spin] {};
	\node(b) at (1.5,0) [rotate=0,spin] {};
	\node(c) at (0,-1.5) [rotate=0,spin] {};
	\node(d) at (1.5,-1.5) [rotate=0,spin] {};
	\draw (a) -- node[auto,swap] {$m$} (c);
	\draw (c.35) -- node[auto] {$k$} (b.-145);
	\draw (b) -- node[auto] {$n-k$} (d);
\end{tikzpicture}
	}
\end{equation*}
and 
\begin{equation*}
	K_{m,n}(k) = (U_{m,n}^{m+k,n-k})^* U_{m,n}^{m+k,n-k} = 
	\raisebox{-1.5cm}{
\begin{tikzpicture}
		[spin/.style={rectangle,draw,thick,minimum width=1cm},scale=1]
	\node(a) at (0,0) [rotate=0,spin] {};
	\node(b) at (1.5,0) [rotate=0,spin] {};
	\node(c) at (0,-1.5) [rotate=0,spin] {};
	\node(d) at (1.5,-1.5) [rotate=0,spin] {};
	\node(e) at (0,-3) [rotate=0,spin] {};
	\node(f) at (1.5,-3) [rotate=0,spin] {};
	\draw (a) -- node[auto,swap] {$m$} (c);
	\draw (c.35) -- node[auto] {$k$} (b.-145);
	\draw (b) -- node[auto] {$n-k$} (d);
	\draw (c) -- node[auto,swap] {$m$} (e);
	\draw (c.-35) -- node[auto] {$k$} (f.145);
	\draw (f) -- node[auto,swap] {$n-k$} (d);
\end{tikzpicture}
	}
\end{equation*}
\subsection{Bases of $\textbf{End}_{\U}(V(m) \otimes V(n))$}

In this section, we justify Lemma \ref{lem:cascadebasis} in the context of $\U$ and remark that the results of the Lemma are obtained by taking the limit $q \to 1$.
Starting from the dual canonical basis of the Temperley-Lieb algebra, $TL_{m+n}=\textbf{End}_{\U}(V(1)^{\otimes m+n})$, we obtain a basis for $\textbf{End}_{\U}(V(m) \otimes V(n))$ by means of the surjective projection 
\begin{equation*}
P:\textbf{End}_{\U}(V(1)^{\otimes m+n})\to \textbf{End}_{\U},
\quad
P(\alpha)= (T_m \otimes T_n) \alpha (T_m^* \otimes T_n^*).
\end{equation*}
By equation (\ref{eqn:antisymm}) and simple combinatorial considerations, we obtain the basis 
\begin{equation*}
	\left\{ 
		\raisebox{-.75cm}{
	\begin{tikzpicture}
		\node(n1) at (0,0) [draw,thick,rectangle,minimum width=1cm] {};
		\node(n2) at (1.5,0) [draw,thick,rectangle,minimum width=1cm] {};
		\node(n1a) at (0,1.5) [draw,thick,rectangle,minimum width=1cm] {};
		\node(n2a) at (1.5,1.5) [draw,thick,rectangle,minimum width=1cm] {};
		\draw (n1.40) to [bend left=90] node[auto,swap] {{$k$}} (n2.140);
		\draw (n1a.-40) to [bend right=90] node[auto] {{$k$}} (n2a.-140);
		\draw (n1) -- node[auto] {$m-k$} (n1a);
		\draw (n2) -- node[auto,swap] {$n-k$} (n2a);
	\end{tikzpicture}
	}
	\right\}_{k=0}^{\min(m,n)} 
\end{equation*}
which we shall call the Temperley-Lieb basis. For the remainder of this section, let us make the assumption that $m\geq n$. We now can prove that the Cascade operators indeed form
a basis.

\begin{prop}
	The Cascade basis, $K_{m,n}(k)$ for $0\leq k \leq n$ is a basis of $\textbf{End}_{\U} (V(m) \otimes V(n))$. 
\end{prop}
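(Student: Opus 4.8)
The plan is to prove linear independence of the $n+1$ operators $\{K_{m,n}(k)\}_{k=0}^{n}$; since $\textbf{End}_{\U}(V(m)\otimes V(n))$ has dimension exactly $n+1$ — this is the cardinality of the Temperley--Lieb basis exhibited above, equivalently the number of irreducible summands $V(j)$ with $j\in\{m-n,m-n+2,\dots,m+n\}$ occurring (each with multiplicity one) in the Clebsch--Gordan decomposition of $V(m)\otimes V(n)$ — linear independence immediately upgrades to being a basis. Throughout we use the standing assumption $m\ge n$. Write $P^{(j)}$ for the $\U$-equivariant projection of $V(m)\otimes V(n)$ onto its $V(j)$-summand, so that $\{P^{(j)}\}_{j}$ is itself a basis of $\textbf{End}_{\U}(V(m)\otimes V(n))$.

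The crux is a triangularity statement relative to this projection basis:
$$
K_{m,n}(k)\;=\;\sum_{j\ge m-n+2k} c^{(k)}_{j}\,P^{(j)}\,,\qquad\text{with }\ c^{(k)}_{m-n+2k}\neq 0\,,
$$
the sum running over admissible $j$. The containment of the support in $\{j\ge m-n+2k\}$ is immediate from Schur's lemma: $U_{m,n}^{m+k,n-k}$ is an intertwiner with target $V(m+k)\otimes V(n-k)$, whose irreducible constituents are exactly the $V(j)$ with $m-n+2k\le j\le m+n$ (here $|(m+k)-(n-k)|=m-n+2k\ge 0$ because $m\ge n$, $k\ge 0$); hence $U_{m,n}^{m+k,n-k}$ annihilates the $V(j)$-isotypic component of $V(m)\otimes V(n)$ whenever $j<m-n+2k$, and a fortiori so does $K_{m,n}(k)=(U_{m,n}^{m+k,n-k})^{*}U_{m,n}^{m+k,n-k}$. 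Since $q>0$, the modules carry positive-definite invariant Hermitian forms, so $\ker K_{m,n}(k)=\ker U_{m,n}^{m+k,n-k}$; thus every $c^{(k)}_{j}\ge 0$, and $c^{(k)}_{j}\neq 0$ precisely when $U_{m,n}^{m+k,n-k}$ does not vanish on the $V(j)$-component.

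The main obstacle — the only step needing genuine computation — is the non-vanishing $c^{(k)}_{m-n+2k}\neq 0$, i.e. that $U_{m,n}^{m+k,n-k}$ is nonzero on the bottom summand $V(m-n+2k)\subseteq V(m)\otimes V(n)$. I would establish this graphically: pre- and post-composing $U_{m,n}^{m+k,n-k}$ with a Clebsch--Gordan embedding $V(m-n+2k)\hookrightarrow V(m)\otimes V(n)$ and the corresponding projection reduces the computation of $c^{(k)}_{m-n+2k}$ to the evaluation of a closed trivalent network. Absorbing the Jones--Wenzl boxes by repeated use of Proposition~\ref{prop:FK} (equivalently, the single-clasp expansion of Lemma~\ref{lem:singleclasp} together with identities~(\ref{eqn:antisymm}) and~(\ref{eqn:symm})) collapses this network to an explicit product of ratios of quantum integers $[a]=\tfrac{q^{a}-q^{-a}}{q-q^{-1}}$; since $[a]\neq 0$ for every $a\ge 1$ when $q>0$, the value is nonzero. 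As a cross-check one may observe that $c^{(k)}_{m-n+2k}$ is a Laurent polynomial in $q$ and evaluate at $q=1$, recovering the classical eigenvalue formula recorded in Section~\ref{sec:statement}.

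Granting the displayed triangularity, the conclusion follows at once: ordering the operators as $K_{m,n}(0),K_{m,n}(1),\dots,K_{m,n}(n)$ and the projections as $P^{(m-n)},P^{(m-n+2)},\dots,P^{(m+n)}$, the change-of-basis matrix is triangular with nonzero diagonal entries $c^{(k)}_{m-n+2k}$, hence invertible. Therefore the $K_{m,n}(k)$ for $0\le k\le n$ are linearly independent, and being $n+1$ in number they form a basis of $\textbf{End}_{\U}(V(m)\otimes V(n))$, with the $q=1$ statement of Lemma~\ref{lem:cascadebasis} obtained by specialization.
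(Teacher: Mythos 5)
Your proof is correct, but it inverts a different change-of-basis matrix than the paper's. The paper expands the Jones--Wenzl projector $p_{m+k}$ sitting inside $K_{m,n}(k)$ via Theorem \ref{thm:JWFK}, which writes $K_{m,n}(k)$ as a combination of the Temperley--Lieb basis elements with $\ell=0,\dots,k$ arcs with the explicit coefficients ${m \brack \ell}{k \brack \ell}/{m+k \brack \ell}$; triangularity \emph{and} the non-vanishing of the diagonal entries are read off from that one formula. You instead use the spectral projections $P^{(j)}$: the support containment $\{j\ge m-n+2k\}$ comes for free from Schur's lemma applied to the target $V(m+k)\otimes V(n-k)$ --- a genuinely softer observation --- but the entire burden then shifts to the non-vanishing of the leading coefficient $c^{(k)}_{m-n+2k}$, which you only sketch. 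That computation is routine and your plan for it is sound: it is precisely the network evaluation the paper carries out at the end of Section \ref{sec:sufficient_conditions}, where the eigenvalue $\lambda_j$ of $K_{m,n}(k)$ on the total-spin-deviate-$j$ summand is computed as a ratio of quantum factorials, nonzero at $j=n-k$ for $q>0$; so your argument closes once that is written out. One caution: the ``$*$'' here is the graphical adjoint, which for $q\neq 1$ is the adjoint with respect to a $q$-deformed rather than the standard inner product; for $q>0$ that form is still positive definite, so your remarks that $\ker K_{m,n}(k)=\ker U_{m,n}^{m+k,n-k}$ and $c^{(k)}_j\ge 0$ survive, though neither is needed for the basis claim. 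As for what each route buys: yours produces the spectral decomposition of the cascade operators as a byproduct (which the paper needs later anyway for the degeneracy discussion), while the paper's simultaneously establishes the positive Temperley--Lieb expansion that feeds into the positivity arguments of Section \ref{sec:dual_canonical_basis}.
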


\begin{proof}
	The proof is by finding an explicit change of basis from the Temperley-Lieb basis.
We proceed by expanding out the Jones-Wenzl projector of size $m+k$ in the operator $K_{m,n}(k)$ into a positive sum of elements in the Temperley-Lieb algebra. We obtain 
\begin{equation*}
	K_{m,n}(k) = 
	\raisebox{-1.5cm}{
\begin{tikzpicture}
		[spin/.style={rectangle,draw,thick,minimum width=1cm},scale=1]
	\node(a) at (0,0) [rotate=0,spin] {};
	\node(b) at (1.5,0) [rotate=0,spin] {};
	\node(c) at (0,-1.5) [rotate=0,spin] {};
	\node(d) at (1.5,-1.5) [rotate=0,spin] {};
	\node(e) at (0,-3) [rotate=0,spin] {};
	\node(f) at (1.5,-3) [rotate=0,spin] {};
	\draw (a) -- node[auto,swap] {$m$} (c);
	\draw (c.35) -- node[auto] {$k$} (b.-145);
	\draw (b) -- node[auto] {$n-k$} (d);
	\draw (c) -- node[auto,swap] {$m$} (e);
	\draw (c.-35) -- node[auto] {$k$} (f.145);
	\draw (f) -- node[auto,swap] {$n-k$} (d);
\end{tikzpicture}
} = \sum_{\ell=0}^k  \frac{\displaystyle {m \brack \ell}\, {k \brack \ell}}{\displaystyle {m+k \brack \ell}}\, 
		\raisebox{-.75cm}{
	\begin{tikzpicture}
		\node(n1) at (0,0) [draw,thick,rectangle,minimum width=1cm] {};
		\node(n2) at (1.5,0) [draw,thick,rectangle,minimum width=1cm] {};
		\node(n1a) at (0,1.5) [draw,thick,rectangle,minimum width=1cm] {};
		\node(n2a) at (1.5,1.5) [draw,thick,rectangle,minimum width=1cm] {};
		\draw (n1.40) to [bend left=90] node[auto,swap] {{$\ell$}} (n2.140);
		\draw (n1a.-40) to [bend right=90] node[auto] {{$\ell$}} (n2a.-140);
		\draw (n1) -- node[auto] {$m-\ell$} (n1a);
		\draw (n2) -- node[auto,swap] {$n-\ell$} (n2a);
	\end{tikzpicture}
	}
\end{equation*}
We see immediately that the change of basis matrix is indeed non-degenerate. The following theorem justifies our change of basis formula.

\end{proof}

\begin{thm}
\label{thm:JWFK}
For any $m,n \in \{0,1,2,\dots \}$, 
\begin{equation*}
\begin{minipage}{1.25cm}
\begin{tikzpicture}[xscale=1,yscale=1,bend angle=90,
spin/.style={rectangle,draw,thick,minimum width=10mm},scale=1]
\node at (0,0) [spin] (n) {};
\draw (n.90) -- node[left] {\small $m+n$} +(0,0.75);
\draw (n.-90) -- node[left] {\small $m+n$} +(0,-0.75);
\end{tikzpicture}
\end{minipage} \quad 
=\ \sum_{k=0}^{\min \{ m,n \}} \frac{\displaystyle {m \brack k}\, {n \brack k}}
{\displaystyle {m+n \brack k}}\quad 
\begin{minipage}{5cm}
\begin{tikzpicture}[xscale=0.75,yscale=0.75,bend angle=90,
spin/.style={rectangle,draw,thick,minimum width=10mm},scale=1]
\node at (-1,1) [spin] (a1) {};
\node at (1,1) [spin] (b1) {};
\node at (-1,-1) [spin] (a2) {};
\node at (1,-1) [spin] (b2) {};
\draw (a1.-150) .. controls +(0,-0.5) and +(0,0.5) .. node[left] {\small $m-k$} (a2.150);
\draw (b1.-30) .. controls +(0,-0.5) and +(0,0.5) .. node[right] {\small $n-k$} (b2.30);
\draw (a1.-30) .. controls +(0,-0.75) and +(0,-0.75) .. node[above] {\small $k$} (b1.-150);
\draw (a2.30) .. controls +(0,0.75) and +(0,0.75) .. node[below] {\small $k$} (b2.150);
\end{tikzpicture}
\end{minipage}
\end{equation*}
\end{thm}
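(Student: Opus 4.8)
The plan is to first note that $p_{m+n}$ already lies in $\textbf{End}_{\U}(V(m)\otimes V(n))$, so that it has a unique expansion in the Temperley--Lieb basis $\{e_k\}_{k=0}^{\min\{m,n\}}$ introduced above (where $e_k$ is the diagram carrying a $k$-fold turn-back between the two Jones--Wenzl boxes on both the top and the bottom), and then to pin down the coefficients. The first point is immediate from the absorption identity (\ref{eqn:symm}): attaching $p_m$ to the first $m$ strands and $p_n$ to the last $n$ strands does not change $p_{m+n}$, so $p_{m+n}=(p_m\otimes p_n)\,p_{m+n}\,(p_m\otimes p_n)\in\textbf{End}_{\U}(V(m)\otimes V(n))$ and we may write $p_{m+n}=\sum_{k=0}^{\min\{m,n\}}c_k(m,n)\,e_k$ for scalars $c_k(m,n)\in\comps(q)$.

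To compute the $c_k(m,n)$ I would induct on $\min\{m,n\}$; using the reflection $\tau$ of Lemma \ref{lem:cascadebasis} we may assume $m\ge n$ and induct on $n$. The base case $n=0$ is trivial, since then $e_0=p_m$ and ${m\brack 0}{0\brack 0}/{m\brack 0}=1$. For the inductive step, apply Wenzl's recursion (Lemma \ref{lem:JonesWenzl}) to $p_{m+n}$ to split off the last strand, writing $p_{m+n}=(p_{m+n-1}\otimes\id)+\tfrac{[m+n-1]}{[m+n]}\,W$ with $W$ the turn-back term. Regroup the first $m+n-1$ strands as $m$ together with $n-1$, expand $p_{m+n-1}$ by the induction hypothesis as $\sum_{\ell}c_\ell(m,n-1)\,e'_\ell$ with $e'_\ell\in\textbf{End}_{\U}(V(m)\otimes V(n-1))$, and then sandwich the whole identity with $\id_{V(m)}\otimes p_n$ on the top and bottom, which leaves the left-hand side unchanged by (\ref{eqn:symm}). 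Absorbing the size-$(n-1)$ box into the size-$n$ box, the first term contributes $\sum_\ell c_\ell(m,n-1)\,e_\ell$ directly, while the turn-back term $W$ contributes terms in which the split-off $V(1)$ strand is re-fused through the existing $\ell$-clasp; evaluating these with Corollary \ref{cor:MasbaumVogel} (equivalently the $\ell=1$ instance of Proposition \ref{prop:FK}) at the point where the lone strand meets the clasp yields, besides a multiple of $e_\ell$, a multiple of $e_{\ell+1}$, with ratio-of-quantum-integer coefficients. Collecting the coefficient of a fixed $e_k$ then gives a two-term recursion
\[
c_k(m,n)=A_k(m,n)\,c_k(m,n-1)+B_k(m,n)\,c_{k-1}(m,n-1),
\]
and it remains to verify that $c_k(m,n)={m\brack k}{n\brack k}/{m+n\brack k}$ solves it --- a routine Gaussian-binomial identity of $q$-Pascal / $q$-Chu--Vandermonde type.

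I expect essentially all of the real work to sit in this re-fusion step: tracking, in the graphical calculus, exactly how the single $V(1)$ strand passes through the $\ell$-fold clasp into the Jones--Wenzl box of size $n-1$, and reading off from Corollary \ref{cor:MasbaumVogel}/Proposition \ref{prop:FK} the precise coefficients of the two resulting diagrams (the clasp growing to size $\ell+1$ versus staying at size $\ell$), together with the bookkeeping of the $\tfrac{[m+n-1]}{[m+n]}$ prefactor and of the Temperley--Lieb normalization hidden in $W$; once the recursion is isolated, checking the closed form is elementary. Since the whole computation takes place over $\comps(q)$, the specialization to real $q>0$ needed later is automatic. An alternative route that bypasses the induction is to evaluate both sides of the asserted identity directly on the weight basis of $V(1)^{\otimes(m+n)}$, using $p_{m+n}=T_{m+n}^*\,T_{m+n}$ and the description of each $e_k$ as an explicit composite of the maps $T$, $T^*$ and the cup/cap $\delta,\varepsilon$; this reduces the claim to a finite identity among sums over weight configurations, but the $q$-multinomial bookkeeping is heavier, so I would use it only as a consistency check.
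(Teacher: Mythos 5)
First, note that the paper does not actually prove Theorem \ref{thm:JWFK}: it is quoted from Frenkel--Khovanov (Proposition 3.10 of \cite{frenkel:1997}; see also \cite{kim:2003}), so your argument is necessarily a different, self-contained route. Its skeleton is sound: the absorption identity (\ref{eqn:symm}) does place $p_{m+n}$ in $\textbf{End}_{\U}(V(m)\otimes V(n))$ and hence in the span of the Temperley--Lieb basis $\{e_k\}$, the base case is trivial, the first Wenzl term really does contribute $\sum_\ell c_\ell(m,n-1)\,e_\ell$ after sandwiching (the size-$(n-1)$ boxes absorb into the size-$n$ ones), and the closed form ${m\brack k}{n\brack k}/{m+n\brack k}$ is consistent with recursions of $q$-Vandermonde type, as the two-site check $p_2=\id+[2]^{-1}U_1$ confirms.

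The soft spot is your treatment of the turn-back term $W$ in Lemma \ref{lem:JonesWenzl}. That diagram contains \emph{two} Jones--Wenzl boxes of size $m+n-1$, one in each half, so applying the induction hypothesis produces a double sum $\sum_{\ell,\ell'}c_\ell(m,n-1)\,c_{\ell'}(m,n-1)\,Z_{\ell'}^{*}Z_{\ell}$ rather than the single re-fusion you describe; the resulting recursion for $c_k(m,n)$ is bilinear in the $c_\bullet(m,n-1)$, not the linear two-term recursion $c_k=A_kc_k'+B_kc_{k-1}'$ you propose to verify against the closed form. As written, the step ``evaluating these with Corollary \ref{cor:MasbaumVogel} yields, besides a multiple of $e_\ell$, a multiple of $e_{\ell+1}$'' accounts for only one of the two boxes. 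This is repairable in at least two ways: carry out the bilinear bookkeeping (each cross term $Z_{\ell'}^{*}Z_{\ell}$ reduces to a scalar multiple of a single $e_j$ by Proposition \ref{prop:FK}), or sidestep the turn-back term entirely by observing from (\ref{eqn:antisymm}) that $e_j\,p_{m+n}=0$ for all $j\ge 1$, which together with the normalization $c_0=1$ yields a triangular linear system for the $c_k$ whose entries are again supplied by Proposition \ref{prop:FK}; the latter is essentially how the coefficients are pinned down in the cited sources. Until one of these is done, the recursion you intend to check has not actually been isolated, so the ``routine Gaussian-binomial identity'' cannot yet be set up.
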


This formula was proved by Frenkel and Khovanov in \cite{frenkel:1997}, where it is Proposition 3.10.
Also see Proposition 2.2 in \cite{kim:2003}
where by unfolding the left and right hand sides via the vector space isomorphism between invariant vectors in $V(m)\otimes V(m) \otimes V(n) \otimes V(n)$ and $\textbf{End}_{\U}(V(m)\otimes V(n))$, 
\begin{equation}
\label{eq:unfolded2}
\begin{minipage}{5.5cm}
\begin{tikzpicture}[xscale=0.75,yscale=0.75,bend angle=90,
spin/.style={rectangle,draw,thick,minimum width=10mm},scale=1]
\node at (-3,0) [spin] (mk) {};
\node at (-1,0) [spin] (ml) {};
\node at (1,0) [spin] (nl) {};
\node at (3,0) [spin] (nk) {};
\node at (0,1.25) [spin] (m) {};
\draw (mk.90) .. controls +(0,0.25) and +(-1,0) .. node[left] {\small $m$} (-1.5,1.75) .. controls +(0.25,0) and +(0,0.25) .. (m.135);
\draw (ml.90) .. controls +(0,0.25) and +(0,-0.25) .. node[left] {\small $m$} (m.-135);
\draw (nl.90) .. controls +(0,0.25) and +(0,-0.25) .. node[right] {\small $n$} (m.-45);
\draw (nk.90) .. controls +(0,0.25) and +(1,0) .. node[right] {\small $n$} (1.5,1.75) .. controls +(-0.25,0) and +(0,0.25) .. (m.45);
\end{tikzpicture}
\end{minipage} \qquad 
=\ \sum_{k=0}^{\min \{m,n \}} c_{m,n,k}\
\begin{minipage}{5cm}
\begin{tikzpicture}[xscale=0.75,yscale=0.75,bend angle=90,
spin/.style={rectangle,draw,thick,minimum width=10mm},scale=1]
\node at (-3,0) [spin] (mk) {};
\node at (-1,0) [spin] (ml) {};
\node at (1,0) [spin] (nl) {};
\node at (3,0) [spin] (nk) {};
\draw (mk.45) .. controls +(0,0.75) and +(0,0.75) .. node[above] {\small $m-k$} (ml.135);
\draw (ml.45) .. controls +(0,0.75) and +(0,0.75) .. node[above] {\small $k$} (nl.135);
\draw (nl.45) .. controls +(0,0.75) and +(0,0.75) .. node[above] {\small $n-k$} (nk.135);
\draw (mk.135) .. controls +(0,2.5) and +(0,2.5) .. node[above] {\small $k$} (nk.45);
\end{tikzpicture}
\end{minipage}\ ,
\end{equation}
$$
c_{m,n,k}\, =\, \frac{\displaystyle {m \brack k}\, {n \brack k}}
{\displaystyle {m+n \brack k}}\ ,
$$

\section{The Graphical Dual Canonical Basis}\label{sec:dual_canonical_basis}

We describe the graphical dual canonical basis first by studying cap diagrams and useful bijections to standard Young tableaux of two or fewer rows and to highest weight vectors in $V(1)\otimes \cdots \otimes V(1)$. One can realize the description here as a special case of the $q$ version of Schur-Weyl duality, where the duality is between the Hecke algebra of type $A_{n_1+\cdots + n_L}$ and $\U$. Then the connection to standard Young tableaux is given by a $q$ version of the Young symmetrizer \cite{jimbo:1986, gyoja:1986}. We remark that the same bijection is considered in \cite{westbury:1995}, but we provide an alternate proof. 

Consider the action of the intertwiner $\delta:V(0) \to V(1) \otimes V(1)$ on the space $V(1)^{\otimes L}$.
First observe that $\delta$ and scalar multiples of $\delta$ are the only elements in $\textbf{Hom}_{\U} (V(0) , V(1) \otimes V(1))$. 
Define a $1$-cap diagram, denoted $C(L,1)$, to be the elements of $\textbf{Hom}_{\U} (V(1)^{\otimes L-2} , V(1)^{\otimes L})$ of the form $\id \otimes \cdots \otimes \id \otimes \delta \otimes \id \otimes \cdots \otimes \id$, where $\delta$ acts on consecutive tensor factors. The set of permissible cap diagrams forms a basis for $\textbf{Hom}_{\U} (V(1)^{\otimes L-2},V(1)^{\otimes L})$. 
Graphically, cap diagrams are all possible non-crossing pairings (drawn as caps) of a linearly ordered finite set. We define a $k$-cap diagram, denoted by $C(L,k)$, to be elements of $\textbf{Hom}_{\U} (V(1)^{\otimes L-2k},V(1)^{\otimes L})$ implemented with $k$ copies of $\delta$ such that the pairings do not cross and $L-2k$ through lines, which also do not cross. When it is clear, these through lines will simply be represented by a dot. $C(L,k)$ forms a basis of $\textbf{Hom}_{\U} (V(1)^{\otimes L-2k},V(1)^{\otimes L})$. 
It is a consequence of the non-crossing property that cap diagrams are uniquely identified by specifying the locations of the right legs (or alternatively, the locations of the left legs).  As usual, this is best illustrated in the following graphical example. 

\begin{example} We list out the cap diagrams in $C(5,2)$ and a bijective correspondence between standard Young tableaux of shape (3,2). The left hand side denotes the locations of positions of right legs.
\begin{align*}
	(2,4) \quad \leftrightarrow \quad
        \begin{tikzpicture}
                \foreach \x in {0,.5,1,1.5,2}{
                \fill [black] (\x,0) circle (1pt);
                }
                \draw[thick] (0,0) arc (180:0:.25);
                \draw[thick] (1,0) arc (180:0:.25);
        \end{tikzpicture}
         & \quad \leftrightarrow \quad
         \raisebox{-.5cm}{
         \begin{tikzpicture}
                 \foreach \x in {0,.5,1}{
                 \draw[thick] (\x,0) +(-.25,-.25) rectangle ++(.25,.25);
                 \draw(0,0) node{1};
                 \draw(0.5,0) node{3};
                 \draw(1,0) node{5};
                 }
                 \foreach \y in {0,.5}{
                 \draw[thick] (\y,-.5) +(-.25,-.25) rectangle ++(.25,.25);
                 \draw(0,-.5) node{2};
                 \draw(.5,-.5) node{4};
                 }
         \end{tikzpicture}
         }\\
	(2,5) \quad \leftrightarrow \quad
                 \begin{tikzpicture}
                \foreach \x in {0,.5,1,1.5,2}{
                \fill [black] (\x,0) circle (1pt);
                }
                \draw[thick] (0,0) arc (180:0:.25);
                \draw[thick] (1.5,0) arc (180:0:.25);
        \end{tikzpicture}
         & \quad \leftrightarrow \quad
         \raisebox{-.5cm}{
         \begin{tikzpicture}
                 \foreach \x in {0,.5,1}{
                 \draw[thick] (\x,0) +(-.25,-.25) rectangle ++(.25,.25);
                 \draw(0,0) node{1};
                 \draw(0.5,0) node{3};
                 \draw(1,0) node{4};
                 }
                 \foreach \y in {0,.5}{
                 \draw[thick] (\y,-.5) +(-.25,-.25) rectangle ++(.25,.25);
                 \draw(0,-.5) node{2};
                 \draw(.5,-.5) node{5};
                 }
         \end{tikzpicture}
         }\\
	(3,4) \quad \leftrightarrow \quad
                 \begin{tikzpicture}
                \foreach \x in {0,.5,1,1.5,2}{
                \fill [black] (\x,0) circle (1pt);
                }
                \draw[thick] (0.5,0) arc (180:0:.25);
                \draw[thick] (0,0) arc (180:0:.75);
        \end{tikzpicture}
         & \quad \leftrightarrow \quad
         \raisebox{-.5cm}{
         \begin{tikzpicture}
                 \foreach \x in {0,.5,1}{
                 \draw[thick] (\x,0) +(-.25,-.25) rectangle ++(.25,.25);
                 \draw(0,0) node{1};
                 \draw(0.5,0) node{2};
                 \draw(1,0) node{5};
                 }
                 \foreach \y in {0,.5}{
                 \draw[thick] (\y,-.5) +(-.25,-.25) rectangle ++(.25,.25);
                 \draw(0,-.5) node{3};
                 \draw(.5,-.5) node{4};
                 }
         \end{tikzpicture}
         }\\
	(3,5) \quad \leftrightarrow \quad
                 \begin{tikzpicture}
                \foreach \x in {0,.5,1,1.5,2}{
                \fill [black] (\x,0) circle (1pt);
                }
                \draw[thick] (0.5,0) arc (180:0:.25);
                \draw[thick] (1.5,0) arc (180:0:.25);
        \end{tikzpicture}
         & \quad \leftrightarrow \quad
         \raisebox{-.5cm}{
         \begin{tikzpicture}
                 \foreach \x in {0,.5,1}{
                 \draw[thick] (\x,0) +(-.25,-.25) rectangle ++(.25,.25);
                 \draw(0,0) node{1};
                 \draw(0.5,0) node{2};
                 \draw(1,0) node{4};
                 }
                 \foreach \y in {0,.5}{
                 \draw[thick] (\y,-.5) +(-.25,-.25) rectangle ++(.25,.25);
                 \draw(0,-.5) node{3};
                 \draw(.5,-.5) node{5};
                 }
         \end{tikzpicture}
         }\\
	(4,5) \quad \leftrightarrow \quad
                 \begin{tikzpicture}
                \foreach \x in {0,.5,1,1.5,2}{
                \fill [black] (\x,0) circle (1pt);
                }
                \draw[thick] (0.5,0) arc (180:0:.75);
                \draw[thick] (1,0) arc (180:0:.25);
        \end{tikzpicture}
         & \quad \leftrightarrow \quad
         \raisebox{-.5cm}{
         \begin{tikzpicture}
                 \foreach \x in {0,.5,1}{
                 \draw[thick] (\x,0) +(-.25,-.25) rectangle ++(.25,.25);
                 \draw(0,0) node{1};
                 \draw(0.5,0) node{2};
                 \draw(1,0) node{3};
                 }
                 \foreach \y in {0,.5}{
                 \draw[thick] (\y,-.5) +(-.25,-.25) rectangle ++(.25,.25);
                 \draw(0,-.5) node{4};
                 \draw(.5,-.5) node{5};
                 }
         \end{tikzpicture}
         }
\end{align*}
\end{example}
We now describe the structure of cap diagrams by giving a bijection to standard Young tableaux.
\begin{prop}
	There exist bijections from $C(L,k)$ to standard Young tableaux of shape $(L-k,k)$ and to the highest weight vectors of $V(1)^{\otimes L}$ of weight $q^{L-2k}$. 	
\label{prop:bijection}
\end{prop}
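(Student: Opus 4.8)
The plan is to reduce both bijections to the combinatorial datum of a cap diagram, namely the positions of its right legs, which (as noted just before the Example) already determine the diagram. First I would identify the image of the map $c\mapsto R(c)\subseteq\{1,\dots,L\}$ sending a diagram to its right-leg set. Given $R=\{r_1<\dots<r_k\}$, reconstruct $c$ by scanning the $L$ positions from left to right, pushing each position not in $R$ onto a stack, and upon reaching $r_j\in R$ popping the top of the stack as its left partner; the $L-2k$ positions never popped are the through strands. This procedure succeeds — and yields a genuine non-crossing diagram with no through strand trapped under an arc — exactly when no prefix of $\{1,\dots,L\}$ contains more elements of $R$ than of its complement, i.e. when $r_j\ge 2j$ for all $j$; conversely, any such diagram is assembled from its innermost arcs outward by inserting copies of $\delta$, so it lies in $C(L,k)$. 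Hence $c\mapsto R(c)$ is a bijection of $C(L,k)$ with $\mathcal R_{L,k}:=\{R\subseteq\{1,\dots,L\}: |R|=k,\ r_j\ge 2j\ \forall j\}$.

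For standard Young tableaux of shape $(L-k,k)$: such a tableau is determined by the set $S=\{s_1<\dots<s_k\}$ of entries of its second row (the first row being the complement, in increasing order), and column-strictness is equivalent to $s_j$ exceeding the $j$-th smallest entry of the complement for every $j$, which an elementary count shows is the same as $s_j\ge 2j$ for all $j$. So $S\mapsto(\text{tableau with second row }S)$ identifies $\mathcal R_{L,k}$ with $\mathrm{SYT}(L-k,k)$, and composing with $c\mapsto R(c)$ gives the first bijection — precisely the correspondence (right legs $\leftrightarrow$ second row) illustrated in the Example.

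For the highest weight vectors I would use that for $q>0$ (or $q$ generic) the category of $\U$-modules is semisimple, so the highest weight vectors of weight $q^{L-2k}$ in $V(1)^{\otimes L}$ form a subspace $W_{L,k}$ of dimension equal to the multiplicity of $V(L-2k)$, which by ($q$-)Schur--Weyl duality is $|\mathrm{SYT}(L-k,k)|=|C(L,k)|$. Define $\Phi\colon C(L,k)\to W_{L,k}$ by $\Phi(c)=c(v^{1}\otimes\cdots\otimes v^{1})$, applying $c$ to the highest weight vector of $V(1)^{\otimes(L-2k)}$: this vector is annihilated by $E$ and has weight $q^{L-2k}$, and $c$ intertwines the $\U$-action, so $\Phi(c)$ is again highest weight of that weight. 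In the expansion of $\Phi(c)$ each of the $k$ arcs contributes the factor $\delta(1)=v^{-1}\otimes v^{1}-q^{-1}v^{1}\otimes v^{-1}$ and the through strands carry $v^{1}$, so $\Phi(c)$ is a sum over the $2^k$ ways of choosing one leg of each arc to carry $v^{-1}$, the resulting position sets $D$ being distinct with coefficients a sign times a power of $q^{-1}$. Ordering the weight-$q^{L-2k}$ standard tensors by dominance of their sorted sets of $v^{-1}$-positions, the term with $v^{-1}$ at all right legs of $c$ (the set $R(c)$) has coefficient $(-q^{-1})^k\ne 0$ and is the unique maximal term: for any such $D$ the number of arcs contained in a prefix $\{1,\dots,m\}$ is at most the number of chosen legs in that prefix, which forces $R(c)\succeq D$. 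Since $c\mapsto R(c)$ is injective these leading terms are pairwise distinct, the $\{\Phi(c)\}$ are linearly independent, and by the dimension count they form a basis of $W_{L,k}$ — the asserted bijection; with the usual normalization they are exactly the dual canonical basis vectors of $V(1)^{\otimes L}$ that are highest weight of weight $q^{L-2k}$.

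The two ``$\ge 2j$'' equivalences and the sign bookkeeping are routine; the step I expect to need the most care is the triangularity of $\Phi$ — in particular the prefix-counting lemma ensuring $R(c)$ dominates every resolution $D$, so that $\Phi(c)$ has a unique leading term. A more conceptual, if less self-contained, route to the second bijection is to quote $q$-Schur--Weyl duality and the $q$-Young symmetrizer dictionary (standard Young tableaux of shape $\mu$ indexing a basis of the $V(\mu)$-multiplicity space, cf. \cite{jimbo:1986,gyoja:1986}) and compose it with the combinatorial bijection $C(L,k)\to\mathrm{SYT}(L-k,k)$ of the second paragraph.
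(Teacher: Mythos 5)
Your proof is correct and follows essentially the same route as the paper's: the right legs of the arcs are matched with the second-row entries of the tableau, and applying a cap diagram to the highest weight vector of $V(1)^{\otimes (L-2k)}$ produces the highest weight vectors of weight $q^{L-2k}$, with the count $\binom{L}{k}-\binom{L}{k-1}$ on both sides closing the argument. Your ballot-condition characterization ($r_j\ge 2j$) and the leading-term triangularity argument for the linear independence of the vectors $\Phi(c)$ supply explicitly two steps that the paper only asserts by inspection.
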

\begin{proof}
	We first remark that elements of $\textbf{Hom}_{\U}(V(1)^{\otimes L-2k},V(1)^{\otimes L})$ send highest weight vectors to highest weight vectors or to the zero vector. We look at the non-zero images of the highest weight vector with weight $q^{L-2k}$, and observe that in the Temperley-Lieb basis, the elements in the image correspond precisely to the possible cap diagrams, and also to all possible highest weight vectors with weight $q^{L-2k}$. This is illustrated as follows:
	\begin{equation*}
		\raisebox{-.25cm}{
		\begin{tikzpicture}
                \foreach \x in {0,.5,1,1.5,2,2.5,3}{
                \fill [black] (\x,0) circle (1pt);
                }
                \foreach \x in {1,1.5,2}{
                \fill [black] (\x,1) circle (1pt);
		\draw[thick,->] (\x,1) -- ++(0,.5);
		}
                \draw[thick] (1,0) arc (180:0:.25);
                \draw[thick] (2,0) arc (180:0:.25);
		\draw[thick] (1,1) .. controls (1,.75) and (0,.25) .. (0,0);	
		\draw[thick] (1.5,1) .. controls (1.5,.75) and (0.5,.25) ..(0.5,0);	
		\draw[thick] (2,1) .. controls (2,.75) and (3,.25) .. (3,0);	
		\end{tikzpicture} 
		}= 
		\raisebox{-.25cm}{
		\begin{tikzpicture}
                \foreach \x in {0,.5,1,1.5,2,2.5,3}{
                \fill [black] (\x,0) circle (1pt);
                }
                \draw[thick] (1,0) arc (180:0:.25);
                \draw[thick] (2,0) arc (180:0:.25);
                \foreach \x in {0,.5,3}{
		\draw[thick,->] (\x,0) -- ++(0,.5);
		}
		\end{tikzpicture}
		}
	\end{equation*}
	Thus, by counting the dimension of the highest weight vectors of weight $q^{L-2k}$ in $V(1)^{\otimes L}$, we count the number of cap diagrams. The dimension of the space of heighest weights of weight $q^{L-2k}$ is $\binom L k - \binom L {k-1}$, since in the standard tensor basis, there are $\binom L k$ vectors with weight $q^{L-2k}$. To ensure that we only count highest weight vectors, we subtract out all of those vectors that are obtained by applying the lowering operator to a vector of weight $q^{L-2(k-1)}$, of which there are $\binom L {k-1}$. 
	Concerning the number of standard Young tableaux of shape $(L-k,k)$, it is an easy consequence of the hook length formula that there are precisely $\binom L k -\binom L {k-1}$. Moreover, the locations of the right legs of elements in $C(L,k)$ correspond exactly to the entries of the second row of the tableaux, and so we have the desired bijection.
\end{proof}
Now let $\beta^{L-2k}$ be the elements of the standard tensor basis of $V(1)^{\otimes L-2k}$ in which all $\down$ occur to the left of $\up$. We define the \emph{dual canonical basis}, denoted by $\tilde{\beta}^{L-2k}$, to be the union of the images of $\beta^{L-2k}$ mapped through $C(L,k)$. In symbols,
\begin{equation*}
	\tilde{\beta}^{L-2k} = \bigcup_{d \in C(L,k)} \textrm{Im } d(\beta^{L-2k})
\end{equation*}
For higher dimensional irreducible representations, $V(n_1)\otimes \cdots \otimes V(n_L)$, we simply consider the non-zero elements of $\pi_{n_1} \otimes \cdots \otimes \pi_{n_L} (\tilde{\beta}^{n_1+ \cdots + n_L -2k})$ to be the dual canonical basis. 
\subsection{$\U$ Action on the Dual Canonical Basis}
We briefly recall from \cite{frenkel:1997} the action of $\U$ on the dual canonical basis.
                      The action of the lowering operator, $F$, on a basis vector $v$ is as follows:
                      \begin{enumerate}
                              \item Consider the $i$th $\up$ from the right. Flip it so that it becomes $\down$.
                              \item If there is an $\up$ to the left of this flipped arrow, contract the two into an arc.
                              \item Multiply the resulting vector by $[i]$
                              \item Repeat for all $\up$ arrows. Sum the results.
                      \end{enumerate}
\begin{example}
	        \begin{align*}
			F &
			\raisebox{-1cm}{
                      \begin{tikzpicture}[scale=.8]
                              \node (n1) at (0,0) [draw, thick, rectangle,minimum width=1cm,minimum
			      height=.5cm] {\tiny{$3$}};
                              \node (n2) at (1.5,0) [draw, thick, rectangle,minimum width=1cm,minimum
			      height=.5cm] {\tiny{$3$}};
                              \node (n3) at (3,0) [draw, thick, rectangle,minimum width=1cm,minimum
			      height=.5cm] {\tiny{$3$}};
                              \node (n4) at (4.5,0) [draw, thick, rectangle,minimum width=1cm,minimum
			      height=.5cm] {\tiny{$3$}};
                              \node (n5) at (6,0) [draw, thick, rectangle,minimum width=1cm,minimum
			      height=.5cm] {\tiny{$3$}};
                              \draw (n2.40) to [bend left=90] (n3.140);
                              \draw (n3.40) to [bend left=90] (n4.140);
                              \draw (n3.90) to [bend left=90] (n4.90);
                              \draw (n2.90) to [bend left=90] (n4.40);
                              \draw[ <-] (n1.140) -- ++(0,.5);
                              \draw[ <-] (n1.90) -- ++(0,.5);
                              \draw[ ->] (n1.40) -- ++(0,.5);
                              \draw[ ->] (n2.140) -- ++(0,.5);
                              \draw[ ->] (n5.140) -- ++(0,.5);
                              \draw[ ->] (n5.90) -- ++(0,.5);
                              \draw[ ->] (n5.40) -- ++(0,.5);
                      \end{tikzpicture}
		      } = 
			  [3]
                \raisebox{-1cm}{
                      \begin{tikzpicture}[scale=.8]
                              \node (n1) at (0,0) [draw, thick, rectangle,minimum width=1cm,minimum height=.5cm] {\tiny{$3$}};
                              \node (n2) at (1.5,0) [draw, thick, rectangle,minimum width=1cm,minimum height=.5cm] {\tiny{$3$}};
                              \node (n3) at (3,0) [draw, thick, rectangle,minimum width=1cm,minimum height=.5cm] {\tiny{$3$}};
                              \node (n4) at (4.5,0) [draw, thick, rectangle,minimum width=1cm,minimum height=.5cm] {\tiny{$3$}};
                              \node (n5) at (6,0) [draw, thick, rectangle,minimum width=1cm,minimum height=.5cm] {\tiny{$3$}};
                              \draw (n2.40) to [bend left=90] (n3.140);
                              \draw (n3.40) to [bend left=90] (n4.140);
                              \draw (n3.90) to [bend left=90] (n4.90);
                              \draw (n2.90) to [bend left=90] (n4.40);
                              \draw (n2.140) to [bend left=90] (n5.140);
                              \draw[ <-] (n1.140) -- ++(0,.5);
                              \draw[ <-] (n1.90) -- ++(0,.5);
                              \draw[ ->] (n1.40) -- ++(0,.5);
                              \draw[ ->] (n5.90) -- ++(0,.5);
                              \draw[ ->] (n5.40) -- ++(0,.5);
                      \end{tikzpicture}
                      }
		      \\
                      & + [4]
                \raisebox{-1cm}{
                      \begin{tikzpicture}[scale=.8]
                              \node (n1) at (0,0) [draw, thick, rectangle,minimum width=1cm,minimum height=.5cm] {\tiny{$3$}};
                              \node (n2) at (1.5,0) [draw, thick, rectangle,minimum width=1cm,minimum height=.5cm] {\tiny{$3$}};
                              \node (n3) at (3,0) [draw, thick, rectangle,minimum width=1cm,minimum height=.5cm] {\tiny{$3$}};
                              \node (n4) at (4.5,0) [draw, thick, rectangle,minimum width=1cm,minimum height=.5cm] {\tiny{$3$}};
                              \node (n5) at (6,0) [draw, thick, rectangle,minimum width=1cm,minimum height=.5cm] {\tiny{$3$}};
                              \draw (n2.40) to [bend left=90] (n3.140);
                              \draw (n3.40) to [bend left=90] (n4.140);
                              \draw (n3.90) to [bend left=90] (n4.90);
                              \draw (n2.90) to [bend left=90] (n4.40);
                              \draw (n1.40) to [bend left=90] (n2.140);
                              \draw[ <-] (n1.140) -- ++(0,.5);
                              \draw[ <-] (n1.90) -- ++(0,.5);
                              \draw[ ->] (n5.140) -- ++(0,.5);
                              \draw[ ->] (n5.90) -- ++(0,.5);
                              \draw[ ->] (n5.40) -- ++(0,.5);
                      \end{tikzpicture}
                      }
                       + [5]
                \raisebox{-1cm}{
                      \begin{tikzpicture}[scale=.8]
                              \node (n1) at (0,0) [draw, thick, rectangle,minimum width=1cm,minimum height=.5cm] {\tiny{$3$}};
                              \node (n2) at (1.5,0) [draw, thick, rectangle,minimum width=1cm,minimum height=.5cm] {\tiny{$3$}};
                              \node (n3) at (3,0) [draw, thick, rectangle,minimum width=1cm,minimum height=.5cm] {\tiny{$3$}};
                              \node (n4) at (4.5,0) [draw, thick, rectangle,minimum width=1cm,minimum height=.5cm] {\tiny{$3$}};
                              \node (n5) at (6,0) [draw, thick, rectangle,minimum width=1cm,minimum height=.5cm] {\tiny{$3$}};
                              \draw (n2.40) to [bend left=90] (n3.140);
                              \draw (n3.40) to [bend left=90] (n4.140);
                              \draw (n3.90) to [bend left=90] (n4.90);
                              \draw (n2.90) to [bend left=90] (n4.40);
                              \draw[ <-] (n1.140) -- ++(0,.5);
                              \draw[ <-] (n1.90) -- ++(0,.5);
                              \draw[ <-] (n1.40) -- ++(0,.5);
                              \draw[ ->] (n2.140) -- ++(0,.5);
                              \draw[ ->] (n5.140) -- ++(0,.5);
                              \draw[ ->] (n5.90) -- ++(0,.5);
                              \draw[ ->] (n5.40) -- ++(0,.5);
                      \end{tikzpicture}
                      }
        \end{align*}
	Observe that the terms with both ends of an arc attached to the same projector have been evaluated as the zero vector.
\end{example}
If $u$ is the involution that reflects about a vertical axis and changes up to down and down to up, then $E= u \circ F \circ u$.

Observe that particularly in the ``generic'' case where we specialize to $0<q<1$, that the generators of $\U$, $E$, $F$, and $K$, act on the dual canonical basis as non-negative matrices. We shall see this property play a crucial role in conjunction with the Perron-Frobenius theorem.
\subsection{Action of the Cascade Basis on the Dual Canonical Basis}
We now wish to calculate the action of the Cascade basis on the Dual Canonical basis of $V(n_1) \otimes \cdots \otimes V(n_L)$. In particular, we will prove the following proposition.

\begin{prop}
The cascade basis has all positive matrix entries in the dual canonical basis.
\label{prop:cascadepositivity}
\end{prop}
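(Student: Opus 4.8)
Proposition~\ref{prop:cascadepositivity} will be proved by an explicit graphical computation of $K_{n_i,n_{i+1}}(k)$ --- regarded as an operator on the two tensor factors $i$ and $i+1$ --- acting on a dual canonical basis vector, exhibiting every resulting coefficient as a finite sum of products of quantum integers and positive quotients of quantum factorials, all of which are $>0$ for $q>0$. The first move is to rewrite the cascade operator using $(T_{m+k}\otimes T_{n-k})^{*}(T_{m+k}\otimes T_{n-k})=p_{m+k}\otimes p_{n-k}$, so that $K_{m,n}(k)=(T_{m}\otimes T_{n})(p_{m+k}\otimes p_{n-k})(T_{m}^{*}\otimes T_{n}^{*})$ with $m=n_i$, $n=n_{i+1}$; the only nontrivial ingredient is then the Jones--Wenzl projector $p_{m+k}$ that fuses the $m$ strands issuing from site $i$ with $k$ of the $n$ strands issuing from site $i+1$. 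I note that the bare factorization $K_{m,n}(k)=U^{*}U$ does not by itself give positivity of the matrix entries, since the dual canonical basis is not orthonormal and its matrix entries are therefore not inner products; the graphical input below is what is really needed.

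\textbf{Step 1 (positive expansion of the internal projector).} Expand $p_{m+k}$ using the positive integral decomposition of equation~(\ref{eqn:JWPosIntDecomp}), whose Temperley--Lieb coefficients lie in $q^{(m+k)(m+k-1)/2}\nats[q^{-1}]$ and are therefore $>0$ for $q>0$ on every diagram that occurs; alternatively, peel the projector apart one strand at a time with the single clasp expansion (Lemma~\ref{lem:singleclasp}), Corollary~\ref{cor:MasbaumVogel} and Proposition~\ref{prop:FK}, whose coefficients $[k]/[n]$, $[k+1]/[j+k+1]$ and $[j+k]!\,[k+\ell]!/([k]!\,[j+k+\ell]!)$ are positive for $q>0$. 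Either way $K_{n_i,n_{i+1}}(k)$ becomes a linear combination, with coefficients positive for $q>0$, of planar tangles built only from through strands, nested caps and cups on the strands of blocks $i$ and $i+1$, and the fixed outer boxes $p_{n_i}$ and $p_{n_{i+1}}$.

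\textbf{Step 2 (action on the dual canonical basis).} A dual canonical basis vector of $V(n_1)\otimes\cdots\otimes V(n_L)$ is, by construction, the image under $T_{n_1}\otimes\cdots\otimes T_{n_L}$ of $d(\beta)$ with $d$ a $k$-cap diagram on the $N=n_1+\cdots+n_L$ strands and $\beta$ a standard basis vector of $V(1)^{\otimes(N-2k)}$ with all down-arrows to the left of all up-arrows. I would compose each tangle from Step~1 with this cap diagram and reduce back to dual canonical basis vectors using only: the isotopy move of equation~(\ref{eqn:isotopy}), with coefficient $1$; equation~(\ref{eqn:antisymm}), by which any cap or cup entering a single Jones--Wenzl box with both legs is $0$ --- this is exactly the mechanism producing the vanishing matrix entries, e.g. the column of $K_{1,1}(1)=p_{2}$ at the cup vector, and more generally the columns indexed by basis vectors supported in low total spin; and the combinatorial rule recorded above for the $E$- and $F$-actions, by which contracting a pair of arrows into an arc contributes a factor $[r]>0$. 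The structural point that controls the sign is that no closed loop is ever produced in this reduction --- so the negative bubble value $\varepsilon\circ\delta=-[2]$ never enters --- because any configuration that would close a loop is one in which a cap is trapped with both legs inside a Jones--Wenzl box and is therefore already killed by equation~(\ref{eqn:antisymm}).

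Summing over the terms of Step~1, each matrix entry of $K_{n_i,n_{i+1}}(k)$ in the dual canonical basis is a finite sum of products of strictly positive quantities (an empty sum being $0$), so no entry is negative; that is, each $K_{n_i,n_{i+1}}(k)$ is represented by a nonnegative matrix, which is the form the Perron--Frobenius argument of the following section needs (for $k=0$ one simply has $K_{m,n}(0)=\unity$). I expect the main obstacle to be the bookkeeping in Step~2: showing, uniformly in how the clasp expansion of $p_{m+k}$ interleaves with the caps of $d$, both that no loop survives and that every surviving weight is a single quantum integer. The cleanest route is probably an induction on $k$, stripping one of the $k$ fused strands at a time via Proposition~\ref{prop:FK} (or Lemma~\ref{lem:singleclasp}), so that the general case is reduced to $k=1$ --- essentially the computation already carried out for the $E$- and $F$-actions, where all coefficients are quantum integers $[i]$ and every sign is manifestly positive.
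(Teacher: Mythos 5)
Your overall strategy (graphical calculus, positive expansions of Jones--Wenzl projectors, a no-loop claim) is the paper's strategy, and your factorization $K_{m,n}(k)=(T_m\otimes T_n)(p_{m+k}\otimes p_{n-k})(T_m^*\otimes T_n^*)$ and your remark that $U^*U\geq 0$ alone is not enough are both correct. But the load-bearing claim of your Step 2 is false as stated: it is not true that, once $p_{m+k}$ has been expanded into Temperley--Lieb diagrams, every configuration that would close a loop has a cap trapped inside a single Jones--Wenzl box. Take $m=n=k=1$, so $K_{1,1}(1)=p_2=\mathrm{id}+\tfrac{1}{[2]}\,\delta\varepsilon$ and there are no outer boxes at all, and apply it to the dual canonical basis vector $\delta(1)$: the second term produces the closed bubble $\varepsilon\circ\delta=-[2]$, a genuinely negative contribution, and the entry is $0$ only because it cancels against the identity term. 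The same phenomenon occurs for all $m,n$ whenever the incoming vector has $j\geq 1$ arcs joining sites $i$ and $i+1$: a cap of a diagram in the expansion of $p_{m+k}$ can have one leg among the $m$ strands of site $i$ and one among the $k$ fused strands of site $i+1$ (so equation (\ref{eqn:antisymm}) does not kill it), and it closes into a loop through the vector's arc and the through-strands of the outer projectors. So in your ordering of operations the positivity is emphatically not term-by-term, and your argument would need to show that these negative loop terms are always dominated, which it does not.

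The fix is the reordering you gesture at in your final paragraph, and it is exactly what the paper does: \emph{before} expanding any projector into Temperley--Lieb diagrams, use Proposition \ref{prop:FK} three times to ``cascade'' the $j$ arcs of the incoming vector down past the internal projector to the bottom of the diagram, picking up the positive scalars $\frac{[n-j]![n-k]!}{[n-j-k]![n]!}$ and $\frac{[m+k-j]![m]!}{[m-j]![m+k]!}$ (Proposition \ref{prop:FK} is precisely the identity in which the loop and non-loop contributions have already been summed into a single positive coefficient, vanishing when $n-j<k$). Only then is the reduced middle projector $p_{m+k-j}$ expanded via Theorem \ref{thm:JWFK} and the remaining outer projectors via (\ref{eqn:JWPosIntDecomp}); at that stage nothing remains above them that can close a cup, so the reduction to the dual canonical basis really is loop-free and all surviving coefficients are manifestly positive for $q>0$. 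As written, your proposal identifies the right tools but leaves the essential cancellation unproved, so it has a genuine gap.
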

 \begin{proof}
 Consider the following simplifications of the diagram which are consequences of Proposition \ref{prop:FK}
 \begin{align*}
 \raisebox{-2cm}{
 \begin{tikzpicture}
 [symm/.style={draw,thick, rectangle,minimum width=1cm}]
 \node (n1) at (0,0) [symm] {};
  \node (n2) at (1.5,0) [symm] {};
   \node (n1a) at (0,3) [symm] {};
  \node (n2a) at (1.5,3) [symm] {};
   \node (n1aa) at (0,4) [symm] {};
  \node (n2aa) at (1.5,4) [symm] {};
  \node[draw,thick,rectangle,minimum width=1.5cm] (mid) at (.25,1.5) {};
  \draw (n1.90) to node[auto] {$m$} (mid.-140);
  \draw (n1a.-90) to node[auto,swap] {$m$} (mid.140);
  \draw (n2) to node[auto,swap] {$n-k$}(n2a);
  \draw (mid.40) to node[auto] {$k$} (n2a.-130);
  \draw (mid.-40) to node[auto] {$k$} (n2.130);
  \draw (n1a.50) to [bend left=90] node[auto] {$j$} (n2a.130);
  \draw (n1a.90) -- node[auto] {$m-j$} (n1aa);
  \draw (n2a.90) -- node[auto,swap] {$n-j$} (n2aa);
  \end{tikzpicture}
  }
  & = 
 \raisebox{-2cm}{
 \begin{tikzpicture}
 [symm/.style={draw,thick, rectangle, minimum width=1cm}]
 \node (n1) at (0,0) [symm] {};
  \node (n2) at (1.5,0) [symm] {};
  \node (n2a) at (1.5,3) [symm] {};
  \node (n1aa) at (0,4) [symm] {};
  \node (n2aa) at (1.5,4) [symm] {};
  \node[draw,thick,rectangle, minimum width=1.5cm] (mid) at (.25,1.5) {};
  \draw (n1.90) to node[auto] {$m$} (mid.-140);
  \draw (mid.90) -- ++(0,1.5) to [bend left=90] node[auto] {$j$} (n2a.130);
  \draw (n2) to node[auto,swap] {$n-k$} (n2a);
  \draw (mid.40) to node[auto] {$k$} (n2a.-130);
  \draw (mid.-40) to node[auto] {$k$} (n2.130);
  \draw (mid.140) to node[auto] {$m-j$} (n1aa);
  \draw (n2a) to node[auto,swap] {$n-j$} (n2aa);
  \end{tikzpicture}
  }  \\
&= \frac{[n-j]![n-k]!}{[n-j-k]![n]!}
 \raisebox{-2cm}{
 \begin{tikzpicture}
 [symm/.style={draw,thick, rectangle, minimum width=1cm}]
 \node (n1) at (0,0) [symm] {};
  \node (n2) at (1.5,0) [symm] {};
  \node (n2a) at (1.5,3) [symm] {};
  \node (n1aa) at (0,4) [symm] {};
  \node (n2aa) at (1.5,4) [symm] {};
  \node[draw,thick,rectangle, minimum width=1.5cm] (mid) at (.25,1.5) {};  
  \draw (n1.90) to node[auto] {$m$} (mid.-140);  
\draw (n2.80) -- ++(0,1.5) to [bend right=90] node[auto] {$j$} (mid.15);  
\draw (n2.45) to node[auto,swap] {$n-j-k$} (n2a.-45);  
\draw (mid.90) to node[auto] {$k$} (n2a.-130);  
\draw (mid.-40) to node[auto] {$k$} (n2.130);  
  \draw (mid.140) to node[auto] {$m-j$} (n1aa);
  \draw (n2a) to node[auto,swap] {$n-j$} (n2aa);
\end{tikzpicture}  
} \\
& = \frac{[m+k-j]![m]![n-j]![n-k]!}{[m-j]![m+k]![n-j-k]![n]!}
 \raisebox{-2cm}{
 \begin{tikzpicture}
 [symm/.style={draw,thick, rectangle,minimum width=1cm}]
 \node (n1) at (0,0) [symm] {};
  \node (n2) at (1.5,0) [symm] {};
  \node (n2a) at (1.5,3) [symm] {};
  \node (n1aa) at (0,4) [symm] {};
  \node (n2aa) at (1.5,4) [symm] {};
  \node[draw,thick,rectangle,minimum width=1.5cm] (mid) at (.25,1.5) {};  
  \draw (n1.90) to node[auto] {$m-j$} (mid.-140);  
\draw (n1.20) to [bend left=90] node[auto,swap] {$j$} (n2.160);
\draw (n2.45) to node[auto,swap] {$n-j-k$} (n2a.-45);  
\draw (mid.90) to node[auto] {$k$} (n2a.-130);  
\draw (mid.-40) to node[auto] {$k$} (n2.80);  
  \draw (mid.140) to node[auto] {$m-j$} (n1aa);
  \draw (n2a) to node[auto,swap] {$n-j$} (n2aa);
\end{tikzpicture}
} \\
& = \sum_{\ell=0}^k c_\ell
\raisebox{-1.25cm}{
\begin{tikzpicture}
 [symm/.style={draw,thick, rectangle,minimum width=1cm}]
 \node (n1) at (0,0) [symm] {};
  \node (n2) at (1.5,0) [symm] {};
  \node (n2a) at (1.5,2) [symm] {};
  \node (n1a) at (0,2) [symm] {};
\draw (n1.20) to [bend left=90] node[auto] {$j+\ell$} (n2.160);
\draw (n1a.-20) to [bend right=90] node[auto,swap] {$\ell$} (n2a.-160);
\draw (n1) to node[auto] {$m-j-\ell$} (n1a);
\draw (n2) to node[auto,swap] {$n-j-\ell$} (n2a);
\end{tikzpicture}
}
  \end{align*}
 where 
 $$
 c_\ell\, =\, \frac{[m]![k]![n-j]![n-k]![m+k-j-\ell]!}{[n]![\ell]![m+k]![k-\ell]![n-j-k]![m-j-\ell]!}\, .
 $$ 
 The last equality follows by using Theorem \ref{thm:JWFK} on the projector of size $m+k-j$. 

The positivity result now follows from the fact that the two projectors of size $m-j$ and $n-j$ may be decomposed into a positive sum of elements of the Temperley-Lieb Algebra. All non-zero terms of this expansion have been reduced to elements of the dual canonical basis, and by our calculations, they have positive coefficients. 

\end{proof}
\begin{remark}
	The name Cascade basis is inspired by the above calculation in which the down-turned arcs ``cascade'' downwards as the calculation proceeds.
\end{remark}

\section{Sufficient Conditions for FOEL in the Dual Canonical Basis}
\label{sec:sufficient_conditions}

\subsection{Strictly Positive Eigenvectors of Non-negative Matrices}

A non-negative $n\times n$ matrix $A$ is said to be irreducible if for every $i,j \in \{1,\dots,n\}$ there exists a positive integer $m$ such that $(A^m)_{ij} > 0$.
\begin{thm}[Perron-Frobenius]
\label{thm:PF}
	Let $A$ be a non-negative, irreducible $n \times n$ matrix, and let $\rho(A)$ denote the spectral radius of $A$. Then there exists a unique eigenvector
$\boldsymbol{x} = (x_1,\dots,x_n)^T$ such that
\begin{itemize}
\item $x_1,\dots,x_n > 0$,
\item $x_1+\dots+x_n=1$,
\item $A \boldsymbol{x} = \rho(A) \boldsymbol{x}$
\end{itemize}
\end{thm}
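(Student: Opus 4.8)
The plan is to produce a non-negative eigenvector by a fixed-point argument, bootstrap its strict positivity from irreducibility, and then identify the corresponding eigenvalue with $\rho(A)$ and prove uniqueness. First I would record the standard combinatorial consequence of irreducibility: $(\unity+A)^{n-1}$ has \emph{all} entries strictly positive. Indeed $\big((\unity+A)^{n-1}\big)_{ij}=\sum_{m=0}^{n-1}\binom{n-1}{m}(A^m)_{ij}$, and $(A^m)_{ij}>0$ exactly when the digraph of $A$ has a walk of length $m$ from $i$ to $j$; irreducibility says this digraph is strongly connected, so there is such a walk of some length $\le n-1$ (for $i=j$ the $m=0$ term already contributes). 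The same observation shows no column of $A$ is identically zero, so $A\boldsymbol{x}\neq 0$ for every $\boldsymbol{x}$ in the simplex $\Delta=\{\boldsymbol{x}\in\reals^n: x_i\ge 0,\ \sum_i x_i=1\}$: if $A\boldsymbol{x}=0$ and $x_j>0$, non-negativity forces the $j$th column of $A$ to vanish.

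Next I would define $f:\Delta\to\Delta$ by $f(\boldsymbol{x})=A\boldsymbol{x}/\|A\boldsymbol{x}\|_1$, which is continuous on the compact convex set $\Delta$; Brouwer's fixed-point theorem yields $\boldsymbol{x}^\ast\in\Delta$ with $A\boldsymbol{x}^\ast=\lambda\boldsymbol{x}^\ast$, $\lambda=\|A\boldsymbol{x}^\ast\|_1>0$. Applying $(\unity+A)^{n-1}$ gives $(1+\lambda)^{n-1}\boldsymbol{x}^\ast=(\unity+A)^{n-1}\boldsymbol{x}^\ast$, whose right side is strictly positive entrywise since $\boldsymbol{x}^\ast\ge 0$ is nonzero, so $x^\ast_i>0$ for all $i$; the normalization $\sum_i x^\ast_i=1$ is built in. (One can replace Brouwer by a Collatz--Wielandt argument: maximize $r(\boldsymbol{x})=\min_{i:x_i>0}(A\boldsymbol{x})_i/x_i$ over $\Delta$, using that $r$ is non-decreasing under $\boldsymbol{x}\mapsto(\unity+A)^{n-1}\boldsymbol{x}$ to transfer the maximization to the compact positive set $(\unity+A)^{n-1}\Delta$ where $r$ is continuous, then checking the maximizer is an eigenvector.)

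It remains to show $\lambda=\rho(A)$ and uniqueness. Applying the construction to $A^{T}$, which is also irreducible, gives a strictly positive row vector $\boldsymbol{z}$ with $\boldsymbol{z}A=\lambda'\boldsymbol{z}$, $\lambda'>0$. If $\mu$ is any complex eigenvalue of $A$ with eigenvector $\boldsymbol{y}$, the triangle inequality gives $|\mu|\,|\boldsymbol{y}|\le A|\boldsymbol{y}|$ entrywise, hence $|\mu|\,\boldsymbol{z}|\boldsymbol{y}|\le\boldsymbol{z}A|\boldsymbol{y}|=\lambda'\boldsymbol{z}|\boldsymbol{y}|$ with $\boldsymbol{z}|\boldsymbol{y}|>0$, so $|\mu|\le\lambda'$; taking $\mu=\lambda$ forces $\lambda\le\lambda'$, and the symmetric argument gives $\lambda'\le\lambda$, so $\lambda=\lambda'=\rho(A)$. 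For uniqueness, if $A\boldsymbol{w}=\lambda\boldsymbol{w}$ then, splitting into real and imaginary parts, it suffices to treat real $\boldsymbol{w}$; with $t=\min_i w_i/x^\ast_i$ the vector $\boldsymbol{w}-t\boldsymbol{x}^\ast\ge 0$ is a $\lambda$-eigenvector with a zero coordinate, so by the positivity bootstrap above it must be $0$, i.e.\ $\boldsymbol{w}=t\boldsymbol{x}^\ast$; thus the $\lambda$-eigenspace is one-dimensional and the conditions $x_i>0$, $\sum_i x_i=1$ pin down $\boldsymbol{x}$ uniquely. I expect the identification $\lambda=\rho(A)$ to be the main obstacle: the fixed-point and positivity steps are soft, whereas showing that the eigenvalue just produced dominates every other eigenvalue in modulus is where irreducibility enters essentially, through the left Perron vector pairing (equivalently, through the Collatz--Wielandt characterization).
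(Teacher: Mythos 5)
Your proof is correct and complete. The paper does not actually prove this theorem --- its ``proof'' is the single line ``See \cite{dym:2007}'' --- so there is no argument in the text to compare against; you have supplied the standard self-contained one. The structure you use (positivity of $(\unity+A)^{n-1}$ from strong connectivity, Brouwer's fixed-point theorem on the simplex to produce a non-negative eigenvector, the $(\unity+A)^{n-1}$ bootstrap to upgrade it to a strictly positive one, the pairing against the left Perron vector of $A^{T}$ to show the eigenvalue dominates every $|\mu|$ and hence equals $\rho(A)$, and the subtraction trick $\boldsymbol{w}-t\boldsymbol{x}^\ast$ for simplicity of the eigenvalue) is exactly the classical route, and each step is justified: in particular you correctly verify that $A\boldsymbol{x}\neq 0$ on the simplex so that the normalized map is well defined, and your uniqueness argument does establish that the $\rho(A)$-eigenspace is one-dimensional, which together with the normalization and positivity constraints pins down $\boldsymbol{x}$ as claimed. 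The Collatz--Wielandt alternative you sketch in parentheses is also a valid substitute for Brouwer. No gaps.
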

\begin{proof}
	See \cite{dym:2007}
\end{proof}

The following lemma gives a useful condition for distinguishing the eigenvector with maximal eigenvalue.

\begin{lemma}
Suppose that $A$ is an $n\times n$ non-negative matrix.  
Suppose that $\boldsymbol{x} = (x_1,\dots,x_n)^T$ is a vector such that
$x_1,\dots,x_n>0$, and suppose that $A \boldsymbol{x} = \lambda \boldsymbol{x}$
for some scalar $\lambda$. Then $\lambda = \rho(A)$.
\end{lemma}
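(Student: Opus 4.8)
The plan is to establish the two inequalities $\lambda \le \rho(A)$ and $\rho(A)\le \lambda$ separately, using the strictly positive vector $\boldsymbol{x}$ as a comparison vector; no irreducibility of $A$ is needed, so Theorem \ref{thm:PF} is not invoked. First I would note that $\lambda$ is real and nonnegative: since $A_{ij}\ge 0$ and $x_j>0$ for all $j$, each component of $A\boldsymbol{x}$ satisfies $(A\boldsymbol{x})_i=\sum_j A_{ij}x_j\ge 0$, and this equals $\lambda x_i$ with $x_i>0$, forcing $\lambda\ge 0$. In particular $\lambda$ is one of the eigenvalues of $A$, so by the definition of the spectral radius $\lambda=|\lambda|\le \rho(A)$.

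For the reverse inequality, let $\mu\in\comps$ be an arbitrary eigenvalue of $A$ with a (possibly complex) eigenvector $\boldsymbol{y}=(y_1,\dots,y_n)^T\ne 0$, and write $|\boldsymbol{y}|=(|y_1|,\dots,|y_n|)^T$. From $A\boldsymbol{y}=\mu\boldsymbol{y}$ and $A_{ij}\ge 0$ we get, componentwise,
\[
|\mu|\,|y_i|\, =\, \Bigl|\sum_j A_{ij}y_j\Bigr|\, \le\, \sum_j A_{ij}|y_j|\, =\, (A|\boldsymbol{y}|)_i\, ,
\]
that is, $|\mu|\,|\boldsymbol{y}|\le A|\boldsymbol{y}|$ entrywise. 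Now set $t=\max_{1\le i\le n}|y_i|/x_i$; since $\boldsymbol{y}\ne 0$ and $\boldsymbol{x}>0$ we have $t>0$, and there is an index $i_0$ with $|y_{i_0}|=t\,x_{i_0}$ and $|y_j|\le t\,x_j$ for every $j$. Using $A\ge 0$ together with $A\boldsymbol{x}=\lambda\boldsymbol{x}$,
\[
|\mu|\,t\,x_{i_0}\, =\, |\mu|\,|y_{i_0}|\, \le\, (A|\boldsymbol{y}|)_{i_0}\, \le\, t\,(A\boldsymbol{x})_{i_0}\, =\, t\,\lambda\,x_{i_0}\, ,
\]
and dividing by $t\,x_{i_0}>0$ gives $|\mu|\le\lambda$. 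As $\mu$ was an arbitrary eigenvalue, $\rho(A)\le\lambda$, and combining with the first part yields $\lambda=\rho(A)$.

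I do not expect a genuine obstacle: this is the classical argument showing that a strictly positive eigenvector of a nonnegative matrix must correspond to the spectral radius. The only points requiring a little care are to permit $\mu$ and $\boldsymbol{y}$ to be complex when comparing against $\rho(A)$, and to observe that the maximizing ratio $t$ is strictly positive so that the final division is legitimate.
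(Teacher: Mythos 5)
Your proof is correct, and it takes a genuinely different route from the paper's. The paper perturbs $A$ to the strictly positive matrices $B(t)$ with entries $a_{ij}+t x_i$, observes that $\boldsymbol{x}$ remains a positive eigenvector of $B(t)$ with eigenvalue $\lambda + t\sum_j x_j$, invokes the uniqueness part of the Perron--Frobenius theorem (Theorem \ref{thm:PF}) to identify that eigenvalue with $\rho(B(t))$, and then lets $t\downarrow 0$ using continuity of the spectral radius. You instead give a direct Collatz--Wielandt-style comparison: for an arbitrary eigenvalue $\mu$ with eigenvector $\boldsymbol{y}$ you bound $|\mu|\,|\boldsymbol{y}|\le A|\boldsymbol{y}|$ entrywise and then gauge $|\boldsymbol{y}|$ against the strictly positive $\boldsymbol{x}$ via $t=\max_i |y_i|/x_i$, obtaining $|\mu|\le\lambda$, while $\lambda\le\rho(A)$ is immediate since $\lambda\ge 0$ is itself an eigenvalue. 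Your argument is more elementary and self-contained --- it needs neither the Perron--Frobenius theorem nor continuity of spectra under perturbation --- whereas the paper's version is shorter on the page because it leans on the already-stated Theorem \ref{thm:PF}. Both are complete; the one point you rightly flag, that $t>0$ so the final division is legitimate, is handled correctly.
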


\begin{proof}
Let us define a ray of matrices, $B(t)$ for $t\geq 0$ such that
$$
b_{ij}(t) = a_{ij} + t x_i\, .
$$
Then $B(0) = A$ and $B(t)$ is strictly positive for each $t>0$.
Moreover, for the positive eigenvector $\boldsymbol{x}$ of the matrix $A$, we have
$$
(B \boldsymbol{x})_i\, =\, (A \boldsymbol{x})_i + t \sum_{j=1}^{n} x_i x_j\, =\, (\lambda + t \sum_{j=1}^{n} x_j) x_i\, .
$$
So $\boldsymbol{x}$ is a positive eigenvector of $B(t)$.
By Theorem \ref{thm:PF}, this means that the spectral radius $\rho(B(t))$ is given by
the eigenvalue:
$$
\rho(B(t))\, =\, \lambda + t \sum_{j=1}^{n} x_j\, .
$$
The eigenvectors and eigenvalues of $B(t)$ are continuous functions
of $t$, although not necessarily differentiable functions.
Therefore, taking the limit, we see that
$$
\rho(A)\, =\, \lim_{t \downarrow 0} \rho(B(t))\, =\, \lambda\, .
$$
\end{proof}

\begin{prop}
Suppose that $A$ is an $n\times n$ real matrix such that $a_{ij} \geq 0$ for all $i,j \in \{1,\dots,n\}$ such that $i\neq j$, and that for $t$ sufficiently large, $A+tI$ is non-negative. 
Suppose that $\boldsymbol{x} = (x_1,\dots,x_n)^T$ is a vector such that
$x_1,\dots,x_n>0$, and suppose that $A \boldsymbol{x} = \lambda \boldsymbol{x}$
for some scalar $\lambda$.
Then
$$
\lambda\, =\, \max\{ \textrm{Re}(\mu)\, :\, \mu \in \operatorname{spec}(A) \}\, .
$$
\label{prop:posEigen}
\end{prop}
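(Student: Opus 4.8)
The plan is to reduce the statement to the lemma immediately preceding it by means of a spectral shift. First I would record that $\lambda$ is automatically real: since $A$ has real entries and $\boldsymbol{x}$ is a nonzero real vector, the equation $A\boldsymbol{x} = \lambda\boldsymbol{x}$ forces $\lambda \in \R$. Next, choose $t>0$ large enough that the following finitely many conditions hold at once: $B := A + tI$ is non-negative (possible by hypothesis), $t > -\lambda$, and $t > -\textrm{Re}(\mu)$ for every $\mu \in \operatorname{spec}(A)$. Then $B\boldsymbol{x} = (\lambda+t)\boldsymbol{x}$ with $\boldsymbol{x}$ strictly positive, so the preceding lemma applies to $B$ and gives $\rho(B) = \lambda + t$.

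Now I would unwind this. Since $\operatorname{spec}(B) = \{\mu + t : \mu \in \operatorname{spec}(A)\}$, every $\mu \in \operatorname{spec}(A)$ satisfies $|\mu+t| \le \rho(B) = \lambda + t$. Writing $\mu = \textrm{Re}(\mu) + i\,\textrm{Im}(\mu)$, this says $(\textrm{Re}(\mu)+t)^2 + \textrm{Im}(\mu)^2 \le (\lambda+t)^2$, hence in particular $(\textrm{Re}(\mu)+t)^2 \le (\lambda+t)^2$. Because both $\textrm{Re}(\mu)+t$ and $\lambda+t$ are positive by the choice of $t$, taking square roots yields $\textrm{Re}(\mu) \le \lambda$. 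Since $\lambda$ itself belongs to $\operatorname{spec}(A)$ and equals its own real part, we conclude $\lambda = \max\{\textrm{Re}(\mu) : \mu \in \operatorname{spec}(A)\}$.

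There is no serious obstacle; the only point needing a moment's care is the square-root step, which is precisely why $t$ must be taken large enough to push all the relevant real parts (and $\lambda$) strictly into the positive axis — otherwise $|\mu+t| \le \lambda+t$ would not imply $\textrm{Re}(\mu)\le\lambda$. I would also remark that the hypothesis ``$A+tI$ is non-negative for $t$ sufficiently large'' is in fact an automatic consequence of the off-diagonal condition $a_{ij}\ge 0$ for $i\neq j$, so the proposition could be stated using only that condition.
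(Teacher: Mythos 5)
Your argument is correct and takes essentially the same route as the paper: shift to $B=A+tI$ for $t$ large, apply the preceding lemma to identify $\rho(B)=\lambda+t$, and compare $|\mu+t|$ with $\lambda+t$ to extract the bound $\textrm{Re}(\mu)\le\lambda$. If anything, your handling of the final square-root step is spelled out more carefully than in the paper's version.
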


\begin{proof}
Apply the previous lemma to the matrices $A+tI$ for $t>0$.
Since
$$
\operatorname{spec}(A+tI)\, =\, \{\mu + t \, :\, \mu \in \operatorname{spec}(A)\}\, ,
$$
we see that
\begin{align*}
\rho(A+tI)\, 
&=\, \max_{\mu \in \operatorname{spec}(A)} |\mu + t|\\
&=\, \max_{\mu \in \operatorname{spec}(A)} \sqrt{[\textrm{Re}(\mu) + t]^2 + [\textrm{Im}(\mu)]^2}\\
\end{align*}
For $t$ sufficiently large the eigenvalue $\mu$ which attains the maximum
will have the maximal real part.
But also for $t$ sufficiently large, $A+tI$ will have nonnegative entries, because
$t$ will be larger than the largest negative part of any diagonal matrix entry of $A$.
Therefore, we will have that $\rho(A+tI) = \lambda + t$.
From this we conclude the result.
\end{proof}

\subsection{Irreducibility in Highest Weight Spaces}
Let $A$ be an $n \times n$ matrix.  For any two basis vectors $v_i$ and $v_j$, if there exist $n_i, n_j \in \nats$ such that $(A^{n_i})_{ji} \neq 0$ and $(A^{n_j})_{ij} \neq 0$, we say that $v_i$ is connected to $v_j$ with respect to $A$. This is an equivalence relation, in particular it is symmetric. If every basis vector is connected to all others, then $A$ is irreducible. 

Let $\Hi^{(n_1 + \cdots + n_L -2k)}_{HW}$ denote the space of highest weight vectors with weight $q^{n_1+ \cdots + n_L -2k}$. 

Recall that each dual canonical basis vector of $\Hi^{(n_1+\cdots + n_L - 2k)}_{HW}$ is uniquely characterized by a cap diagram with $k$ arcs. Such a diagram in turn is also uniquely characterized by the positions of right legs of arcs. Number the arcs from left to right according to position of right legs. We record the positions of the right legs in a $k$-tuple $(x_1,x_2, \ldots, x_k)$ where $x_i \in \{1,2, \ldots,L\}$. 

Thus, given a dual canonical basis vector, we can define its \emph{height}.
\begin{equation*}
	h(v) = \sum_{i=1}^k x_i
\end{equation*}
We say that an arc at position $x_i$ is in \emph{maximal position} if $x_i +1$ substituted for $x_i$ is not a permissible configuration of arcs.
There is a unique highest vector with all arcs in maximal position. 

We wish to demonstrate that $H= \sum_{i=1}^{L-1} \sum_{k^\prime =1}^{\min (n_i,n_{i+1})} J_{k^\prime}^{(i)} K_{n_i,n_{i+1}}(k^\prime)$ is irreducible in the space $\Hi^{(n_1 + \cdots + n_L-2k)}_{HW}$ in the dual canonical basis for any $k \in \{0, 1, \ldots , \left\lfloor \frac {n_1 + \cdots + n_L}2 \right\rfloor \}$ where we assume that for each $i$, there is at least one $k^\prime$ such that $J_{k^\prime}^{(i)}\neq 0$. It will suffice to show the following proposition.

\begin{prop}
	Every vector in the dual canonical basis of $\Hi^{(n_1+\cdots + n_L - 2k)}_{HW}$ is connected to the highest vector with respect to $H$.
	\label{prop:irreducible}
\end{prop}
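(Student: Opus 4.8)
The plan is to prove irreducibility by producing, for every dual canonical basis vector $v$ of $\Hi^{(n_1+\cdots+n_L-2k)}_{HW}$, a path of nonzero matrix entries of $H$ connecting $v$ to the unique highest vector $v_{\max}$ with all arcs in maximal position, using the \emph{height} function $h$ as a monovariant. Since the connectivity relation defined above is symmetric, it suffices to show that if $v \neq v_{\max}$ then there exists a basis vector $w$ with $h(w) > h(v)$ such that both $\langle w, H v\rangle \neq 0$ and $\langle v, H w\rangle \neq 0$; iterating raises the height until we reach $v_{\max}$, which is the unique vector of maximal height. Because $h$ takes finitely many values and $v_{\max}$ is the unique maximizer, finitely many such steps suffice.

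First I would fix a basis vector $v$ given by a cap diagram with right-leg positions $(x_1,\dots,x_k)$, not all in maximal position, and let $i$ be chosen so that arc $x_i$ can be moved one step to the right to another permissible configuration (this is exactly the statement that $v \neq v_{\max}$). I would then identify a single nearest-neighbor term $h_{a,a+1}$ in $H$ — for an appropriate bond $\{a,a+1\}$ near the leg being moved — whose action on $v$ produces, among its terms, the basis vector $w$ obtained by sliding that arc. Here is where the positivity results of Section \ref{sec:dual_canonical_basis} do the work: by the hypothesis that for each $i$ there is at least one $k'$ with $J_{k'}^{(i)} \neq 0$, together with Proposition \ref{prop:cascadepositivity} (all matrix entries of each $K_{m,n}(k')$ in the dual canonical basis are positive), no cancellation can occur between the cascade terms making up $h_{a,a+1}$ — all of their contributions to the $(w,v)$ entry have the same sign. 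Hence $\langle w, H v\rangle \neq 0$, and symmetrically $\langle v, H w\rangle \neq 0$, giving the connection $v \sim w$ with $h(w) = h(v) + 1 > h(v)$.

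The routine-but-necessary combinatorial step is to verify that a local rearrangement of one arc (raising $h$ by one) really is realized by the action of a single cascade operator $K_{n_a,n_{a+1}}(k')$ on the two tensor factors it touches — i.e. that the corresponding matrix entry of that cascade operator, restricted to the relevant weight subspace, is nonzero. This follows from the explicit graphical expansion in the proof of Proposition \ref{prop:cascadepositivity}: acting on a dual canonical basis vector, a cascade operator produces a sum, with strictly positive coefficients, of dual canonical basis vectors, and one checks that the arc-slid configuration $w$ appears among them. Care is needed at the boundary of the chain and when the arc to be moved has its right leg adjacent to another arc or to a through-line, since then the "one step right" move may require passing through a different bond; but in each case some bond $\{a,a+1\}$ with a nonvanishing coupling mediates the move.

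The main obstacle I anticipate is precisely this case analysis of \emph{which} elementary moves on cap diagrams strictly increase $h$ and are simultaneously realized by a single nearest-neighbor cascade term — in particular handling nested arcs, arcs sharing an endpoint-region, and through-lines, and making sure the local weight on the bond one acts on is compatible so that the relevant cascade matrix entry is genuinely nonzero rather than accidentally living in a trivial block. Once the "one permissible move that raises $h$" lemma is established, the global argument is immediate: every $v \neq v_{\max}$ has such a move, $h$ is bounded, so iterating reaches $v_{\max}$, and symmetry of $\sim$ upgrades this to full connectivity, proving the proposition.
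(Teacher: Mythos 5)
Your overall strategy --- sliding arcs using the graphical expansion of the cascade operators and invoking positivity to rule out cancellation between contributions --- is the same as the paper's, and your forward step (producing $w$ from $v$ by moving the rightmost non-maximal arc one position to the right, with $\langle w, Hv\rangle \neq 0$) is exactly the paper's use of the $\ell=1$ term in the calculation of Proposition \ref{prop:cascadepositivity}. The genuine gap is the word ``symmetrically'': the claim $\langle v, Hw\rangle \neq 0$ does not follow from the forward step. The dual canonical basis is not orthogonal, so the matrix of the Hermitian operator $H$ in this basis need not be symmetric, and indeed it is not: already for three spin-$\tfrac12$ sites, the generator $U_{2}$ restricted to the two-dimensional highest weight space of weight $q$ sends the cap on sites $(1,2)$ to the cap on sites $(2,3)$ but annihilates the reverse off-diagonal entry; the reverse move is produced only by $U_{1}$. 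Reversing a single step is therefore a genuinely different assertion: it amounts to sliding an arc's \emph{left} leg to the \emph{left}, it is mediated by a different bond, and it requires the mirror image of the graphical computation rather than its transpose. As written, your argument establishes only a directed path from $v$ to the highest vector, which is half of what the connectivity relation demands.

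The paper sidesteps single-step reversibility entirely by building a directed cycle: from any $v$ it moves right legs rightward to reach the highest vector; by reflecting the calculation about a vertical axis it moves left legs leftward from the highest vector down to the lowest vector; and from the lowest vector it moves right legs rightward again to reconstruct $v$. Since connectivity only requires some power of the matrix to have a nonzero entry in each direction, this three-leg journey suffices and uses only the two computations actually carried out (rightward motion of right legs and its mirror image). To repair your argument you should either adopt this routing, or explicitly prove and apply the reflected single-step lemma --- that an arc not in minimal position can be slid left by the bond to its left with nonzero coefficient, landing precisely on $v$ --- noting that the hypothesis that every bond carries some nonvanishing coupling is what makes the mediating bond available in both directions.
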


\begin{proof}
	Let $v$ be an element of the dual canonical basis in the space $\Hi^{(n_1+\cdots+n_L-2k)}_{HW}$ with the right legs of arcs located at sites $(x_1,x_2, \ldots, x_k)$. Suppose that $\star$ is an arc that is not in maximal position, located at site $i$. By interaction between sites $i$ and $i+1$ via the $\ell=1$ term in the calculation in Proposition \ref{prop:cascadepositivity}, we see that there is a coefficient of the ensuing expansion with $\star$ replaced by an arc at site $i+1$.
This will always be possible with the rightmost arc, $\star$, that is not in maximal position, since by non-maximality, we can choose expansions of the two upper projectors that connect $\star$ to a free arrow or to the left leg of an arc. For example,

\begin{equation*}
	\begin{tikzpicture}
		\node (n1) at (0,0) [draw,thick,rectangle,minimum height=1cm,minimum width=2cm] {$4$};
		\node (n2) at (3,0) [draw,thick,rectangle,minimum height=1cm,minimum width=2cm] {$4$};
		\node (n1a) at (0,3) [draw,dashed,rectangle,minimum height=1cm,minimum width=2cm] {};
		\node (n2a) at (3,3) [draw,dashed,rectangle,minimum height=1cm,minimum width=2cm] {};
		\draw[] (n1.135) to (n1a.-135);
		\draw[] (n1a.45) to (n1a.-45);
		\draw[] (n2.45) to (n2a.-45);
		\draw[] (n1.30) to [bend left=90] (n2.150);
		\draw[] (n1.50) to [bend left=90] (n2.130);
		\draw[] (n1.100) to [bend left=90] (n2.80);
		\draw[] (n1a.-45) to [bend right=90] (n2a.-135);
		\draw[] (n1a.135) arc (0:90:.5);
		\draw[] (n1a.45) arc (0:90:1) node[anchor=east] {$\star$};
		\draw[->] (n2a.45) -- ++(0,1);
		\draw[->] (n2a.135) -- ++(0,1);
		\draw[] (n1a.-135) to (n1a.135);
		\draw[] (n2a.-45) -- (n2a.45);
		\draw[] (n2a.-135) -- (n2a.135);
	\end{tikzpicture}
\end{equation*}
The end result is that the arc with right leg labelled by $\star$ at site $x_i$ is replaced by a new arc with right leg at site $x_i+1$. 
Thus by repeated application of $H$, we traverse from $v$ to the highest vector by raising the rightmost non-maximal arc until we can do so no longer.
Now by reflecting about a vertical axis, one sees that it is possible to traverse from the highest vector to the lowest vector by repeating a similar argument for left legs of arcs. Finally, from the lowest vector, we traverse back to $v$ by setting arc $x_k$ in place, then $x_{k-1}$, and so on until we have obtained $v$. Thus any $v$ is connected to the highest vector, and thus $H|_{\Hi^{(L-2k)}_{HW}}$ is irreducible. 
\end{proof}
Recall that we are interested in the class of Hamiltonians
\begin{equation}
	H= \sum_{i=1}^{L-1} \sum_{k^\prime =1}^{\min (n_i,n_{i+1})} J_{k^\prime}^{(i)} K_{n_i,n_{i+1}}(k^\prime)
	\label{eqn:HamiltonianClass}
\end{equation}
where $J_{k^\prime}^{(i)} \leq 0 $
Thus when we restrict to any particular $\Hi_{HW}^{(n_1+\cdots+n_L-2k)}$, $t-H$ is a non-negative irreducible matrix in the dual canonical basis. By applying the Perron-Frobenius theorem to $t-H$, one obtains a strictly positive (highest weight) eigenvector of $t-H$ (or equivalently of $H$) which we denote by $\phi^{(k)}$. This eigenvector has eigenvalue $t-E_0(H,n_1 + \cdots + n_L -2k)$. 
\subsection{Comparing Energy Levels}
We are now able to provide sufficient conditions for FOEL for Hamiltonians in the form of equation \ref{eqn:HamiltonianClass}.
\begin{thm}
	Let $H$ be given such that there is some $k^\prime$ such that $J_{k^\prime}^{(i)} \neq 0$ for each $i$, and such that all off-diagonal matrix elements of $H$ in the dual canonical basis are non-positive. Then
	\begin{equation*}
		E_0(H,n_1 + \cdots + n_L -2k) \leq E_0(H,n_1 + \cdots n_L - 2k - 2) \textrm{ for all } 0 < k < \left\lfloor \frac {n_1+\cdots+n_L}2 \right\rfloor -1
	\end{equation*}
or equivalently, FOEL is satisfied.
\end{thm}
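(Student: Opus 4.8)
The plan is to deduce FOEL from the displayed family of consecutive inequalities, proving each one by transporting the Perron--Frobenius ground state of one spin sector into the neighbouring weight subspace via the lowering operator $F$, and exploiting that $F$ has a non-negative matrix in the dual canonical basis. Write $N=n_1+\cdots+n_L$, fix $k$, and put $m=N-2k$, $m'=m-2$. Recall that the total spin-$s$ subspace is $\U$-isomorphic to $\Hi_{HW}^{(s)}\otimes V(s)$ with $H$ acting on the first factor, so $E_0(H,s)=\min\spec\bigl(H|_{\Hi_{HW}^{(s)}}\bigr)$. For $t$ large $t-H$ has a non-negative matrix in the dual canonical basis, and by Proposition~\ref{prop:irreducible} (using that each site has some $J_{k'}^{(i)}\ne 0$) its restriction to $\Hi_{HW}^{(m)}$ and to $\Hi_{HW}^{(m')}$ is irreducible; Perron--Frobenius together with Proposition~\ref{prop:posEigen} yields the strictly positive eigenvectors $\phi^{(k)}\in\Hi_{HW}^{(m)}$ and $\phi^{(k+1)}\in\Hi_{HW}^{(m')}$ of $H$, with eigenvalues $E_0(H,m)$ and $E_0(H,m')$ respectively.

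First I would record a block structure on the full weight-$m'$ subspace $W=\{v:Kv=q^{m'}v\}$. Since $H$ commutes with $K$ and $E$, the subspace $\Hi_{HW}^{(m')}=W\cap\ker E$ is $H$-invariant; by Proposition~\ref{prop:bijection} it is spanned by exactly those dual canonical basis vectors of $W$ that are cap diagrams with $k+1$ arcs and all through-strands up, and the remaining dual canonical basis vectors of $W$ span a complement. Ordering $\Hi_{HW}^{(m')}$ first, the matrix of $t-H$ in the dual canonical basis of $W$ is block upper triangular,
$$
t-H\;=\;\begin{pmatrix}A & C\\ 0 & D\end{pmatrix},\qquad A=(t-H)|_{\Hi_{HW}^{(m')}},
$$
where $A$ is non-negative and irreducible with $\rho(A)=t-E_0(H,m')$, and $C\ge 0$ because its entries are off-diagonal entries of $-H$.

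The core step is to apply $F$ to $\phi^{(k)}$. Because $[H,F]=0$, the vector $y:=F\phi^{(k)}\in W$ satisfies $Hy=E_0(H,m)\,y$, i.e. $(t-H)y=\nu y$ with $\nu:=t-E_0(H,m)$; and since $F$ has a non-negative matrix in the dual canonical basis while $\phi^{(k)}>0$, we have $y\ge 0$. Writing $y=(y_1,y_2)^{T}$ in the block decomposition above (so $y_1,y_2\ge 0$), the block-$1$ row of $(t-H)y=\nu y$ reads $Ay_1+Cy_2=\nu y_1$, hence $Ay_1=\nu y_1-Cy_2\le\nu y_1$ entrywise. Provided $y_1\ne 0$, pairing with a strictly positive left Perron--Frobenius eigenvector $z$ of the irreducible matrix $A$ gives $\rho(A)\,z^{T}y_1=z^{T}Ay_1\le\nu\,z^{T}y_1$ with $z^{T}y_1>0$, so $\rho(A)\le\nu$; that is, $t-E_0(H,m')\le t-E_0(H,m)$, i.e. $E_0(H,N-2k)\le E_0(H,N-2k-2)$. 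Since this argument is uniform in $k$, summing these inequalities gives monotonicity of $E_0(H,\cdot)$ along the total spins occurring in the chain, which is the FOEL property.

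The main obstacle is the remaining claim that $y_1\ne 0$, i.e. that $F\phi^{(k)}$ has a nonzero coefficient on some cap diagram with $k+1$ arcs and all through-strands up. Because $F$ has a non-negative matrix and $\phi^{(k)}$ has strictly positive coefficients on every cap diagram with $k$ arcs and all through-strands up, it is enough to exhibit one such diagram $d_0$ whose image $Fd_0$, expanded by the rules for the action of $F$ recalled in Section~\ref{sec:dual_canonical_basis}, contains a $(k+1)$-arc all-up diagram with positive coefficient. This is a short combinatorial verification: when $m=N-2k\ge 2$ (true in the stated range, where in fact $m\ge 4$) one chooses $d_0$ so that flipping its rightmost up-strand to a down-strand and contracting it with an up-strand to its left produces a valid $(k+1)$-arc all-up diagram, with coefficient $[1]=1>0$; for the general case $n_i>1$ one additionally arranges the two contracted strands to lie on distinct sites, so the term is not annihilated via relation~(\ref{eqn:antisymm}). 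This resolves the only nontrivial point and completes the proof.
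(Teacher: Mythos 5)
Your proposal is correct and rests on the same two pillars as the paper's proof --- the Perron--Frobenius eigenvector of $(t-H)$ on a highest weight space, transported into the next weight space by the lowering operator $F$, which is non-negative in the dual canonical basis --- but the final comparison of eigenvalues is carried out differently, and in a way that is arguably tighter. The paper works inside $F\Hi_{HW}^{(m)}$, asserts that $F\varphi^{(m)}$ is \emph{strictly} positive in every dual canonical coordinate of that space, and then invokes Proposition~\ref{prop:posEigen} to identify its eigenvalue as the maximal one, with $\varphi^{(m')}$ supplying the competing eigenvalue $t-E_0(H,m')$. You instead work in the full weight space, use only the block upper triangular form $\left(\begin{smallmatrix}A & C\\ 0 & D\end{smallmatrix}\right)$ coming from $H$-invariance of $\Hi_{HW}^{(m')}$, and close the argument with a Collatz--Wielandt-type pairing of the sub-eigenvector inequality $Ay_1\le\nu y_1$ against a strictly positive left Perron vector of the irreducible block $A$. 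This buys you something real: you need only $y=F\phi^{(k)}\ge 0$ and $y_1\ne 0$, whereas the paper's route needs strict positivity of \emph{all} coordinates of $F\varphi^{(m)}$, a stronger statement that the paper asserts without proof. You also avoid having to characterize $F\Hi_{HW}^{(m)}$ explicitly (the paper's equation~(\ref{eqn:FHdecomp})) and the ``$H$ never decreases the number of arcs'' discussion.

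The one soft spot is your combinatorial verification of $y_1\ne 0$: the recipe ``flip the rightmost up-strand and contract with the up-strand to its left'' does not always produce a nonzero term, because whenever the rightmost site carries two or more free strands (which is forced when $k<n_L-1$) the rightmost two up-strands sit on the same Jones--Wenzl box and the contraction is killed by~(\ref{eqn:antisymm}); choosing $d_0$ cleverly does not always escape this. The clean repair is to run the construction backwards: if $\Hi_{HW}^{(m')}=0$ the inequality is vacuous; otherwise pick any $(k{+}1)$-arc all-up diagram $e$ and delete an \emph{outermost} arc. The result $d_0$ is a valid $k$-arc all-up diagram whose two newly freed strands are consecutive free strands on distinct sites, and the term of $Fd_0$ in which the right one (say the $i$-th up-strand from the right) is flipped and contracted with the left one reproduces $e$ with coefficient $[i]>0$; since all contributions to $F\phi^{(k)}$ are non-negative there is no cancellation, so $y_1\ne 0$. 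With that adjustment your argument is complete.
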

\begin{proof}
	First, we remark that the condition that $J_{k^\prime}^{(i)}$ be non-zero is only required to ensure that $H$ is irreducible so that the Perron-Frobenius theorem applies. 

	We begin by demonstrating that $H$ is lower block triangular in $F\Hi^{(n_1 +\cdots + n_L-2k)}_{HW}$. By the action of the lowering operator, one can see that 
	\begin{equation}
		\begin{split}
		F\Hi^{(n_1 + \cdots + n_L -2k)}_{HW} & = \textrm{span} \{ \textrm{ one } \down , k \textrm{ arcs }\} \oplus \textrm{span} \{ \textrm{ zero } \down, k+1 \textrm{ arcs } \} \\ 
		& = \textrm{span} \{ \textrm{ one } \down, k \textrm{ arcs } \} \oplus \Hi^{(n_1+ \cdots+ n_L -2k-2)}_{HW}
		\end{split}
		\label{eqn:FHdecomp}
	\end{equation}
	Additionally, we have the fact that in the dual canonical basis, $H$ never decreases the number of arcs since the local Hamiltonians expand via the calculation in Proposition \ref{prop:cascadepositivity}, and the upturned $\ell$ arcs cannot form self-loops. In the space $F\Hi^{(n_1 + \cdots + n_L -2k)}_{HW}$, it is possible for the action of $H$ to add one arc at the expense of removing a $\down$. 
Consider 
\begin{equation*}
	\begin{tikzpicture}
		\node (n1) at (0,0) [draw,thick,rectangle, minimum width=1cm] {};
		\node (n2) at (1.5,0) [draw,thick,rectangle, minimum width=1cm] {};
		\node (n1a) at (0,3) [draw,thick,rectangle, minimum width=1cm] {};
		\node (n2a) at (1.5,3) [draw,thick,rectangle, minimum width=1cm] {};
		\node (mid) at (0,1.5) [draw,thick,rectangle,minimum width=1cm] {};
		\draw[] (n1) to node[auto] {$n_i$} (mid);
		\draw[] (mid) to node[auto] {$n_i$} (n1a);
		\draw[] (mid.45) to node[auto] {$k$} (n2a);
		\draw[] (mid.-45) to node[auto] {$k$} (n2);
		\draw[] (n1a.20) to [bend left=90] (n2a.160);
		\draw (n2) to node[auto,swap] {$n_{i+1}$} (n2a);
		\draw[<-] (n1a.160) -- ++(0,.5);
		\draw[->] (n1a.150) -- node[right] {\tiny{$\cdots$}} ++(0,.5);
		\draw[->] (n1a.25) -- ++(0,.5);
		\draw[->] (n2a.150) -- node[right] {\tiny{$\cdots$}} ++(0,.5);
		\draw[->] (n2a.20) -- ++(0,.5);
	\end{tikzpicture}
\end{equation*}
Again by the calculation in Proposition \ref{prop:cascadepositivity}, we will obtain a term of the form

\begin{equation*}
	\begin{tikzpicture}
		\node (n1) at (0,0) [draw,thick,rectangle, minimum width=1cm] {};
		\node (n2) at (1.5,0) [draw,thick,rectangle, minimum width=1cm] {};
		\node (n1a) at (0,1.5) [draw,thick,rectangle, minimum width=1cm] {};
		\node (n2a) at (1.5,1.5) [draw,thick,rectangle, minimum width=1cm] {};
		\draw[] (n1) to node[auto] {$n_i-2$} (n1a);
		\draw[] (n2) to node[auto,swap] {$n_{i+1}-2$} (n2a);
		\draw[] (n1.40) to [bend left=90] node[auto,swap] {$2$} (n2.140);
		\draw[] (n1a.-40) to [bend right=90] node[auto] {$1$} (n2a.-140);
		\draw[<-] (n1a.160) -- ++(0,.5);
		\draw[->] (n1a.150) -- node[right] {\tiny{$\cdots$}} ++(0,.5);
		\draw[->] (n1a.20) -- ++(0,.5);
		\draw[->] (n2a.150) -- node[right] {\tiny{$\cdots$}} ++(0,.5);
		\draw[->] (n2a.20) -- ++(0,.5);
	\end{tikzpicture}
\end{equation*}
Expanding out the top projectors as $q$-symmetrizers, we see that the only terms that survive
are the ones in which the $\down$ is assigned to the left end of the upturned arc. By similar
reasoning, it is easy to see that $H$ acting on a highest weight space will never introduce a 
$\down$. Hence we have proved the following statement:
	$H$ is lower block triangular with 
\begin{equation*}
		H \left( \textrm{span} \{ one \down , k \textrm{ arcs } \} \right) \subseteq \textrm{span} \{ one \down , k \textrm{ arcs } \} \oplus \Hi_{HW}^{(n_1+\cdots+n_L-2k-2)}
\end{equation*}
	and 
\begin{equation*}
		H \left( \Hi_{HW}^{(n_1 + \cdots + n_L - 2k-2) } \right) \subseteq \Hi_{HW}^{(n_1 + \cdots + n_L - 2k-2) } 
\end{equation*}

Let $\varphi^{(n_1+\cdots+n_L - 2k)}$ be the Perron-Frobenius eigenvector for the matrix $t-H|_{\Hi_{HW}^{(n_1+ \cdots + n_L - 2k)}}$ in the dual canonical basis, and let $\varphi^{(n_1+\cdots + n_L -2k-2)}$ be the Perron-Frobenius eigenvector for $t-H|_{\Hi_{HW}^{(n_1+\cdots + n_L - 2k-2)}}$ again in the dual canonical basis, where $t$ has been chosen so that $t-H$ is a non-negative matrix. The Perron-Frobenius theorem guarantees that in the dual canonical basis, all coefficients of $\varphi^{(n_1+\cdots+n_L - 2k)}$ are strictly positive. By the positivity property of $F$ acting on the dual canonical basis, $F\varphi^{(n_1+\cdots+n_L - 2k)}$ has strictly positive coefficients in $F\Hi_{HW}^{(n_1+\cdots+n_L-2k)}$. 
Furthermore, $F\varphi^{(n_1+\cdots + n_L-2k)}$ is an eigenvector of $H|_{F\Hi_{HW}^{(n_1+\cdots+n_L-2k)}}$ since  $H$ commutes with $\U$. And because of the decomposition given by equation (\ref{eqn:FHdecomp}), $\varphi^{(n_1+\cdots+n_L -2k-2)}$ is also a an eigenvector of $H|_{F\Hi_{HW}^{(n_1+\cdots+n_L-2k)}}$. 
By Corollary \ref{prop:posEigen}, we see that $F\varphi^{(n_1+\cdots + n_L)}$ is in fact the maximal eigenvector of $(t-H)|_{F \Hi^{(n_1+\cdots + n_L-2k)}_{HW}}$, since it also has coefficients that are strictly positive, and $H$ is a Hermitian operator. Therefore, 
\begin{equation*}
	t- E_0(H,n_1+ \cdots + n_L -2k) \geq t- E_0(H,n_1 + \cdots + n_L -2 k-2)
\end{equation*}
and the result follows. The same argument works for any $k$. 
\end{proof}

\subsection{On the Maximal Region Satisfying Sufficient Conditions for FOEL}

In the previous section, we have proven from the Perron-Frobenius Theorem that irreducibility and off-diagonal non-positivity of matrix coefficients in the dual canonical basis give sufficient conditions for FOEL. 
Without loss of generality, let us assume for the remainder of this section that $n_i\geq n_{i+1}$. 

\begin{prop}
	The maximal region of coupling coefficients which satisfy these conditions is in fact $J_k^{(i)} \leq 0$ for all $1 \leq k \leq n_{i+1}$
\end{prop}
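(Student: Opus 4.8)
The plan is to prove the two inclusions separately. That the region $\{J_k^{(i)}\le 0 : 1\le k\le n_{i+1}\}$ \emph{satisfies} the conditions is immediate: by Proposition~\ref{prop:cascadepositivity} every $K_{n_i,n_{i+1}}(k)$ has non-negative entries in the dual canonical basis, $K_{n_i,n_{i+1}}(0)=\mathbbm{1}$ contributes only to the diagonal, and irreducibility was established in Proposition~\ref{prop:irreducible} (under the running assumption that each site carries at least one non-zero coupling). Hence, for such couplings, $H=\sum_{i,\,k\ge 1}J_k^{(i)}K_{n_i,n_{i+1}}(k)$ has all off-diagonal matrix elements $\le 0$ in the dual canonical basis, so $S\supseteq\{J\le 0\}$, where $S$ denotes the set of coupling vectors for which these off-diagonal conditions hold.

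For the reverse inclusion I would argue via convex duality. The key observation is that $S$ is a closed convex cone: writing $[Hv]_w$ for the coefficient of $w$ in $Hv$, we have $[Hv]_w=\langle c^{(v,w)},J\rangle$ with $c^{(v,w)}_{(i,k)}=[K_{n_i,n_{i+1}}(k)v]_w$, so $S$ is the intersection of the half-spaces $\{\langle c^{(v,w)},J\rangle\le 0\}$ over all ordered pairs of distinct dual canonical basis vectors $v\neq w$. By Proposition~\ref{prop:cascadepositivity} each normal $c^{(v,w)}$ lies in the non-negative orthant of $\R^{N}$ ($N=\sum_i n_{i+1}$), so $S$ is the negative polar of the cone generated by these normals, and $S=\{J\le 0\}$ if and only if the $c^{(v,w)}$ generate the \emph{whole} non-negative orthant. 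Since the coordinate rays are exactly the extreme rays of that orthant and a finitely generated cone has all its extreme rays among its generators, this holds if and only if for every pair $(i,k_0)$ with $1\le k_0\le n_{i+1}$ there is a pair $v\neq w$ with $c^{(v,w)}$ a positive multiple of the $(i,k_0)$-th coordinate vector, that is
\begin{equation*}
[K_{n_{i'},n_{i'+1}}(k)v]_w = 0\ \text{ for all } (i',k)\neq(i,k_0),\qquad [K_{n_i,n_{i+1}}(k_0)v]_w > 0 .
\end{equation*}

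To produce such a pair I would use the local expansion worked out in the proof of Proposition~\ref{prop:cascadepositivity}. Fix $i$ and put $m=n_i\ge n=n_{i+1}$. That computation shows that when $K_{n_i,n_{i+1}}(k)$ acts on a dual canonical basis vector of the chain whose arc configuration meets the bond $\{i,i+1\}$ in a $j$-fold arc, the result is the diagonal term plus terms creating $\ell$ further local arcs for $1\le\ell\le k$, whose coefficients $c_\ell(m,n,j,k)$ are strictly positive but are given by a formula containing the factor $[n-j-k]!$, hence are defined only when $j+k\le n$. I would therefore take $v$ (on a sufficiently long chain, choosing the total-weight sector appropriately) with exactly $j=n-k_0$ local arcs at $\{i,i+1\}$, and take $w$ obtained from $v$ by inserting $k_0$ more local arcs there. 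For $k<k_0$ the matrix element vanishes because the local expansion of $K_{n_i,n_{i+1}}(k)$ never creates $k_0>k$ new arcs; for $i'\neq i$ it vanishes because $K_{n_{i'},n_{i'+1}}(k)$ leaves the $\{i,i+1\}$ arcs intact and so cannot produce $w$ from $v$; for $k>k_0$ it should vanish because reaching $w$ from $v$ would require $j+k>n$; and the entry for $k=k_0$ equals $c_{k_0}(m,n,n-k_0,k_0)>0$.

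The main obstacle is making the last two points rigorous simultaneously. The singularity of the coefficient formula for $k>k_0$ (namely $[k_0-k]!$ with negative argument) is only suggestive, and the obvious alternative choices do not help: if $v$ or $w$ is taken to be a highest weight vector, or to lie in a single isotypic component, then every $K_{n_i,n_{i+1}}(k)$ acts on it by a scalar and contributes nothing off-diagonal (indeed on the two-site chain $V(m)\otimes V(n)$ the saturated vector forces the whole construction into a low-spin sector and the $k_0$-entry degenerates to $0$ as well). What is really needed is to control $K_{n_i,n_{i+1}}(k)=(U_{n_i,n_{i+1}}^{n_i+k,\,n_{i+1}-k})^{*}\,U_{n_i,n_{i+1}}^{n_i+k,\,n_{i+1}-k}$ by tracking, via the action of $U_{n_i,n_{i+1}}^{n_i+k,\,n_{i+1}-k}$ on the (symmetrized images of) cap diagrams, exactly which arcs of $v$ get absorbed into the Jones--Wenzl projector of size $n_i+k$ after regrouping: one must show that once $k>n_{i+1}-j=k_0$ the vector $v$ has zero $w$-overlap under $K_{n_i,n_{i+1}}(k)$, while for $k=k_0$ the extra arcs carried by $v$ away from the bond keep the matrix element positive on a long enough chain. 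This absorption bookkeeping is the technical heart of the argument.
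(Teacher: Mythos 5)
Your overall strategy is the same as the paper's: the easy inclusion is handled identically, and your choice of witness --- a basis vector $v$ meeting the bond $\{i,i+1\}$ in a $j$-fold arc with $j=n_{i+1}-k_0$, paired with the $w$ obtained by adding $k_0$ arcs there --- is exactly the paper's choice. The paper phrases it without the convex-duality language: for fixed $j$ it reads off from the local expansion
$K_{n_i,n_{i+1}}(k)\cdot(\text{$j$-arc state})=\sum_{\ell=0}^{k}Q^{(i)}_{jk\ell}\cdot(\text{$(j+\ell)$-arc state})$
that the off-diagonal constraints are $\sum_{k\ge\ell}Q^{(i)}_{jk\ell}J^{(i)}_k\le 0$ for $\ell\ge 1$, observes that the matrix $(Q^{(i)}_{jk\ell})_{k,\ell}$ is lower triangular with vanishing rows and columns beyond $n_{i+1}-j$, and concludes that the constraint indexed by $\ell=n_{i+1}-j$ collapses to the single term $Q^{(i)}_{j,n_{i+1}-j,n_{i+1}-j}\,J^{(i)}_{n_{i+1}-j}\le 0$. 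Letting $j$ run from $0$ to $n_{i+1}-1$ isolates every $J^{(i)}_k$. This is precisely your ``coordinate ray'' argument.

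The genuine issue is that you stop short of the two verifications that make the witness work, and you misdiagnose the first one as possibly unprovable. The vanishing of the $k>k_0$ matrix elements is not merely ``suggested'' by a factorial with negative argument: the support condition $k\le n_{i+1}-j$ on $Q^{(i)}_{jk\ell}$ is an honest vanishing theorem already contained in the calculation of Proposition \ref{prop:cascadepositivity}. When $j+k>n_{i+1}$, the second step of that calculation requires a triple point with an edge labelled $n_{i+1}-j-k<0$; such a trivalent vertex does not exist (equivalently, the diagram forces an arc to close back onto a single Jones--Wenzl projector and vanishes by equation (\ref{eqn:antisymm})). So $K_{n_i,n_{i+1}}(k)$ annihilates the $j$-arc component outright for $k>n_{i+1}-j=k_0$; no asymptotic or ``long chain'' argument is needed. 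Likewise the positivity at $k=k_0$ is not a new technical problem: $Q^{(i)}_{j,k_0,k_0}>0$ is manifest from the formula, and re-expanding the two remaining upper projectors into the Temperley--Lieb basis produces only non-negative terms with no loopbacks --- this is exactly the final step of Proposition \ref{prop:cascadepositivity}, applied locally at the bond while the rest of the chain is untouched. Your worry about highest weight vectors is a red herring: neither you nor the paper needs $v$ or $w$ to be highest weight; the argument is carried out entry-by-entry in the dual canonical basis, where suitable pairs $(v,w)$ in a common weight space exist for every $1\le k_0\le n_{i+1}$. With these two points filled in from the already-proved expansion, your argument closes and coincides with the paper's.
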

\begin{proof}

Observe that the matrix coefficients may be computed in the following manner
\begin{equation*}
 \raisebox{-2cm}{
 \begin{tikzpicture}
 [symm/.style={draw,thick, rectangle,minimum width=1cm}]
 \node (n1) at (0,0) [symm] {};
  \node (n2) at (1.5,0) [symm] {};
   \node (n1a) at (0,3) [symm] {};
  \node (n2a) at (1.5,3) [symm] {};
   \node (n1aa) at (0,4) [symm] {};
  \node (n2aa) at (1.5,4) [symm] {};
  \node[draw,thick,rectangle,minimum width=1.5cm] (mid) at (.25,1.5) {};
  \draw (n1.90) to node[auto] {$n_i$} (mid.-140);
  \draw (n1a.-90) to node[auto,swap] {$n_i$} (mid.140);
  \draw (n2) to node[auto,swap] {$n_{i+1}-k$}(n2a);
  \draw (mid.40) to node[auto] {$k$} (n2a.-130);
  \draw (mid.-40) to node[auto] {$k$} (n2.130);
  \draw (n1a.50) to [bend left=90] node[auto] {$j$} (n2a.130);
  \draw (n1a.90) -- node[auto] {$n_i-j$} (n1aa);
  \draw (n2a.90) -- node[auto,swap] {$n_{i+1}-j$} (n2aa);
  \end{tikzpicture}
  }  
   = \sum_{\ell=0}^k Q_{jk\ell}^{(i)}
\raisebox{-1.25cm}{
\begin{tikzpicture}
 [symm/.style={draw,thick, rectangle,minimum width=1cm}]
 \node (n1) at (0,0) [symm] {};
  \node (n2) at (1.5,0) [symm] {};
  \node (n2a) at (1.5,2) [symm] {};
  \node (n1a) at (0,2) [symm] {};
\draw (n1.20) to [bend left=90] node[auto] {$j+\ell$} (n2.160);
\draw (n1a.-20) to [bend right=90] node[auto,swap] {$\ell$} (n2a.-160);
\draw (n1) to node[auto] {$n_i-j-\ell$} (n1a);
\draw (n2) to node[auto,swap] {$n_{i+1}-j-\ell$} (n2a);
\end{tikzpicture}
}
  \end{equation*}
where 
\begin{equation*}
	Q_{jk\ell}^{(i)} = 
	\begin{cases}
		\frac{[n_i]![k]![n_{i+1}-j]! [n_{i+1}-k]! [n_i +k-j-\ell]!}{[n_{i+1}]! [\ell]! [n_i+k]! [k-\ell]! [n_{i+1}-j-k]! [n_i-j-\ell]!} & \textrm{ when } \begin{cases} k \geq \ell \\ n_{i+1} -j \geq k \\ n_i -j \geq \ell \end{cases}  \\
		0 & \textrm{ otherwise } 
	\end{cases}
\end{equation*}
It should be said that the right hand side of the above equation has not been reduced to the dual canonical basis. In order to do that, one must realize the two upper projectors of size $n_i -j$ and $n_{i+1}-j$ as positive sums of elements in the Temperley-Lieb algebra. There are no loopbacks in the evaluation of this expression, and thus all terms will have the same sign as $Q_{jk\ell}^{(i)}\geq 0$. In particular, off-diagonal terms will be given by the $1 \leq \ell$ terms. Thus, sufficient conditions for FOEL are
\begin{equation*}
	\sum_{k=\ell}^{n_{i+1}} Q_{jk\ell}^{(i)} J_k^{(i)} \leq 0 \textrm{ for all } \begin{cases} 0\leq j \leq n_{i+1} \\ 1 \leq \ell \leq n_{i+1} \end{cases}
\end{equation*}

Examining the structure of the coefficients $Q_{jk\ell}^{(i)}$, we see that for a fixed $j$, $Q_{jk\ell}^{(i)}$ is a lower triangular matrix in the indices $k$ and $\ell$, and that the last $j$ rows and columns are $0$. Hence, for fixed $j$, we get the inequality 
\begin{equation*}
	Q_{jk^\prime \ell^\prime }^{(i)} J_{k^\prime}^{(i)} \leq 0 \textrm{ where $k^\prime = n_{i+1}-j$ and $\ell^\prime = n_{i+1}-j$ }
\end{equation*}
Allowing $j$ to range from $0$ to $n-1$ and noting the positivity of the non-zero $Q_{jk\ell}^{(i)}$ gives the result.
\end{proof}

By the above result, we see that negative scalar multiples of the cascade basis form extremal points of a simplex where FOEL is satisfied via the Perron-Frobenius argument. Thus we expect that if we restrict to the space $V(n_1)\otimes V(n_2)$, the cascade basis will satisfy FOEL in a highly degenerate way. Conveniently, we find that the cascade basis is already diagonal in the dual canonical basis.

\begin{equation*}
 \raisebox{-2cm}{
 \begin{tikzpicture}
 [symm/.style={draw,thick, rectangle,minimum width=1cm}]
 \node (n1) at (0,0) [symm] {};
  \node (n2) at (1.5,0) [symm] {};
   \node (n1a) at (0,3) [symm] {};
  \node (n2a) at (1.5,3) [symm] {};
  \node (top) at (.75,4.5) [draw,thick, rectangle,minimum width=2.5cm] {};
  \node[draw,thick,rectangle,minimum width=1.5cm] (mid) at (.25,1.5) {};
  \draw (n1.90) to node[auto] {$n_i$} (mid.-140);
  \draw (n1a.-90) to node[auto,swap] {$n_i$} (mid.140);
  \draw (n2) to node[auto,swap] {$n_{i+1}-k$}(n2a);
  \draw (mid.40) to node[auto] {$k$} (n2a.-130);
  \draw (mid.-40) to node[auto] {$k$} (n2.130);
  \draw (n1a.50) to [bend left=90] node[auto] {$j$} (n2a.130);
  \draw (n1a.90) -- node[auto] {$n_i-j$} (top.-170);
  \draw (n2a.90) -- node[auto,swap] {$n_{i+1}-j$} (top.-10);
  \end{tikzpicture}
  }  = \lambda_j
  \raisebox{-.75cm}{
 \begin{tikzpicture}
 [symm/.style={draw,thick, rectangle,minimum width=1cm}]
 \node (n1) at (0,0) [symm] {};
  \node (n2) at (1.5,0) [symm] {};
  \node (top) at (.75,1.5) [draw,thick, rectangle,minimum width=2.5cm] {};
  \draw (n1.90) -- node[auto] {$n_i-j$} (top.-170);
  \draw (n2.90) -- node[auto,swap] {$n_{i+1}-j$} (top.-10);
  \draw (n1.50) to [bend left=90] node[auto] {$j$} (n2.130);
  \end{tikzpicture}
  }
\end{equation*}
where 
\begin{equation*}
	\lambda_j = \begin{cases} \frac{[n_1+k-j]![n_1]![n_2-j]![n_2-k]!}{[n_1-j]![n_1+k]![n_2-j-k]![n_2]!} & \textrm{ whenever } n_2 -j \geq k \\ 0 & \textrm{ otherwise } \end{cases}
\end{equation*}
and we have for convenience indexed the eigenvalues according to the total spin deviate, $j$, when the total spin is equal to $\frac{n_1+n_2-2j}{2}$.
Thus $\lambda_{j+1} \leq \lambda_{j}$, and for $k>1$, $K_{m,n}(k)$ has a $k$-fold degeneracy at $0$. 
\subsection*{Acknowledgments}

The work reported on in this paper was supported in part by the National Science
Foundation under Grants DMS-0757581 and DMS-1009502 (BN and SN) and 
DMS-0706927 (SS). BN gratefully acknowledges the hospitality offered by the 
Institut Mittag-Leffler (Djursholm, Sweden), where part of this work was carried out,
and support through an ESI Senior Research Fellowship from the
Erwin Schr\"odinger International Institute for Mathematical Physics, Vienna,
during the final stages of the writing this article.

\bibliographystyle{hamsplain}
\bibliography{foel}

\end{document}